\newcommand{\beq}{\begin{eqnarray*}}
\newcommand{\eeq}{\end{eqnarray*}}
\renewcommand{\theequation}{\thesection.\arabic{equation}}
\def\eqnarray{%
\stepcounter{equation}%
\let\@currentlabel=\theequation
\global\@eqnswtrue
\global\@eqcnt\z@
\tabskip\@centering
\let\\=\@eqncr
$$\halign to \displaywidth\bgroup\@eqnsel\hskip\@centering
$\displaystyle\tabskip\z@{##}$&\global\@eqcnt\@ne
\hfil$\displaystyle{{}##{}}$\hfil
&\global\@eqcnt\tw@$\displaystyle\tabskip\z@{##}$\hfil
\tabskip\@centering&\llap{##}\tabskip\z@\cr}
\newtheorem{theorem}{Theorem}[section]
\newtheorem{lemma}[theorem]{Lemma}
\newtheorem{proposition}[theorem]{Proposition}
\newtheorem{remark}{Remark}[section]
\newsavebox{\toy}
\savebox{\toy}{\framebox[0.65em]{\rule{0cm}{1ex}}}
\newcommand{\QED}{\usebox{\toy}}
\def\nlni{\par\ifvmode\removelastskip\fi\vskip\baselineskip\noindent}
\newenvironment{proof}{\nlni\begingroup\it Proof.\rm}{
\endgroup\vskip\baselineskip}
\begin{document}
\setlength{\baselineskip}{15pt}
\title{
Poisson statistics
for 1d Schr\"odinger operators 
with random decaying potentials
}
\author{
Shinichi Kotani
\thanks{
Department of Mathematics, 
Osaka University, 
Machikaneyamachou 1-1, 
Toyonaka, 
Osaka, 560-0043, Japan.
e-mail : skotani@outlook.com}
\and
Fumihiko Nakano
\thanks{
Department of Mathematics,
Gakushuin University,
1-5-1, Mejiro, Toshima-ku, Tokyo, 171-8588, Japan.
e-mail : 
fumihiko@math.gakushuin.ac.jp}
}
\maketitle
\begin{abstract}
We consider 
the 1d Schr\"odinger operators 
with random decaying potentials 
in the sub-critical case 
where the spectrum is pure point. 
We show that 
the point process composed of the rescaled eigenvalues in the bulk, together with those  zero points of the corresponding eigenfunctions, converges to a Poisson process.
\end{abstract}

Mathematics Subject Classification 
(2010): 
60F05,
60H25

\section{Introduction}
The 1d Schr\"odinger operators 
with random decaying potentials 
are known to have rich spectral properties depending on the decay order of the potentials (e.g., \cite{KU, KLS}).
Recently, 
the level statistics problem of this operators are studied and turned out to be related to the 
$\beta$-ensembles which appear in the random matrix theory\cite{KS, KVV, KN, N2}.
In this paper
we consider the following Hamiltonian. 
\beq
H &:=&
-\frac {d^2}{dt^2} + a(t) F(X_t) 
\quad
\mbox{ on } 
L^2 ({\bf R})
\eeq
where the function 
$a \in C^{\infty}({\bf R})$ 
is a decay factor satisfying 
$a(-t) = a(t)$, 
being non-increasing for 
$t > 0$,
and 
\[
a(t) = t^{- \alpha}(1 + o(1)), 
\quad
a'(t) = O(t^{-\alpha-1}), 
\quad
t \to \infty, 
\quad
\alpha > 0. 
\]
The assumption on 
$a'$ 
is technical but we need it to estimate some error terms.
$F(X_t)$ 
is a random factor where 
$F \in C^{\infty}(M)$, 
$M$ 
is the d-dimensional torus, and 
\[
\langle F \rangle
:=
\int_M F(x) dx = 0.
\]
$\{ X_t \}_{t \in {\bf R}}$
is the Brownian motion on 
$M$.
Since 
the potential 
$a(t) F(X_t)$ 
is compact w.r.t. the free Laplacian 
$-d^2 / d t^2$, 
the essential spectrum of 
$H$ 
is equal to 
$\sigma_{ess} (H) = [0, \infty)$ 
which is 
\cite{KU}
(1) 
$\alpha> 1/2$ : 
absolutely continuous, 
(2)
$\alpha < 1/2$ : 
pure point with (sub)exponentially decaying eigenfunctions, 
and 
(3)
$\alpha = 1/2$ : 
there exists a non-random number 
$E_c \ge 0$ 
such that 
the spectrum is pure point on 
$[0, E_c]$ 
and 
singular continuous on 
$[E_c, \infty)$. 

The purpose 
of this paper is to study the local fluctuation of the eigenvalues in the positive energy axis. 
In order for that, 
let 
$H_L := H |_{[0, L]}$ 
be the restriction of 
$H$ 
on the interval 
$[0,L]$
with Dirichlet boundary condition, 
and let 
$\{ E_j (L) \}_{j \ge j_0}$ 
($0 < E_{j_0} (L) < E_{j_0+1} (L) < \cdots$)
be the set of positive eigenvalues of 
$H_L$. 
Take 
the reference energy 
$E_0 > 0$ 
arbitrary, and consider the point process
\beq
\xi_L := 
\sum_{j \ge j_0} 
\delta_{ L ( \sqrt{E_j} - \sqrt{E_0} )}
\eeq
where we take the square root of each eigenvalues which corresponds to the unfolding with respect to the integrated density of states 
$N(E) = \pi^{-1} \sqrt{E}$. 
For a Borel measure 
$\mu$
on 
${\bf R}^d$, 
we denote by 
$Poisson (\mu)$ 
the Poisson process on 
${\bf R}^d$ 
with intensity measure 
$\mu$.
Similarly, 
for a constant 
$c>0$, 
we denote by 
$Poisson (c)$ 
the Poisson distribution with parameter 
$c$. 
The first theorem 
of this paper is 
\begin{theorem}
\label{Poisson}
Let 
$\alpha < 1/2$. 
Then 
$\xi_L$ 
converges in distribution to the Poisson process of intensity 
$d \lambda / \pi$ 
\footnote{
We consider 
the vague topology on the space of point measures on 
${\bf R}$. 
Hence 
$\xi_L \stackrel{d}{\to} \xi$ 
is equivalent to 
$\lim_{L\to \infty}
{\bf E}[ e^{- \xi_L(f)} ] = {\bf E}[ e^{- \xi(f)} ]$
for any 
$f \in C_c^+ ({\bf R})$.
} 
\beq
\xi_L \stackrel{d}{\to} 
\mbox{Poisson}
\left( \frac {d \lambda}{\pi} \right), 
\quad
L \to \infty.
\eeq
\end{theorem}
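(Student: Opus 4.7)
My plan is to encode the Dirichlet eigenvalues through the Pr\"ufer phase. Introducing $\theta(t,E)$ by writing the solution $u(t,E)$ of $-u''+a(t)F(X_t)u=Eu$ with $u(0)=0$ in the form $u(t)=R(t)\sin\theta(t,E)/\sqrt{E}$, $u'(t)=R(t)\cos\theta(t,E)$, one obtains the Pr\"ufer equation
\[
\frac{d\theta}{dt}(t,E)=\sqrt{E}-\frac{a(t)F(X_t)}{\sqrt{E}}\sin^2\theta(t,E),\qquad \theta(0,E)=0.
\]
Since $E$ is a Dirichlet eigenvalue of $H_L$ iff $\theta(L,E)\in\pi{\bf Z}$, the number of points of $\xi_L$ in a bounded interval $[c,d]$ equals the number of $\pi$-crossings of the map $s\mapsto\theta(L,(\sqrt{E_0}+s/L)^2)$ as $s$ ranges over $[c,d]$. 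Hence proving the theorem reduces to controlling the joint law, at finitely many $s\in\supp f$, of the rescaled phase process
\[
\eta_L(s):=\theta\bigl(L,(\sqrt{E_0}+s/L)^2\bigr).
\]

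The first step I would carry out is an It\^o analysis of $\eta_L(s)-\sqrt{E_s}\,L$. Because $\langle F\rangle=0$, one can solve the Poisson equation $\Delta_M G=F$ on the torus $M$; applying It\^o's formula to $G(X_t)\sin^2\theta$ yields a stochastic-integral representation of the fluctuation $\int_0^L a(t)F(X_t)\sin^2\theta(t,E)\,dt$ as a $dX_t$-martingale with integrand of order $a(t)$, plus It\^o correction and boundary terms of order $a(t)^2$ and $a'(t)$. The hypothesis $a'=O(t^{-\alpha-1})$ is exactly what makes the $a'$-terms negligible. The quadratic variation of the resulting martingale is asymptotic to $c(E_s)\int_0^L a(t)^2\,dt\asymp L^{1-2\alpha}$, which \emph{diverges} precisely because $\alpha<1/2$. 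A martingale CLT then gives that $\eta_L(s)\bmod\pi$ becomes uniform on $[0,\pi)$, while the deterministic gap $\sqrt{E_{s_j}}L-\sqrt{E_{s_i}}L$ converges to $s_j-s_i$ via the chosen unfolding.

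Next I would prove asymptotic independence of $\{\eta_L(s_i)\bmod\pi\}$ at distinct $s_1<\cdots<s_n$. The cross quadratic variation between the martingale parts at energies $E_{s_i}\ne E_{s_j}$ carries a factor $\sin^2\theta(t,E_{s_i})\sin^2\theta(t,E_{s_j})$ whose non-constant Fourier modes oscillate at frequencies $2(\sqrt{E_{s_i}}\pm\sqrt{E_{s_j}})$; Riemann--Lebesgue / integration-by-parts arguments against the dominant ballistic motion $\theta\sim\sqrt{E}\,t$ show that the off-diagonal contributions stay $O(1)$, while the diagonal contributions grow like $L^{1-2\alpha}$. Hence in the limit the Gaussian fluctuations decouple and $(\eta_L(s_1),\dots,\eta_L(s_n))\bmod\pi$ becomes uniform on $[0,\pi)^n$.

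The main obstacle will be this joint decorrelation together with the uniform-in-partition estimates needed to pass from finite-dimensional phase statistics back to the Laplace functional. Once the joint uniformity and the intensity $s_j-s_i$ are available, partitioning $\supp f$ into fine cells shows each cell contributes at most one point asymptotically, and a standard binomial-to-Poisson (or Stein--Chen) argument gives
\[
\lim_{L\to\infty}{\bf E}\bigl[e^{-\xi_L(f)}\bigr]=\exp\Bigl(-\int_{\bf R}\bigl(1-e^{-f(\lambda)}\bigr)\frac{d\lambda}{\pi}\Bigr),
\]
the factor $1/\pi$ arising from the density-of-states derivative $N'(E_0)=1/(2\pi\sqrt{E_0})$ combined with the Jacobian $dE/d\lambda=2\sqrt{E_0}/L$ of the unfolding. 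This is the Laplace functional of $\mathrm{Poisson}(d\lambda/\pi)$, completing the proof.
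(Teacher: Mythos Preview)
Your proposal has a genuine gap at the heart of the argument: the martingale-CLT framework you outline does not produce Poisson statistics in the regime $\alpha<1/2$, and your decorrelation step fails for the energies actually in play.

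First, the decorrelation argument. The energies you compare are $E_{s_i}=(\sqrt{E_0}+s_i/L)^2$ and $E_{s_j}=(\sqrt{E_0}+s_j/L)^2$, so $\sqrt{E_{s_i}}-\sqrt{E_{s_j}}=(s_i-s_j)/L$. Over $[0,L]$ the phase $e^{2i(\sqrt{E_{s_i}}-\sqrt{E_{s_j}})t}$ completes only $O(1)$ oscillations, and the Riemann--Lebesgue cancellation you invoke for the cross quadratic variation simply does not occur: the off-diagonal term is of the same order $L^{1-2\alpha}$ as the diagonal one, not $O(1)$. The phases at nearby energies are \emph{strongly} coupled, not asymptotically independent.

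Second, and more fundamentally, a Gaussian/CLT limit for the fluctuations is the wrong picture here. What the paper establishes is that the \emph{relative} phase $\Theta_{nt}(\lambda):=\theta_{nt}(\sqrt{E_0}+\lambda/n)-\theta_{nt}(\sqrt{E_0})$ satisfies ${\bf E}[\Theta_{nt}(\lambda)]\to\lambda t$ and, for $\alpha<1/2$, converges in law to $\pi$ times a Poisson jump process in $(t,\lambda)$ --- a genuinely non-Gaussian object. The mechanism is that the diffusion coefficient in the SDE for $\Theta$ carries a diverging prefactor $n^{1/2-\alpha}$ multiplying $(e^{2i\Theta}-1)$, which forces $\Theta$ to spend most of its time near $\pi{\bf Z}$; the rare transitions between levels are metastable escapes with asymptotically exponential waiting times. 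This is extracted via the Riccati substitution $R=\log\tan(\Theta/2)$, a comparison with stationary diffusions, and explicit explosion-time asymptotics in the style of Allez--Dumaz. None of this is visible from a CLT. Under your scheme, if the martingale parts at distinct $s_i$ were truly independent with variance $\asymp L^{1-2\alpha}$, then $\eta_L(s_j)-\eta_L(s_i)$ would have diverging variance, and the number of $\pi$-crossings on $[s_i,s_j]$ could not converge to a Poisson variable of bounded mean.

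Finally, even granting uniformity and independence of the marginals $\eta_L(s_i)\bmod\pi$, your last step does not follow: ``each fine cell contributes at most one point'' requires control of the increments $\eta_L(s+\delta)-\eta_L(s)$, which is information about the joint law of the process, not about finitely many marginals mod $\pi$. The paper obtains this control precisely from the Poisson-jump description of $\Theta$ and the inclusion ${\cal P}_\lambda\subset{\cal P}_{\lambda'}$ of its jump sets.
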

\begin{remark}
When we consider 
two reference energies 
$E_1$, $E_2$, $E_1 \ne E_2$, 
then the corresponding point processes 
$\xi_1$, $\xi_2$ 
jointly converge to the independent Poisson processes of intensity 
$d \lambda / \pi$.
\end{remark}
\begin{remark}
Together 
with results in 
\cite{KN, N2}, 
we have 
\footnote{
In (2), 
$\beta = \beta(E_0) := 8 E_0 / C(E_0)$ 
where 
$C(E) := \langle \nabla g_{\sqrt{E}}, \nabla g_{\sqrt{E}} \rangle$, 
$g_{\sqrt{E}} := (L + 2i \sqrt{E})^{-1}F$.
$\beta (E)$
is equal to the reciprocal of the Lyapunov exponent of 
$H$.
%
}
\beq
&(1)&\; 
\alpha > \frac 12
\Longrightarrow
\xi_L \to \mbox{clock process}
\\
&(2)& \;
\alpha = \frac 12
\Longrightarrow
\xi_L \to 
\mbox{ Sine$_{\beta}$ process }
\\
&(3)& \;
\alpha < \frac 12
\Longrightarrow
\xi_L \to 
\mbox{Poisson process}
\eeq
Such kind 
of results have been known for discrete models : 
\cite{KS} 
proved (1)-(3) above for CMV matrices, 
\cite{ALS}
proved 
``clock behavior"
(similar to (1)) for Jacobi matrices, 
and 
\cite{KVV} 
proved (2) for 1d discrete Schr\"odinger operators. 
Hence 
our result is a continuum analogue of them. 
The  
model-independent nature of those results is due to the fact that the Pr\"ufer phases of those models obey the similar equations and thus have similar behavior. 
The 
global fluctuation of eigenvalues is studied in \cite{N4} which also shows different behavior in above three cases. 
%
\end{remark}
\begin{remark}
Let 
$H'_L := 
( - \frac {d^2}{dt^2} + L^{- \alpha} F(X_t) )
|_{[0,L]}$
be the Hamiltonian with decaying coupling constant under the Dirichlet boundary condition.
The method of proof of 
Theorem \ref{Poisson} also works for $H'_L$ 
so that 
together with results in 
\cite{N2} 
we have
\footnote{
In (2), 
$\tau = \tau(E_0) = C(E_0) / (2E_0)= 4 / \beta(E_0) $
\cite{N3}.
}
\beq
&(1)&\; 
\alpha > \frac 12
\Longrightarrow
\xi_L \to \mbox{clock process}
\\
&(2)& \;
\alpha = \frac 12
\Longrightarrow
\xi_L \to 
\mbox{
Sch$_{\tau}$ 
process
}
\\
&(3)& \;
\alpha < \frac12
\Longrightarrow
\xi_L
\to 
\mbox{Poisson}
\left( \frac {d \lambda}{\pi} \right)
\eeq
\cite{KVV} 
proved (2) for 1d discrete Schr\"odinger operators. 
\end{remark}
\begin{remark}
It would be interesting 
to study the behavior of eigenvalues near the bottom edge of the essential spectrum
(i.e., 
to study 
$\xi_L$ 
for 
$E_0 = 0$), 
for which the technique in this paper does not apply.
For 
recent development in this respect, we refer to \cite{AD2}. 
\end{remark}
To see the outline of proof, 
we introduce the Pr\"ufer variable as follows.
Let 
$x_t$ 
be the solution to the Schr\"odinger equation 
$H x_t = \kappa^2 x_t$, $x_0 = 0$, 
which is represented in the following form.
\beq
\left( \begin{array}{c}
x_t \\ x'_t / \kappa
\end{array} \right)
=
r_t(\kappa) 
\left( \begin{array}{c}
\sin \theta_t(\kappa) 
\\
\cos \theta_t(\kappa)
\end{array} \right), 
\quad
\theta_0(\kappa) = 0.
\eeq
Set 
\beq
\Theta_L(\lambda)
&:=&
\theta_L
\left(
\sqrt{E_0} + \frac {\lambda}{L}
\right)
-
\theta_L
\left(
\sqrt{E_0}
\right)
\\
\phi_L(E_0)
&:=&
\{ 
\theta_L (\sqrt{E_0})
\}_{\pi}, 
\quad
\mbox{ where }
\quad
\{ x \}_{\pi}
:=
x  - 
\left\lfloor
\frac {x}{\pi}
\right\rfloor
\pi.
\eeq
Since, 
by Sturm's oscillation theorem, 
$E = E_j(L)$
if and only if 
$\theta_L (\sqrt{E}) = j \pi$, 
the Laplace transform of 
$\xi_L$ 
has the following representation.
\begin{eqnarray}
{\bf E}\left[
e^{- \xi_L(f)}
\right]
&=&
{\bf E}\left[
\exp 
\left(
- \sum_k 
f 
\left(
\Theta_L^{-1}
(k \pi - \phi_L(E_0))
\right)
\right)
\right]
\label{Laplace}
\\
\mbox{where}
\quad
\xi_L(f)
&=&
\int_{\bf R} f(x) \xi_L(dx), 
\quad
f \in C_c^+({\bf R}).
\nonumber
\end{eqnarray}
Thus 
our aim is to study the joint limit of 
$( \Theta_L(\lambda), \phi_L (E_0) )$.
Here 
we replace 
$L$ 
by 
$n$ 
and consider the family 
$H_{nt}$ ($t \in [0,1]$)
of Hamiltonians. 
We will 
show that the following limits exist. 
\beq
\widehat{\Theta}_t(\lambda)
&\stackrel{d}{=}&
\lim_{n \to \infty} \Theta_{nt}(\lambda), 
\quad
\widehat{\phi}_t
\stackrel{d}{=}
\lim_{n \to \infty}
\phi_{nt}(E_0).
\eeq
In the first equation, 
both sides are regarded as the non-decreasing function(with the weak topology as a measure)-valued processes in $t$.
Then 
we have the following theorem. 
\begin{theorem}
\label{main}
\mbox{}\\
(1)
For any 
$t \in (0,1]$, 
$\widehat{\phi}_t$ 
is uniformly distributed on 
$[0, \pi)$. \\
(2)
\[
\widehat{\Theta}_t(\lambda)
=
\pi
\int_{[0,t] \times [0, \lambda]}
\widehat{P}(ds d\lambda')
\]
where 
$\widehat{P}
=
Poisson
\left(
\pi^{-1}
1_{[0,1]}(s)
ds d \lambda'
\right)$
is the Poisson process on 
${\bf R}^2$ 
whose intensity measure is equal to 
$\pi^{-1}
1_{[0,1]}(s)
ds d \lambda'$.
\end{theorem}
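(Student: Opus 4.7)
The plan is to carry out a stochastic analysis of the Pr\"ufer angle $\theta_t(\kappa)$ together with its $\kappa$-derivative $\psi_t(\kappa)=\partial_\kappa\theta_t$. From $Hx=\kappa^2 x$ one obtains
\[
\theta'_t=\kappa-\frac{a(t)F(X_t)}{\kappa}\sin^2\theta_t,
\]
with a companion linear ODE for $\psi_t$. Since $\langle F\rangle=0$, the Poisson equation for $F$ on $M$ has a smooth solution $g$; applying It\^o's formula to $g(X_t)$ and integrating by parts in $t$ (the hypothesis $a'(t)=O(t^{-\alpha-1})$ controls the boundary/derivative terms) rewrites each noise integral $\int_0^L a(s)F(X_s)h(\theta_s,\psi_s)\,ds$ as a martingale plus negligible errors.

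For (1), I would write $\theta_L(\sqrt{E_0})=\sqrt{E_0}\,L+M_L+o(1)$, where $M_L$ is a martingale with quadratic variation $\langle M\rangle_L\asymp\int_0^L a(s)^2\,ds\asymp L^{1-2\alpha}\to\infty$ (using $\alpha<1/2$). A martingale CLT makes $M_{nt}$ asymptotically Gaussian with diverging variance for every fixed $t\in(0,1]$, so for each $m\in{\bf Z}\setminus\{0\}$,
\[
{\bf E}\!\left[e^{2im\theta_{nt}(\sqrt{E_0})}\right]=e^{2im\sqrt{E_0}\,nt}\,{\bf E}\!\left[e^{2imM_{nt}}\right]\to 0.
\]
Weyl's equidistribution criterion then yields $\widehat\phi_t$ uniform on $[0,\pi)$.

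For (2), the monotone function $\lambda\mapsto\Theta_{nt}(\lambda)$ crosses $k\pi$ exactly when $\sqrt{E_0}+\lambda/n$ equals some $\sqrt{E_j}$ with $E_j$ an eigenvalue of $H_{nt}$ (Sturm oscillation), so $d_\lambda\Theta_{nt}$ has total mass $\pi$ times the eigenvalue count in the rescaled window. I would show that $d_\lambda\Theta_{nt}$ concentrates on isolated atoms of size $\pi$, and then identify the bivariate point process of $(s,\lambda)=$(birth time, location) of these jumps as Poisson. The Laplace functional is computed by slicing $s\in[0,1]$ into small windows $[s_k,s_{k+1}]$: in each window the change of $\Theta$ restricted to a bounded $\lambda$-interval $I$ is asymptotically Bernoulli with parameter $\pi^{-1}(s_{k+1}-s_k)|I|$, and the strong mixing of $\theta_t$ driven by the divergent noise $\int a(s)^2 ds\to\infty$ makes the contributions from distinct windows asymptotically independent, yielding the desired exponential
\[
\exp\Bigl(-\int\bigl(1-e^{-f(s,\lambda')}\bigr)\,\pi^{-1}\,ds\,d\lambda'\Bigr).
\]

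The principal obstacle is the rigorous isolation of the $\pi$-sized spikes of $\psi_{nt}(\kappa)$ and the proof of asymptotic independence of their locations across disjoint time windows. In the subcritical regime $\log r_t\sim t^{1-2\alpha}$ grows sub-exponentially yet fluctuates enough that the Dirichlet resonances are well separated and approximately uniformly spread in the rescaled variable; making this precise requires sharp transfer-matrix estimates and the quantitative use of the mixing of the angular dynamics, which is the technical heart of the argument.
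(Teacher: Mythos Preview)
Your outline for (1) is close in spirit to what the paper invokes from \cite{KN}, and the Weyl-criterion idea is the right one; the only caveat is that $\tilde\theta_t$ is not itself a martingale, so one must first pass through the ``It\^o formula'' (\ref{Itoformula}) to split off the martingale part from $\int a(s)F(X_s)h(\theta_s)\,ds$ before invoking the divergent quadratic variation.

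For (2), however, there is a genuine gap, and you yourself name it in your last paragraph. The statement that on each short time window $[s_k,s_{k+1}]$ the increment of $\lfloor\Theta/\pi\rfloor$ is ``asymptotically Bernoulli with parameter $\pi^{-1}(s_{k+1}-s_k)|I|$'' is precisely the theorem to be proved, not a step that follows from the general mixing of the angular dynamics. The paper's mechanism is quite specific and different from what you sketch: one does \emph{not} study $\psi_t=\partial_\kappa\theta_t$ or transfer matrices, but instead works with the difference $\Delta\theta=\theta(\kappa_c)-\theta(\kappa_d)$ and applies the Riccati-type substitution $R=\log\tan(\Delta\theta/2)$. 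This turns the question of when $\Theta$ crosses a multiple of $\pi$ into the question of when a one-dimensional diffusion explodes to $+\infty$. After the time change $t=s^{\gamma}$ with $\gamma=(1-2\alpha)^{-1}$, the resulting SDE has a drift of the form $\lambda\gamma s^{\gamma-1}\cosh R+\tfrac{C_n^2}{2}\tanh R$ with diffusion $C_n\,dW$, $C_n\asymp n^{1/2\gamma}$, and one can sandwich it between two \emph{stationary} diffusions whose explosion times are computed explicitly and shown to converge to an exponential law with rate $\lambda/\pi$ (the Allez--Dumaz computation). This is what produces the Bernoulli/Poisson behavior on each window; without this transformation there is no evident handle on the hitting times of $\pi{\bf Z}$.

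The independence step is also more delicate than ``strong mixing.'' The paper realizes the three limits $P_\lambda,P_{\lambda'},P_{\lambda,\lambda'}$ jointly as $({\cal F}_t)$-Poisson processes and then proves the set-theoretic relations ${\cal P}_\lambda\subset{\cal P}_{\lambda'}$ and ${\cal P}_\lambda\cap{\cal P}_{\lambda,\lambda'}=\emptyset$ for their atoms, using a quantitative estimate (Lemma \ref{order}) on how rarely the fractional parts $\{\Theta(\lambda)\}_\pi$ and $\{\Theta(\lambda')\}_\pi$ are in the wrong order. Independence of marginals then follows from these two facts, not from an ergodic or mixing argument. Your proposal does not supply either ingredient, so as it stands it is a description of the desired conclusion rather than a route to it.
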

\begin{remark}
The statement 
in Theorem \ref{main}(2) 
is conjectured in \cite{KS} for CMV matrices.
On the other hand, 
for the Anderson model 
$H = - \triangle + V_{\omega}(x)$ 
on 
$l^2 ({\bf Z}^d)$,
the following facts are known \cite{KiN, N1}.
Let 
$H_L := H |_{ \{ 1, \cdots, L \}^d }$ 
be the restriction of 
$H$ 
on the box of size 
$L$, 
with 
$\{ E_j (L) \}_{j \ge 1}$ 
being its eigenvalues. 
Let 
$x_j (L) \in {\bf R}^d$ 
be the localization center corresponding to  
$E_j(L)$.
If 
$E_0$ 
lies in the localized region, 
we have 
\begin{equation}
\sum_{j}
\delta_{
\left(
L^d (E_j(L) - E_0), \, L^{-1} x_j (L)
\right)
}
\stackrel{d}{\to}
Poisson 
\left(
n(E_0) 1_{[0,1]^d}(x) dE \times dx
\right)
\label{LocCtr}
\end{equation}
where 
$n(E_0):= 
\frac {d}{dE}N(E)|_{E=E_0}$ 
is the density of states at 
$E=E_0$. 
%

%
The jump points 
of the function 
$t \mapsto 
\left\lfloor
\Theta_{nt}(\lambda)/\pi
\right\rfloor$
are (modulo some errors) related to the zero points of the eigenfunction such that the corresponding eigenvalue is less than 
$\lambda$.
Since 
the eigenfunction decays sub-exponentially 
and since 
the set of jump points of the function 
$t \mapsto 
\hat{\Theta}_{t}(\lambda)/\pi
$ 
has the monotonicity in 
$\lambda$ 
to be described in eq.(\ref{independence}), 
those jump points are close to the localization center of each eigenfunctions. 
Hence 
we believe that the statement like eq.(\ref{LocCtr}) 
holds also for our case and that Theorem \ref{main} (2) is related to this speculation.
\end{remark}
We shall 
explain the idea of proof.
The Pfr\"ufer phase 
satisfies the integral equation 
(\ref{integralequation}) 
by which we compute the equation satisfied by 
$\Theta_{nt}(\lambda)$.
By 
using ``Ito's formula"
(\ref{Itoformula}) 
we can show that, up to error terms, 
\beq
d \Theta_{nt}(\lambda)
&\sim&
\lambda dt
+
n^{\frac 12 - \alpha}
Re \left[
\left(
e^{2i \Theta_{nt}(\lambda)}-1
\right)
t^{- \alpha}
d Z_t
\right]
\eeq
where 
$Z_t = X_t + i Y_t$
is the complex Brownian motion. 
At this point, 
we have a general picture : 
(1) 
$\alpha > 1/2$ : 
second term vanishes which implies the convergence to the clock process, 
(2)
$\alpha = 1/2$ : 
$\Theta_{nt}(\lambda)$ 
converges to the solution to a SDE, and 
(3)
$\alpha < 1/2$ : 
the diffusion term will be dominant so that 
$\Theta_{n t}(\lambda)$ 
should be in a vicinity of 
$\pi {\bf Z}$ 
in order to have 
$(e^{2 i \Theta_{n t}(\lambda)} - 1)$ 
small. 
Here we note that 
$\Theta_{nt}(\lambda) > 0$ 
for 
$\lambda > 0$ 
and 
${\bf E}[ \Theta_{nt}(\lambda) ] 
= 
\lambda t + o(1)$
(Proposition \ref{Thetalimit}).
By the change of variables 
\[
t = s^{\gamma}, 
\quad
\gamma := \frac {1}{1-2\alpha}, 
\quad
s \in [0,1], 
\]
we have 
\beq
d \Theta_{n s^{\gamma}} (\lambda)
\sim
\lambda \gamma s^{\gamma-1} d s 
+
n^{\frac {1}{2 \gamma}} 
Re \left[
(e^{2i \Theta_{n s^{\gamma}}(\lambda)}-1)
d \tilde{Z}_s
\right].
\eeq
Here 
we recall the definition of the 
Sine$_{\beta}$-process \cite{VV}.
Let 
$\alpha_t(\lambda)$ 
be the solution to the following SDE. 
\begin{eqnarray}
d \alpha_t (\lambda)
&=&
\lambda \cdot \frac {\beta}{4} 
e^{- \frac {\beta}{4} t}
dt
+
Re \left[
\left(
e^{i \alpha_t(\lambda)}-1
\right)
d Z_t
\right]
\label{Sine}
\\
\alpha_0(\lambda) &=& 0.
\nonumber
\end{eqnarray}
Then 
the function 
$t \mapsto 
\left\lfloor
\alpha_t(\lambda) / 2 \pi
\right\rfloor$
is non-decreasing and the limit 
$\alpha_{\infty}(\lambda)
:=
\lim_{t \to \infty} \alpha_t(\lambda)$
satisfies 
$\alpha_{\infty}(\lambda) \in 2 \pi {\bf Z}$, a.s.
Then 
Sine$_{\beta}$-process  
on the interval
$[\lambda_1, \lambda_2]$
is defined by 
\[
Sine_{\beta} [ \lambda_1,\lambda_2 ]
\stackrel{d}{=}
\frac {
\alpha_{\infty}(\lambda_2)- \alpha_{\infty}(\lambda_1)
}
{ 2 \pi }.
\]
Allez-Dumaz \cite{AD} 
showed that 
Sine$_{\beta} 
\stackrel{d}{\to} 
Poisson (d \lambda / 2 \pi)$
as 
$\beta \to 0$. 
This fact 
can easily be generalized to other processes where the drift term in the corresponding SDE
(\ref{Sine}) 
is replaced by functions 
$f$
with mild conditions\cite{N3}.
Moreover, 
by a scaling 
$t \mapsto \frac {\beta}{4} t$, 
eq.(\ref{Sine}) 
becomes 
\begin{eqnarray*}
d \alpha_t (\lambda)
&=&
\lambda  
e^{- t}dt
+
\frac {2}{\sqrt{\beta}}
Re \left[
\left(
e^{i \alpha_t(\lambda)}-1
\right)
d Z_t
\right]
\\
\alpha_0(\lambda) &=& 0
\end{eqnarray*}
so that, by setting 
$\beta = n^{-\frac {1}{\gamma}}$,  
we 
can use the idea of \cite{AD} :
to study 
the hitting time of 
$\Theta_{nt}(\lambda)$ 
to the set 
$\pi {\bf Z}$, 
we consider 
\beq
R(nt)
:=
\log \tan \frac {\Theta_{nt}(\lambda)}{2}, 
\eeq
SDE of which has a diffusion term with constant coefficient so that we may use comparison argument. 
In fact, 
modulo error terms, 
we have (Propositions \ref{Ricatti}, \ref{comparison})
\begin{eqnarray}
d R(n t^{\gamma})
&\sim&
\left(
\lambda \gamma t^{\gamma-1}
\cosh R(n t^{\gamma})
+
\frac {C_n^2}{2} \tanh R(n t^{\gamma})
\right) dt
+
C_n dM_t 
\quad\quad
\label{Ricattisim}
\\
&\mbox{where}&\quad
C_n = C(E_0, F) n^{\frac {1}{2 \gamma}}, 
\quad
d\langle M \rangle_t
=
(1 + o(1)) \, dt, 
\nonumber
\end{eqnarray}
and
$C(E_0, F)$ 
is a positive constant depending on 
$E_0$, $F$. 
Here 
we use assumptions on 
$a$, $a'$ 
to estimate error terms. 
By a time-change, 
we can suppose that 
$M_t$ 
is a Brownian motion.
We divide the interval 
$[0,1]$ 
into small random ones  
$I_k = 
\left[
\tau_k/N, \tau_{k+1}/N 
\right]$
and consider the stationary processes 
$S_{\pm}$ 
which are the solution to the following SDE's on each 
$I_k$'s. 
\beq
d S_{+}(t)
&\sim&
\left(
\lambda \gamma
\left( \frac {\tau_{k+1}}{N} \right)^{\gamma-1}
\cosh S_{+}(t)
+
\frac {C_n^2}{2} \tanh S_{+}(t)
\right)dt
+
C_n d M_t
\\
d S_{-}(t)
&\sim&
\left(
\lambda \gamma
\left( \frac {\tau_k}{N} \right)^{\gamma-1}
\cosh S_{-}(t)
+
\frac {C_n^2}{2} \tanh S_{-}(t)
\right)dt
+
C_n d M_t.
\eeq
On each 
$I_k$, 
we can bound 
$R(n t^{\gamma})$ 
by 
$S_{\pm}$ 
from above and below : 
\beq
S_{-}(t) \le R(n t^{\gamma}) \le S_{+}(t).
\eeq
We can 
explicitly compute the explosion times of 
$S_{\pm}$ 
which converge to 
$Exp \left(\tilde{\lambda}/\pi\right)$
as 
$n \to \infty$,  
where 
$\tilde{\lambda}
:=
\lambda \gamma
\left( \tau_{k+1}/N \right)^{\gamma-1}$
(Proposition \ref{exponential}).
By an argument 
like the convergence of Riemannian sums to the integral, we can show that 
the jump points of the function
$s \mapsto \left \lfloor
\Theta_{n s^{\gamma}}(\lambda) / \pi 
\right \rfloor$
converge to 
$Poisson \left(
\pi^{-1}
\gamma s^{\gamma-1} 1_{[0,1]}(s) ds 
\right)$(Proposition \ref{marginal}).
Hence 
for an interval 
$J \subset {\bf R}$, 
$\xi_L(J)$ 
converges to the Poisson distribution with parameter 
$\pi^{-1} |J|$. 
It then 
suffies to show that the collection of random variables 
$\xi_L(J_1), \cdots, \xi_L(J_n)$
converge jointly to the independent ones for disjoint intervals
$J_1, J_2, \cdots, J_n$. 
For 
$\lambda_1 < \lambda_2$, 
let 
$P_{\lambda_1}$, 
$P_{\lambda_2}$, 
$P_{\lambda_1, \lambda_2}$ 
be the limit of those point processes composed by the jump points of functions 
$s \mapsto \left \lfloor
\Theta_{n s^{\gamma}}(\lambda_1) / \pi  
\right \rfloor$, 
$\left \lfloor
\Theta_{n s^{\gamma}}(\lambda_2) / \pi  
\right \rfloor$
and 
$\left \lfloor
(
\Theta_{n s^{\gamma}}(\lambda_2)
-
\Theta_{n s^{\gamma}}(\lambda_1)
)/
\pi
\right \rfloor$ 
respectively.
Then 
$P_{\lambda_1}$, 
$P_{\lambda_2}$, 
$P_{\lambda_1, \lambda_2}$
turn out to be the 
${\cal F}_s$-Poisson processes
under a suitable choice of the filtration 
${\cal F}_s$
(Lemma \ref{jointlyPoisson}). 
Letting 
${\cal P}_{\lambda_1}$, 
${\cal P}_{\lambda_2}$, 
${\cal P}_{\lambda_1, \lambda_2}$ 
be the set of atoms, we show 
(Lemmas \ref{monotonicity2}, \ref{independent})
\begin{equation}
{\cal P}_{\lambda_1}
\subset 
{\cal P}_{\lambda_2}, 
\quad
P_{\lambda_1} 
\cap 
{\cal P}_{\lambda_1, \lambda_2}
=
\emptyset
\label{independence}
\end{equation}
from which the independence of 
${\cal P}_{\lambda_1}$
and 
${\cal P}_{\lambda_1, \lambda_2}$
follows.
Finally 
we show that 
$\lim_{n \to \infty}
\Theta_{nt}(\lambda) / \pi
\in {\bf Z}$,  
a.s. which proves Theorem \ref{main}(2).
The statement in 
Theorem \ref{main}(1) is essentially proved in our previous paper \cite{KN} where the condition 
$\langle F \rangle = 0$ 
is used.
Theorem \ref{Poisson} 
follows from 
eq.(\ref{Laplace}) and 
Theorem \ref{main}.

The rest 
of this paper is organized as follows.
In Section 2, 
we study the behavior of 
$\Theta_{nt} (\lambda)$ 
and derive some properties of 
the expectation of 
$\Theta_{nt}(\lambda)$ 
and the monotonicity of the function 
$t \mapsto \left\lfloor 
\Theta_{nt}(\lambda) / \pi 
\right\rfloor$.
In Section 3, 
we derive the Ricatti equation
(\ref{Ricattisim}) 
satisfied by 
$R(nt)$. 
In Section 4, 
we estimate 
$R(nt^{\gamma})$ 
from above and below by solutions 
$R_{\pm}$ 
to simple SDE's.
In Section 5, 
following the argument in 
\cite{AD}, 
we consider the stationary approximation 
$S_{\pm}$ 
of 
$R_{\pm}$ 
and compute the explosion time of them. 
Then we show that the jump points of the function 
$t \mapsto \left\lfloor 
\Theta_{nt} / \pi 
\right\rfloor$
converge to a Poisson process and that the processes 
$P_{\lambda_1}$ 
and 
$P_{\lambda_1, \lambda_2}$
mentioned above are independent.
In Section 6, 
we prove Theorems 
\ref{Poisson}, \ref{main}.
Sections 7, 8 are appendices.
In what follows, 
$C$, $C'$ 
are positive constants which may change from line to line in each argument.
%
\section{
Behavior of 
$\Theta_{nt}(\lambda)$
}
In this section 
we introduce notations and derive some basic properties of the relative Pr\"ufer phase 
$\Theta_{nt}(\lambda)$.
Let  
$\tilde{\theta}_t(\kappa)$ 
be defined by 
\beq
\theta_t(\kappa)
&=&
\kappa t + \tilde{\theta}_t(\kappa)
\eeq
which satisfies the following integral equation. 
\begin{equation}
\tilde{\theta}_t(\kappa)
=
\frac {1}{2 \kappa}
Re 
\int_0^t 
\left(
e^{2i \theta_s(\kappa)} - 1 
\right)
a(s) F(X_s) ds.
\label{integralequation}
\end{equation}
Set 
\beq
\kappa_0 &:=& \sqrt{E_0}
\\
\kappa_c &:=& \kappa_0 + \frac cn, 
\quad
n > 0, 
\quad
c \in {\bf R}
\\
r_t^{(n)}(m)
&:=&
e^{2mi \theta_t(\kappa_c)}
-
e^{2mi \theta_t(\kappa_0)}, 
\quad
m \in {\bf Z}
\\
A_n(t)
&:=&
- \frac {c}{2\kappa_c \cdot \kappa_0}
Re 
\left(
e^{2i \theta_t(\kappa_c)}-1
\right)
F(X_t)
\\
(\triangle f)(m)
&:=&
\frac 12
\left(
f(m+1) + f(m-1) 
\right) - f(m). 
\eeq
By 
(\ref{integralequation}) 
we have 
\begin{eqnarray}
\Theta_{nt}(c)
&=&
\theta_{nt}(\kappa_c) - \theta_{nt}(\kappa_0)
\nonumber
\\
&=&
ct 
+
\frac {1}{2 \kappa_0} Re 
\int_0^{nt}
r_s^{(n)}(1) a(s) F(X_s) ds
+
\frac 1n
\int_0^{nt} A_n(s) a(s) ds.
\quad
\label{firstcomputation}
\end{eqnarray}
%
\begin{remark}
\label{monotonicity1}
For large 
$n$, 
we can find 
$t_0 > 0$ 
such that for 
$t \ge t_0$, 
we have 
$c > A_n (nt) a(nt)$. 
Then by 
eq.(\ref{firstcomputation}), 
for 
$t \ge t_0$, 
once 
$\Theta_t^{(n)}(\lambda)$
enters to an interval 
$( (k+1) \pi, (k+2) \pi )$ 
for some 
$k \in {\bf N}$, 
it never returns to 
$( k \pi, (k+1) \pi)$. 
In other words, the function 
$t \mapsto \left\lfloor 
\Theta_{nt}(\lambda) / \pi
\right\rfloor$
is non-decreasing.
\end{remark}
Here we make use of the following identity which is a consequence of Ito's formula 
\cite{KU} : 
for 
$f \in C^{\infty}(M)$ 
and 
$\kappa \ne 0$, 
\begin{eqnarray}
e^{i \kappa s} f(X_s) ds
&=&
d \left(
e^{i \kappa s} (R_{\kappa}f)(X_s) 
\right)
-
e^{i \kappa s} (\nabla R_{\kappa}f)(X_s) d X_s
\label{Itoformula}
\\
f(X_s) ds
&=&
\langle f \rangle ds
+
d ((R_0 f)(X_s))
-
\nabla (R_0 f)(X_s) d X_s
\label{SecondItoformula}
\\
\mbox{where}
\quad
R_{\kappa}f
&:=&
(L + i \kappa)^{-1} f, 
\quad
R_0 f := 
L^{-1} (f - \langle f \rangle).
\nonumber
\end{eqnarray}
$L$ 
is the generator of 
$X_t$.
Eq.(\ref{Itoformula}) 
and the integration by parts yields the following equation.
\begin{lemma}
\label{partialintegration}
Let 
$b \in C^{\infty}([0, \infty))$, 
$\varphi \in C^{\infty}(M)$, 
and let 
$g_{\varphi}^{m \kappa_0} 
:=
R_{2m \kappa_0} \varphi
=
(L + 2mi \kappa_0)^{-1} \varphi$. 
Then we have 
\beq
&&
\int_0^t 
b(s) r_s^{(n)}(m) \varphi(X_s) ds
\\
&=&
(-2mi)
\cdot
\frac {1}{2 \kappa_0}
\int_0^t 
b(s)
(\triangle r_s^{(n)})(m)
a(s)F(X_s)g_{\varphi}^{m \kappa_0} (X_s) ds
\\
&&
+
\left[
b(s) r^{(n)}_{s}(m) g_{\varphi}^{m \kappa_0}(X_s)
\right]_0^t
\\
&& - 
\int_0^t b'(s) 
r^{(n)}_{s}(m)
g_{\varphi}^{m \kappa_0}(X_s) ds
\\
&& - 2mi 
\cdot 
\frac 1n
\int_0^t 
b(s)
\left(
c + 
A_n (s) a(s) 
\right)
e^{2mi \theta_s(\kappa_c)}
g_{\varphi}^{m \kappa_0} (X_s) ds
\\
&& - 
\int_0^t 
b(s) 
r^{(n)}_{s}(m)
\nabla g_{\varphi}^{m \kappa_0} (X_s) d X_s.
\eeq
\end{lemma}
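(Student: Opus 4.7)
The plan is to apply Ito's formula (\ref{Itoformula}) separately to the processes $e^{2mi\theta_s(\kappa_c)}g_\varphi^{m\kappa_0}(X_s)$ and $e^{2mi\theta_s(\kappa_0)}g_\varphi^{m\kappa_0}(X_s)$, subtract the two identities to isolate a stochastic differential for $r_s^{(n)}(m)\varphi(X_s)\,ds$, and finally multiply by $b(s)$ and integrate by parts on the resulting exact differential.

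To set this up I first read off from (\ref{integralequation}) the drift-only ODE $d\theta_s(\kappa) = \kappa\,ds + \frac{1}{2\kappa}Re(e^{2i\theta_s(\kappa)}-1)\,a(s)F(X_s)\,ds$, noting that $\theta_s(\kappa)$ is of finite variation so that no quadratic covariation term arises when $e^{2mi\theta_s(\kappa)}$ multiplies an Ito process. Combined with $dg_\varphi^{m\kappa_0}(X_s) = Lg_\varphi^{m\kappa_0}(X_s)\,ds + \nabla g_\varphi^{m\kappa_0}(X_s)\,dX_s$ and the elementary identity $e^{2mi\theta}Re(e^{2i\theta}-1) = (\triangle e^{2i\cdot\theta})(m)$ (which is just the rewriting $\cos 2\theta - 1 = \frac{1}{2}(e^{2i\theta}+e^{-2i\theta})-1$ with $\triangle$ acting on the index $m$), the Ito product rule yields
\[ d(e^{2mi\theta_s(\kappa)}g_\varphi^{m\kappa_0}(X_s)) = e^{2mi\theta_s(\kappa)}(L+2mi\kappa)g_\varphi^{m\kappa_0}(X_s)\,ds + \frac{mi}{\kappa}(\triangle e^{2i\cdot\theta_s(\kappa)})(m)\,a(s)F(X_s)g_\varphi^{m\kappa_0}(X_s)\,ds + e^{2mi\theta_s(\kappa)}\nabla g_\varphi^{m\kappa_0}(X_s)\,dX_s. \]
Since $(L+2mi\kappa_0)g_\varphi^{m\kappa_0} = \varphi$, the first drift at $\kappa=\kappa_0$ is exactly $e^{2mi\theta_s(\kappa_0)}\varphi(X_s)\,ds$, while at $\kappa=\kappa_c$ it equals $e^{2mi\theta_s(\kappa_c)}(\varphi + (2mic/n)g_\varphi^{m\kappa_0})(X_s)\,ds$ because $\kappa_c-\kappa_0=c/n$. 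Subtracting the $\kappa_0$-identity from the $\kappa_c$-identity gives an expression for $d(r_s^{(n)}(m)g_\varphi^{m\kappa_0}(X_s))$ equal to $r_s^{(n)}(m)\varphi(X_s)\,ds$ plus an unwanted correction $\frac{2mic}{n}e^{2mi\theta_s(\kappa_c)}g_\varphi^{m\kappa_0}(X_s)\,ds$, plus the martingale $r_s^{(n)}(m)\nabla g_\varphi^{m\kappa_0}(X_s)\,dX_s$, plus the mismatched difference $\frac{mi}{\kappa_c}(\triangle e^{2i\cdot\theta_s(\kappa_c)})(m) - \frac{mi}{\kappa_0}(\triangle e^{2i\cdot\theta_s(\kappa_0)})(m)$ weighted by $a(s)F(X_s)g_\varphi^{m\kappa_0}(X_s)$.

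The main algebraic step is to reorganize this last piece. Splitting $\frac{1}{\kappa_c} = \frac{1}{\kappa_0} - \frac{c/n}{\kappa_0\kappa_c}$ separates it into $\frac{mi}{\kappa_0}(\triangle r_s^{(n)})(m)\,a(s)F(X_s)g_\varphi^{m\kappa_0}(X_s)$, which matches the first line of the lemma after factoring out $-2mi/(2\kappa_0)$, plus a remainder $-\frac{mic}{n\kappa_0\kappa_c}(\triangle e^{2i\cdot\theta_s(\kappa_c)})(m)\,a(s)F(X_s)g_\varphi^{m\kappa_0}(X_s)$. Expanding the $\triangle$ here once more and comparing with the definition of $A_n(s)$, the remainder collapses to $\frac{2mi}{n}A_n(s)a(s)e^{2mi\theta_s(\kappa_c)}g_\varphi^{m\kappa_0}(X_s)\,ds$, which merges with the unwanted $(2mic/n)$-correction into the compact $\frac{2mi}{n}(c+A_n(s)a(s))e^{2mi\theta_s(\kappa_c)}g_\varphi^{m\kappa_0}(X_s)\,ds$---precisely the fourth term of the lemma (with sign flipped when transferred to the other side).

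Finally I multiply the resulting identity by $b(s)$, integrate from $0$ to $t$, and apply the classical integration by parts $\int_0^t b(s)\,d(r_s^{(n)}(m)g_\varphi^{m\kappa_0}(X_s)) = [b(s)r_s^{(n)}(m)g_\varphi^{m\kappa_0}(X_s)]_0^t - \int_0^t b'(s)r_s^{(n)}(m)g_\varphi^{m\kappa_0}(X_s)\,ds$ (valid because $b\in C^\infty$ has finite variation) to the exact-differential term; collecting the five pieces then yields the stated formula. The step I expect to be most delicate is the reorganization of the two $\triangle$-terms: the naive Ito computation produces an order-$c/n$ leftover from the $1/\kappa_c$--$1/\kappa_0$ discrepancy that appears unrelated to anything else, and the essential observation is that it combines with the $(2mic/n)g_\varphi^{m\kappa_0}$ drift correction exactly as dictated by the definition of $A_n$ in (\ref{firstcomputation})---this is the whole reason $A_n$ was packaged with the specific coefficient $-c/(2\kappa_c\kappa_0)$. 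Everything else is bookkeeping.
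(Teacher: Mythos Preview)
Your proposal is correct and follows exactly the route the paper indicates in its one-line justification (``Eq.~(\ref{Itoformula}) and the integration by parts yields the following equation''): apply Ito's product rule to $e^{2mi\theta_s(\kappa)}g_\varphi^{m\kappa_0}(X_s)$ for $\kappa=\kappa_c,\kappa_0$, subtract, and integrate by parts against $b(s)$. Your algebraic reorganization of the $1/\kappa_c-1/\kappa_0$ remainder into the $A_n$-term is precisely the point, and the computation checks out.
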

Putting 
$m=1$, $\varphi = F$, 
and 
$b(t) = a(t)$ 
in Lemma \ref{partialintegration}, 
we have 
\begin{lemma}
\label{Theta}
\beq
\Theta_{nt} (c)
&=&
ct + M_t^{(n)} + O_t^{(n)} + \delta^{(n)}_t
\eeq
where 
\beq
M_t^{(n)}
&=&
- \frac {1}{2 \kappa_0}
Re
\int_0^{nt} 
a(s) 
r^{(n)}_{s}(1)
\nabla g_{F}^{\kappa_0} (X_s) d X_s
\\
O_t^{(n)}
&=&
\frac {1}{2 \kappa_0}
Re
\left( 
- \frac {2i}{2 \kappa_0}
\int_0^{nt}
a(s)^2 
( \triangle r_s^{(n)} )(1)
F(X_s)
g_F^{\kappa_0}(X_s) ds
\right)
\\
\delta_t^{(n)}
&=&
\frac {1}{2 \kappa_0}
Re
\Biggr\{
\left[
a(s) r_s^{(n)}(1) g_F^{\kappa_0}(X_s) 
\right]_0^{nt}
-
\int_0^{nt}
a'(s) r_s^{(n)}(1) g_F^{\kappa_0}(X_s) ds
\\
&& \qquad
+
(-2i) \frac 1n
\int_0^{nt}
a(s)
\left(
c + A_n(s) a(s)
\right)
e^{2i \theta_s(\kappa_c)}
g_F^{\kappa_0} (X_s) ds
\Biggr\}
\\
&& \qquad + 
\frac 1n
\int_0^{nt} a(s) A_n (s) ds.
\eeq
Moreover, 
\[
\lim_{n \to \infty} \delta_t^{(n)} = 0.
\]
\end{lemma}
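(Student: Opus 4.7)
The plan is to apply Lemma~\ref{partialintegration} with $m=1$, $\varphi = F$, $b(s) = a(s)$ to rewrite the oscillatory integral $\int_0^{nt} r_s^{(n)}(1) a(s) F(X_s)\, ds$ that appears in eq.~(\ref{firstcomputation}), and then to read off the claimed decomposition by inspection. After multiplying the five terms produced by the lemma by $(2\kappa_0)^{-1}$ and taking real parts, the $dX_s$-integral matches $M_t^{(n)}$ by definition; the term with $(\triangle r_s^{(n)})(1)$, whose prefactor $-2i/(2\kappa_0)$ combines with the factor $b(s)=a(s)$ inside to produce $a(s)^2$, becomes $O_t^{(n)}$; and the three remaining terms---the boundary contribution, the $a'$-integral, and the $(1/n)$-integral against $e^{2i\theta_s(\kappa_c)} g_F^{\kappa_0}$---together with the term $\frac{1}{n}\int_0^{nt} A_n(s) a(s)\, ds$ already present in (\ref{firstcomputation}), constitute $\delta_t^{(n)}$.

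Next I would verify $\delta_t^{(n)} \to 0$ term by term. The boundary contribution vanishes at $s=0$ because $r_0^{(n)}(1)=0$, and at $s=nt$ it is bounded by $C\, a(nt) = O((nt)^{-\alpha}) \to 0$. Both $(1/n)$-integrals are controlled, using boundedness of $A_n$ and $g_F^{\kappa_0}$ together with $a(s)=O(s^{-\alpha})$, by $(C/n)\int_0^{nt} s^{-\alpha}\, ds = O(n^{-\alpha})$.

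The main obstacle will be the $a'$-integral $\int_0^{nt} a'(s) r_s^{(n)}(1) g_F^{\kappa_0}(X_s)\, ds$. Because $|a'(s)|=O(s^{-\alpha-1})$ is integrable but not small in $n$, the crude bound $|r_s^{(n)}(1)|\le 2$ only yields $O(1)$, not $o(1)$. My plan is to invoke pathwise dominated convergence: for each fixed $s$ and a.e.~$\omega$, $r_s^{(n)}(1)=e^{2i\theta_s(\kappa_c)}-e^{2i\theta_s(\kappa_0)} \to 0$ as $n\to\infty$, since $\kappa_c=\kappa_0+c/n \to \kappa_0$ and $\kappa\mapsto\theta_s(\kappa)$ is continuous; meanwhile the integrand is dominated uniformly in $n$ by $2\|g_F^{\kappa_0}\|_\infty |a'(s)|$, which is integrable on $[0,\infty)$ precisely because of the hypothesis $a'(s)=O(s^{-\alpha-1})$. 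This is the step that uses in an essential way the ``technical'' assumption on $a'$ flagged in the introduction.
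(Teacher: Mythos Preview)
Your proposal is correct and follows exactly the paper's approach: the paper's entire proof consists of the sentence ``Putting $m=1$, $\varphi = F$, and $b(t) = a(t)$ in Lemma~\ref{partialintegration}, we have\ldots'' with no further details. Your term-by-term verification of $\delta_t^{(n)}\to 0$---including the dominated-convergence argument for the $a'$-integral using $r_s^{(n)}(1)\to 0$ pointwise and the integrable majorant $2\|g_F^{\kappa_0}\|_\infty|a'(s)|$---supplies the justification the paper leaves implicit.
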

By using 
Lemmas \ref{partialintegration}, \ref{Theta} 
we can prove the following Proposition which is necessary to study the behavior of 
${\bf E}[ \Theta_{nt}(\lambda) ]$. 
\begin{proposition}
\label{Thetalimit}
Suppose that 
\[
\int_0^{\infty} a(s)^{j_0} ds < \infty
\]
for some 
$j_0 \ge 1$.
Then for 
$t > 0$, 
we have 
\beq
\Theta_{nt} (c)
&=&
ct + \widetilde{M}_t^{(n)} + o(1), 
\quad
n \to \infty
\eeq
where 
$\widetilde{M}_t^{(n)}$ 
is a martingale.
\end{proposition}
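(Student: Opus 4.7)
Lemma~\ref{Theta} already supplies the decomposition
$$\Theta_{nt}(c) = ct + M_t^{(n)} + O_t^{(n)} + \delta_t^{(n)},$$
where $M_t^{(n)}$ is a stochastic integral (hence a martingale) and $\delta_t^{(n)} \to 0$. The only work remaining is to rewrite the oscillatory remainder $O_t^{(n)}$ --- an integral weighted by $a(s)^2$ and $(\triangle r^{(n)}_s)(1)$ against the smooth factor $F(X_s) g^{\kappa_0}_F(X_s)$ --- as a new martingale plus $o(1)$, which is then absorbed into the final $\widetilde{M}_t^{(n)}$.

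The plan is to apply Lemma~\ref{partialintegration} iteratively. At step $k$ (starting with $k = 2$) I take $b(s) = a(s)^k$ and $\varphi$ built from $F$ and nested resolvent kernels $g^{m\kappa_0}_{(\cdots)}$; the identity $(\triangle r^{(n)}_s)(m) = \tfrac12 r^{(n)}_s(m+1) + \tfrac12 r^{(n)}_s(m-1) - r^{(n)}_s(m)$ lets me split by linearity. The five output pieces of Lemma~\ref{partialintegration} are then handled as follows: (i) the stochastic integral is accumulated into the growing $\widetilde{M}_t^{(n)}$; (ii) the boundary contribution is of size $a(nt)^k \to 0$; (iii) the $b'$-term has integrand $O(s^{-k\alpha-1})$, absolutely integrable thanks to the hypothesis $a'(t) = O(t^{-\alpha-1})$, and in fact $o(1)$ uniformly on compacts in $t$; (iv) the $\tfrac{1}{n}$-prefactored integral is at most $\tfrac{C}{n}\int_0^{nt} s^{-k\alpha}\,ds = O\bigl(n^{-\min(k\alpha,1)}\bigr) = o(1)$; and (v) the new oscillatory integral, with weight $a(s)^{k+1}$, is fed into the next iteration. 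After $j_0 - 1$ iterations the weight becomes $a(s)^{j_0+1}$, which is absolutely integrable on $[0,\infty)$ by the hypothesis $\int_0^\infty a(s)^{j_0}\,ds < \infty$ together with $a \to 0$; one further application of Lemma~\ref{partialintegration} then reduces the last residual entirely to a martingale plus $o(1)$ by the same estimates, terminating the recursion. Summing the finitely many accumulated martingales yields $\widetilde{M}_t^{(n)}$ and summing the finitely many $o(1)$ remainders completes the proof.

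The principal technical obstacle will be the bookkeeping of the iterated $\varphi$'s. At stage $k$ the function fed into Lemma~\ref{partialintegration} is of the nested form $F \cdot g^{m_1\kappa_0}_{F \cdot g^{m_2\kappa_0}_{(\cdots)}}$, and one must verify that each such pull-back lands in $C^\infty(M)$ with bounded gradient, so that each newly produced stochastic integral is a genuine $L^2$-martingale and the error estimates in (ii)--(iv) remain uniform in $k$. A secondary point to check is that the factors $e^{2mi\theta_s(\kappa_c)}$ and $r^{(n)}_s(m)$ produced at each stage, while not small, are uniformly bounded and hence do not inflate any of the $o(1)$ bounds above; this uses only that $a$ is non-increasing and $\theta_s$ is real-valued.
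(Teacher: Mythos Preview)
Your approach is the same as the paper's --- iterate Lemma~\ref{partialintegration} on $O_t^{(n)}$ to raise the power of $a$ while accumulating martingales --- but your termination argument has a gap. Applying the lemma ``one further time'' once the weight has reached $a(s)^{j_0+1}$ does \emph{not} reduce the residual entirely to a martingale plus $o(1)$: the lemma always outputs a new oscillatory integral (your item (v)), now with weight $a(s)^{j_0+2}$, so the recursion never stops this way. Likewise, in (iii) you assert the $b'$-term is $o(1)$ on the basis of absolute integrability alone, but an $n$-dependent integrable integrand over $[0,nt]$ yields a bounded quantity, not a vanishing one.

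What is missing is the single observation the paper records first: $r_s^{(n)}(m) = e^{2mi\theta_s(\kappa_c)} - e^{2mi\theta_s(\kappa_0)} \to 0$ pointwise in $s$ as $n\to\infty$, since $\kappa_c \to \kappa_0$ and $\kappa \mapsto \theta_s(\kappa)$ is continuous. Combined with the uniform bound $|r_s^{(n)}(m)|\le 2$ and dominated convergence, this is precisely what makes the terminal integral
\[
\int_0^{nt} a(s)^{j_0}\, r_s^{(n)}(m)\, G_m(X_s)\,ds
\]
tend to zero, and the same mechanism justifies (iii). With this one line inserted, your iteration terminates cleanly at stage $j_0$ (no extra application is needed), exactly as in the paper; the remainder of your bookkeeping --- the nested $\varphi$'s staying in $C^\infty(M)$, the boundedness of the phase factors, the $O(n^{-\min(k\alpha,1)})$ estimate for the $\tfrac{1}{n}$-term --- is correct.
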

\begin{proof}
Note that 
$\lim_{n \to \infty}r_s^{(n)}(m) = 0$.
If 
$j_0 \le 2$, 
$O_t^{(n)} = o(1)$ 
which already proves the statement of Proposition \ref{Thetalimit} 
with 
$\widetilde{M}_t^{(n)} = M_t^{(n)}$.
If 
$j_0 \ge 3$, 
we apply 
Lemma \ref{partialintegration}
for 
$O_t^{(n)}$ 
so that
\beq
O_t^{(n)}
&=&
\frac {1}{2 \kappa}
Re
\Biggl(
- \frac {2i}{2 \kappa}
\int_0^{nt}
a(s)^2
\triangle r_s^{(n)} (1)
F(X_s)
g_F^{\kappa} (X_s) ds
\Biggr)
\\
&=&
Re
\sum_{m=1,2}
C_m
\int_0^{nt}
a(s)^3
r_s^{(n)}(m)
G_m^{(n)}(X_s) ds
+
(martingale)
+
o(1)
\\
\eeq
where 
$G_m^{(n)}$ 
is uniformly bounded.
Iterating 
this process until we have 
$a(s)^{j_0}$ 
yields 
\beq
O_t^{(n)}
&=&
\sum_m c_m 
\int_0^{nt} 
a (s)^{j_0} r_s^{(n)} (m) G_m^{(n)} (X_s) ds
+
(martingale)
+
o(1).
\eeq
\QED
\end{proof}
%
\section{Ricatti equation}
For a function 
$\kappa \mapsto f(\kappa)$ 
we introduce 
\[
\Delta
f := f(\kappa_c) - f(\kappa_d), 
\quad
0 \le d < c, 
\quad
\kappa_x := 
\kappa_0 + \frac xn.
\]
This 
definition is different from that in Section 2.
To study 
the hitting time of 
$\Theta_{nt}(\lambda)$ 
to the set 
$\pi {\bf Z}$, 
or that of  
$(\Theta_{nt}(\lambda') - \Theta_{nt}(\lambda))$ 
in general, we consider
\beq
R(t)
&:=&
\log \tan\frac { \Delta \theta_t}{2}.
\quad
\eeq
Note that 
\begin{equation}
\cosh R(s) = 
\frac {1}{\sin \Delta \theta_s}, 
\quad
\sinh R(s) = 
- \frac {\cos \Delta \theta_s}{\sin \Delta \theta_s}.
\label{logtan}
\end{equation}
Here 
we recall that, for 
Sine$_{\beta}$-process, the corresponding process 
$\tilde{R}(t) := \log \tan 
\left(
\alpha_t(\lambda) / 4
\right)$  
with  
$\alpha_t(\lambda)$ 
being the solution to 
eq.(\ref{Sine}) satisfies
\begin{equation}
d \tilde{R}(t)
=
\frac 12
\left(
\lambda \frac {\beta}{4}
e^{- \frac {\beta}{4} t}
\cosh \tilde{R}(t) 
+
\tanh \tilde{R}(t)
\right)dt
+
d B_t.
\label{SinebetaRicatti}
\end{equation}
%
%
The following 
Proposition implies that 
$R(nt)$ 
is close to the solution to a SDE which is similar to eq.(\ref{SinebetaRicatti}). 
\begin{proposition}
\label{Ricatti}
\begin{eqnarray}
R(nt) - R(0)
&=&
\frac {c-d}{n}
\int_0^{nt}
\cosh R(s) \, ds
\nonumber\\
+
\frac {1}{2 \kappa_0} 
&Re&
\left[
- \frac {\langle F g_{\kappa_0} \rangle}{\kappa_0}
\right]
\int_0^{nt} 
a(s)^2
\tanh R(s) \,  ds 
+ M_t
+
E(nt)
\qquad
\label{Ricattiequation}
\end{eqnarray}
where 
$M$
is a martingale with 
\begin{eqnarray}
d\langle M \rangle_t
&=&
\left( \frac {1}{2 \kappa_0} \right)^2
2
\langle \psi_{\kappa_0} \rangle
n a(nt)^2 
(1+o(1))
dt, 
\quad
n \to \infty
\label{M}
\\
\psi_{\kappa_0}
&:=&
[g_{\kappa_0}, 
\overline{
g_{\kappa_0}
}
],
\quad
g_{\kappa_0}
:=
R_{2 \kappa_0}F
=
(L + 2i \kappa_0)^{-1} F, 
\quad
[f,g]
:=
\nabla f \cdot \nabla g.
\nonumber
\end{eqnarray}
The last term 
$E(nt)$ 
in eq.(\ref{Ricattiequation}) is an negligible error  compared to 1st and 2nd terms of RHS in eq.(\ref{Ricattiequation}), and 
has the following form. 
\beq
E(nt)
&=&
\int_0^{nt}
\cosh(R(s)) b(s) c_1(s) ds
\\
&& \qquad
+
\int_0^{nt} 
\tanh (R(s)) a(s)^3 c_2(s) ds
+
e^{(n)}(t)
+
C
\eeq
where 
$C$ 
is a non-random constant and 
\beq
&&
b(s) = 
\frac 1n a(s) + a'(s) + a(s)^{j_0}, 
\quad
j_0 := \min
\{ j \in {\bf N} \, | \, 1 - j \alpha < 0 \}
\\
&&
c_1(s), c_2(s) : \mbox{ bounded functions }
\\
&&
e^{(n)}(t) \le C' n^{- \alpha}.
%
\eeq
%
%
%
\end{proposition}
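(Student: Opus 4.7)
The proof will compute $dR(t)$ from the pathwise Prüfer ODE and then convert the resulting oscillatory integrals into a martingale, an ergodic-average drift, and decaying errors, using the Itô-type identities (\ref{Itoformula})--(\ref{SecondItoformula}) and their packaged form in Lemma \ref{partialintegration}. The computation closely parallels Lemma \ref{Theta} and Proposition \ref{Thetalimit}, but with the extra trigonometric denominator $\sin\Delta\theta_s$ now generating $\cosh R$ and $\tanh R$ factors rather than being absent.

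First, from $d\theta_t(\kappa)/dt = \kappa + (a(t)/2\kappa) Re[(e^{2i\theta_t(\kappa)}-1)F(X_t)]$ (which holds pathwise, since $\theta$ depends on the Brownian path only through the drift $a F(X_t)$), I would subtract the equations at $\kappa_c$ and $\kappa_d$ and apply the chain rule $dR = (1/\sin\Delta\theta)\,d(\Delta\theta) = \cosh R\cdot d(\Delta\theta)$. The term $\kappa_c-\kappa_d=(c-d)/n$ immediately produces the first summand $((c-d)/n)\int_0^{nt}\cosh R(s)\,ds$. For the $F(X_s)$-driven part, replacing $1/\kappa_c$ and $1/\kappa_d$ by $1/\kappa_0$ produces $O(1/n)\cdot a(s)$ errors that feed the $b(s)=n^{-1}a(s)$ component of $E(nt)$. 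The remaining integrand simplifies using the identity
\[
\frac{e^{2i\theta_s(\kappa_c)}-e^{2i\theta_s(\kappa_d)}}{\sin\Delta\theta_s}=2i\,e^{i(\theta_s(\kappa_c)+\theta_s(\kappa_d))},
\]
so the $\cosh R$ factor cancels and we are left with an oscillatory integral of the form $(\kappa_0)^{-1}\int_0^{nt} a(s)\,Re[i\,e^{i(\theta_s(\kappa_c)+\theta_s(\kappa_d))}F(X_s)]\,ds$.

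Next, I would factor $e^{i(\theta_s(\kappa_c)+\theta_s(\kappa_d))}=e^{2i\theta_s(\kappa_0)}\,\Omega_s$ with the slow modulation $\Omega_s=e^{i(\Theta_s(c)+\Theta_s(d))}$, and then apply Lemma \ref{partialintegration} with $m=1$, $\varphi=F$, $b=a$, suitably generalized so that the reference phase is $2\kappa_0 s$ instead of $2\theta_s(\kappa_0)$. This yields three kinds of outputs. (a) A stochastic integral against $dX_s$ with integrand proportional to $a(s)\,\nabla g_{\kappa_0}(X_s)$ (times the slow factor $\Omega_s$, whose modulus is $1$). Taking the real part and computing its quadratic variation produces $(2\kappa_0)^{-2}\cdot 2\,[g_{\kappa_0},\overline{g_{\kappa_0}}](X_s)\,a(s)^2\,ds$, and a further application of (\ref{SecondItoformula}) to $f=\psi_{\kappa_0}$ replaces $\psi_{\kappa_0}(X_s)$ by its mean $\langle\psi_{\kappa_0}\rangle$ up to a coboundary plus martingale; after the change of variable $s=nt$ this is exactly the claimed $n\,a(nt)^2\,(1+o(1))\,dt$. (b) Boundary terms at $s=0,nt$, $a'(s)$-terms, and the extra terms involving $(c+A_n(s)a(s))/n$ in the last lines of Lemma \ref{partialintegration}; by the assumptions $a(t)=O(t^{-\alpha})$, $a'(t)=O(t^{-\alpha-1})$, each of these is $O(n^{-\alpha})$ and feeds the $e^{(n)}(t)$ part of $E(nt)$. (c) A residual integral coming from the $(\triangle r_s^{(n)})$-term in Lemma \ref{partialintegration} (together with the correction produced when the chain rule hits $\Omega_s$), which is quadratic in the potential and has the form $\int_0^{nt} a(s)^2\,Re[\cdots\,F(X_s)g_{\kappa_0}(X_s)]\,ds$ with a trigonometric prefactor that, after rewriting via (\ref{logtan}), reduces to $\tanh R(s)$.

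For this quadratic-in-$F$ residual I would apply (\ref{SecondItoformula}) once more with $f=F g_{\kappa_0}$ (or, more precisely, with the relevant Fourier mode of it), producing the ergodic average $\langle Fg_{\kappa_0}\rangle$ plus a martingale (absorbed into $M$) plus a coboundary (absorbed into $E$). This is precisely the source of the middle drift $(2\kappa_0)^{-1}Re[-\langle F g_{\kappa_0}\rangle/\kappa_0]\int_0^{nt} a(s)^2\tanh R(s)\,ds$ in (\ref{Ricattiequation}); the sign $-\langle Fg_{\kappa_0}\rangle/\kappa_0$ comes from the factor $-2i$ in Lemma \ref{partialintegration}. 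Any remaining oscillatory integrals that are not yet of one of the three forms above are iterated through Lemma \ref{partialintegration} as in the proof of Proposition \ref{Thetalimit}, picking up one extra power of $a$ each time, until the integrand carries $a(s)^{j_0}$ which is integrable in $s$ by the definition of $j_0$; such contributions then feed the $\int a^{j_0}\cosh R\,b\,c_1$ and $\int a^3\tanh R\,c_2$ channels of $E(nt)$.

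\textbf{Main obstacle.} The real difficulty is not any single analytic estimate but the bookkeeping: one must track, at each application of Lemma \ref{partialintegration}, which terms contribute to the $\cosh R$ drift, the $\tanh R$ drift, the martingale $M$, or the error $E$, and in particular verify that the stated constants $-\langle F g_{\kappa_0}\rangle/\kappa_0$ and $2\langle\psi_{\kappa_0}\rangle$ emerge with the correct signs and factors of $i$ after taking $Re[\cdot]$. A subtler point is that when the chain rule acts on $\Omega_s$ it produces a $d\theta_s(\kappa_x)$ factor whose drift part must be expanded back via the Prüfer equation — this is what creates the quadratic-in-$F$ term producing the $\tanh R$ drift, and one must verify that the oscillatory-in-$s$ remainder of this expansion does not contaminate the martingale's quadratic variation at leading order. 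Once these accounting issues are resolved, the error bounds $|e^{(n)}(t)|\le C'n^{-\alpha}$ follow routinely from the hypotheses on $a$ and $a'$.
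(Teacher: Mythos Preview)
Your plan is correct and follows the same underlying mechanism as the paper (Prüfer ODE $\to$ oscillatory integral $\to$ Itô integration by parts $\to$ iterate until $a^{j_0}$), but the organization differs in one substantive way that is worth pointing out.

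You apply the identity $(e^{2i\theta_s(\kappa_c)}-e^{2i\theta_s(\kappa_d)})/\sin\Delta\theta_s = 2i\,e^{i(\theta_s(\kappa_c)+\theta_s(\kappa_d))}$ \emph{at the outset} and then do the Itô integration by parts on the resulting single integral with phase $e^{i(\theta_c+\theta_d)}$. The paper instead keeps the difference structure, writing the main term as $\Delta J(1;1;F)$ with $J(k;j;H)(\kappa)=\int_0^{nt}(\sin\Delta\theta_s)^{-1}e^{2ik\theta_s(\kappa)}a(s)^jH(X_s)\,ds$, and develops in Appendix~I a recursive calculus for $\Delta J(k;j;H)$ (Proposition~\ref{J}) together with auxiliary integrals $K(k,l;j;H)$ carrying a $\sin\Delta\theta_s$ factor (Lemmas~\ref{recursion},~\ref{cancel}). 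In the paper's route, differentiating the $1/\sin\Delta\theta_s$ factor produces additional terms ($\Delta J_2$ in the proof of Proposition~\ref{J}) that must be shown negligible; this requires the cancellation Lemma~\ref{cancel}, which says $K(k,l;j;H)-K(l,k;j;H)\approx 0$. Your early simplification removes the $1/\sin\Delta\theta_s$ factor and hence bypasses this entire mechanism --- a genuine economy --- but the price is that after one integration by parts your phases become mixed, of the form $e^{i(k\theta_c+l\theta_d)}$, and you must identify by hand which of these are slowly varying (precisely $e^{\pm i\Delta\theta}$, giving the $\cos\Delta\theta=-\tanh R$ factor) and which are fast and need further iteration. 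The paper's $J,K$ machinery packages this bookkeeping systematically.

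Two small cautions. First, Lemma~\ref{partialintegration} as stated is tailored to the difference $r_s^{(n)}(m)=e^{2mi\theta_s(\kappa_c)}-e^{2mi\theta_s(\kappa_0)}$, not to a single integral with modulation $\Omega_s$; you need the obvious variant (which is just (\ref{Itoformula}) plus ordinary integration by parts with the slow factor $a(s)\Omega_s$). Second, your step~(c) is correct but underspecified: the $\tanh R$ drift arises because the chain rule on $\Omega_s$ produces $\tilde\theta_c'+\tilde\theta_d'$, which via the Prüfer equation contains $Re[e^{2i\theta_c}+e^{2i\theta_d}]$, and multiplying by $e^{i(\theta_c+\theta_d)}$ yields the slow pieces $e^{\pm i\Delta\theta}$ whose sum is $2\cos\Delta\theta$; then (\ref{SecondItoformula}) replaces $Fg_{\kappa_0}(X_s)$ by $\langle Fg_{\kappa_0}\rangle$. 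Carrying the constants through gives exactly $(2\kappa_0)^{-1}Re[-\langle Fg_{\kappa_0}\rangle/\kappa_0]$, and the martingale quadratic variation works out to $(2\kappa_0)^{-2}\cdot 2\langle\psi_{\kappa_0}\rangle\,a^2(1+o(1))$ as claimed, once you use that $\langle N,N\rangle$ has fast phase $e^{2i(\theta_c+\theta_d)}$ and is therefore $o(\int a^2)$.
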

\begin{proof}
First of all, 
we introduce a notation 
$A \approx B$ 
meaning that 
$A - B$ 
is a sum of an negligible error 
$E(nt)$ 
and a martingale 
$N$ 
whose quadratic variation is negligible compared to that of 
$M$
in eq.(\ref{M}) : 
\beq
A \approx B
&\stackrel{def}{\Longleftrightarrow}&
A-B = E(nt) + N_t, 
\quad
d \langle N \rangle_t
\le
C \cdot n a(nt)^3 dt.
\eeq
By the integral equation (\ref{integralequation}), 
we have
\beq
&&
R(nt) - R(0)
\\
&=&
\int_0^{nt}
\frac {1}{\sin (\Delta \theta_s)} 
\frac {d}{ds} 
\left(
\theta_s(\kappa_c) - \theta_s(\kappa_d)
\right)
ds
\\
&=&
\int_0^{nt} 
\frac {1}{\sin (\Delta \theta_s)} 
\Biggl\{
\frac {c-d}{n} 
+
\frac {1}{2\kappa_c}
Re 
\left(
e^{2i \theta_s(\kappa_c)} - 1
\right)
a(s) F(X_s)
\\
&& \qquad
-
\frac {1}{2\kappa_d}
Re 
\left(
e^{2i \theta_s(\kappa_d)} - 1
\right)
a(s) F(X_s)
\Biggr\}
ds
\\
&=&
\int_0^{nt} 
\frac {1}{\sin (\Delta \theta_s)} 
\frac {c-d}{n} 
ds
\\
&& 
+
\int_0^{nt} 
\frac {1}{\sin (\Delta \theta_s)}
\frac {1}{2 \kappa_0}
Re 
\left(
e^{2i \theta_s(\kappa_c)}
-
e^{2i \theta_s(\kappa_d)}
\right)
a(s) F(X_s) ds
\\
&& +
\left(
\frac {1}{2 \kappa_c}- \frac {1}{2 \kappa_0} 
\right)
\int_0^{nt}
\frac {1}{\sin (\Delta \theta_s)} 
Re \left(
e^{2i \theta_s(\kappa_c)} -1
\right)
a(s) F(X_s) ds
\\
&& 
-
\left(
\frac {1}{2 \kappa_d}- \frac {1}{2 \kappa_0} 
\right)
\int_0^{nt}
\frac {1}{\sin (\Delta \theta_s)} 
Re \left(
e^{2i \theta_s(\kappa_d)} -1
\right)
a(s) F(X_s) ds
\\
&=:&
I + \cdots + IV.
\eeq
By (\ref{logtan}), 
$I$ is equal to the 1st term of RHS in 
eq.(\ref{Ricattiequation}).
Since 
$\kappa_c^{-1} - \kappa_0^{-1} = O(n^{-1})$, 
the integrands of 
$III$, $IV$ 
are equal to 
$\cosh (R(s)) \cdot a(s) n^{-1}$ 
multiplied by bounded functions so that 
$III, IV \approx 0$.
Hence 
it suffices to compute the 2nd term 
$II$ 
which has the following form : 
\beq
II
=
\frac {1}{2 \kappa_0} Re [ \Delta J ]
&=&
\frac {1}{2 \kappa_0} Re [
J(\kappa_c) - J(\kappa_d) ]
\\
\mbox{where}
\quad
J(\kappa)
&:=&
\int_0^{nt}
\frac {1}{\sin (\Delta \theta_s)}
e^{2i \theta_s(\kappa)}
a(s) F(X_s) ds.
\eeq
In order to compute 
$J(\kappa)$
we introduce
\beq
J(k ; j ; H)(\kappa)
&:=&
\int_0^{nt}
\frac {1}{\sin (\Delta \theta_s)}
e^{2i k \theta_s(\kappa)}
a(s)^j 
H(X_s) ds
\eeq
for 
$k \in {\bf Z}$,
$j \ge 1$, 
and 
$H \in C^{\infty}(M)$.
By Proposition \ref{J}(1)
we have 
\begin{eqnarray}
\Delta J
&=&
\Delta J(1 ; 1 ;  F)
\nonumber
\\
&\approx&
\frac {1}{\kappa_0}
\langle F\cdot g_{\kappa_0} \rangle
\int_0^{nt}
\cos(\Delta \theta_s)
a(s)^{2} 
ds
\nonumber
\\
&&
- 
\frac {2i}{2 \kappa_0}
\Biggl\{
\frac 12
\Delta J(2 ; 2 ; F g_{\kappa_0})
-
\Delta J(1 ; 2 ; F g_{\kappa_0})
\Biggr\}
+ N_t
\label{DeltaJ}
\end{eqnarray}
where we set  
$g_{\kappa_0} := R_{2\kappa_0} F$.
$N$ 
is a martingale such that 
\beq
\langle N, N \rangle_t
&=&
o \left(
\int_0^{nt} a(s)^2 ds
\right)
\\
\langle N, \overline{N} \rangle_t
&=&
4 \langle \psi \rangle
\int_0^{nt} a(s)^{2} ds
(1 + o(1)), 
\quad
\psi := 
[
g_{\kappa_0}, \overline{g_{\kappa_0}}
]
\eeq
as 
$n \to \infty$.
By (\ref{logtan}), 
the 1st term of RHS in eq.(\ref{DeltaJ}) is equal to the 2nd term of RHS in eq.(\ref{Ricattiequation}).
For the 2nd term 
of RHS in eq.(\ref{DeltaJ}), we use 
Theorem \ref{J}(2).
Noting that 
$J(0 ; j ; H)$ 
is independent of 
$\kappa$ 
so that 
$\Delta J(0 ; j ; H) = 0$, 
we can repeatedly use Theorem \ref{J}(2) for 
$(j_0 - 1)$ - times 
to obtain the sum of negligible terms of the form : 
$\Delta J(k ; j_0 ; H) \approx 0$.
Therefore
\beq
\Delta J(2 ; 2 ; F g_{\kappa_0})
\approx 0,
\quad
\Delta J(1 ; 2 ; F g_{\kappa_0})
\approx 0.
\eeq
Set 
$M$ 
to be the sum of 
$(2 \kappa_0)^{-1} Re N$ 
and all other martingales appeared in the above argument, after taking real part and multiplying 
$(2 \kappa_0)^{-1}$.
Then 
$M$ 
satisfies 
eq.(\ref{M}).
\QED
\end{proof}
%

%
\section{A comparison argument}
In this section
we consider 
$\widetilde{R} := R - e^{(n)}$,  
carry out scaling and time-change, and bound from above and below by the diffusions 
$R_{\pm}$ 
which obey simple SDE's (\ref{Rplus}), (\ref{Rminus}). 
We first prepare some notations. 
Let 
\beq
\widetilde{R}(n t)
&:=&
R(n t) - e^{(n)} (t).
\eeq
$e^{(n)}(t)$ 
is an error term appeared in Proposition \ref{Ricatti}.
Moreover set 
\beq
\gamma &:=& \frac {1}{1 - 2 \alpha} > 1, 
\\
\delta &=& C n^{-\alpha}, 
\quad
\epsilon = C n^{-\beta}, 
\\
\beta 
&:=&
\min \{ \alpha, j_0\alpha-1 \}
=
j_0 \alpha-1, 
\quad
C > 0,
\\
\cosh_+ (r)
&:=&
\sup_{|s-r| < \delta} \cosh s,
\quad
\cosh_- (r)
:=
\inf_{|s-r| < \delta} \cosh s
\\
\tanh_+ (r)
&:=&
\sup_{|s-r| < \delta} \tanh s,
\quad
\tanh_- (r)
:=
\inf_{|s-r| < \delta} \tanh s
\\
\tanh_{+, \epsilon}(r)
&:=&
\left\{
\begin{array}{cc}
(1+\epsilon) \tanh_+ (r) & (r>-\delta) \\
(1-\epsilon) \tanh_+ (r) & (r<-\delta)
\end{array}
\right.
\\
\tanh_{-, \epsilon}(r)
&:=&
\left\{
\begin{array}{cc}
(1-\epsilon) \tanh_- (r) & (r>\delta) \\
(1+\epsilon) \tanh_- (r) & (r<\delta)
\end{array}
\right.
\\
C_n &:=&
\frac {1}{\kappa_0}
\left(
\frac {\langle \psi_{\kappa_0} \rangle}{2}
\right)^{1/2}
\gamma^{\frac 12}
n^{\frac {1}{2 \gamma}}.
\eeq
We consider diffusions  
$R_{\pm}$ 
which are the solutions to 
\begin{eqnarray}
d R_+
&=&
\left(
\lambda(1 + \epsilon) 
\cosh_+ R_+ 
\gamma t^{\gamma-1} 
+
\frac {C_n^2}{2}
\tanh_{+, \epsilon} R_+ 
\right) dt 
+
C_n d W_t
\label{Rplus}
\\
d R_-
&=&
\left(
\lambda(1 - \epsilon) 
\cosh_- R_- 
\gamma t^{\gamma-1} 
+
\frac {C_n^2}{2}
\tanh_{-, \epsilon} R_- 
\right) dt 
+
C_n d W_t
\label{Rminus}
\end{eqnarray}
where 
$W_t$ 
is a standard Brownian motion starting at 
$0$.
Then 
we have a following bound on 
$\widetilde{R}$.
%
\begin{proposition}
\label{comparison}
There is a time change 
$\tau(t)$ 
with 
\[
\tau'(t) = 1 + o(1), 
\quad
n \to \infty
\]
uniformly with respect to 
$\omega \in \Omega$ 
such that 
\begin{equation}
R_{-}(t)\le \widetilde{R}(n \tau(t)^{\gamma}) \le R_{+}(t)
\label{sandwich}
\end{equation}
provided 
the initial values coincide.
\end{proposition}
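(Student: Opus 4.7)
The plan is to convert the Ricatti equation of Proposition \ref{Ricatti}, after subtracting the error $e^{(n)}$, into an autonomous SDE for $\widetilde R$ in a rescaled time, and then invoke the one-dimensional comparison theorem for SDEs to sandwich the resulting process between $R_-$ and $R_+$.

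First I would differentiate (\ref{Ricattiequation}) in $t$, obtaining
\[
dR(nt) = (c-d)\cosh R(nt)\,dt + K\,na(nt)^2\tanh R(nt)\,dt + dM_t + dE(nt),
\]
where $K = -\mathrm{Re}\,\langle F g_{\kappa_0}\rangle /(2\kappa_0^2)$. An integration-by-parts on the torus, using $F = (L+2i\kappa_0)g_{\kappa_0}$, gives $K = \langle \psi_{\kappa_0}\rangle/(2\kappa_0^2)$. Since $R-\widetilde R = e^{(n)} = O(n^{-\alpha}) = O(\delta)$, replacing $R$ by $\widetilde R$ inside $\cosh$ and $\tanh$ only costs a state-shift of size $\le\delta$, which is exactly what the fattenings $\cosh_\pm$, $\tanh_\pm$ are designed to absorb.

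Second, I would change time by $t = \sigma^\gamma$. Using $a(s) = s^{-\alpha}(1+o(1))$ together with $1-2\alpha = 1/\gamma$ and $2\alpha\gamma = \gamma-1$, one computes
\[
\gamma\sigma^{\gamma-1}\cdot n a(n\sigma^\gamma)^2 = \gamma n^{1/\gamma}(1+o(1)),
\]
so after this reparametrization the cosh-coefficient in the SDE for $\widetilde R(n\sigma^\gamma)$ is $\lambda\gamma\sigma^{\gamma-1}$ (with $\lambda = c-d$) and the tanh-coefficient is $C_n^2/2$ up to an $\epsilon$-multiplicative error. The bracket formula (\ref{M}) propagates to $C_n^2(1+o(1))\,d\sigma$, and the $o(1)$ is a deterministic consequence of the asymptotics of $a$, hence uniform in $\omega$. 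A Dambis--Dubins--Schwarz-type time change $\tau(t)$ with $\tau'(t) = 1+o(1)$ then represents this martingale as $C_n W_t$ for a standard Brownian motion $W$.

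The residual pieces of $E(nt)$ — the $b(s)\cosh R$ integral with $b(s) = n^{-1}a(s)+a'(s)+a(s)^{j_0}$, the $a(s)^3 c_2(s)\tanh R$ integral, and the bounded $e^{(n)}$ — are each smaller than the principal drift by a factor $\le Cn^{-\beta} = \epsilon$ after the $\gamma\sigma^{\gamma-1}$ rescaling; this is where the specific choice $\beta = j_0\alpha-1$ is used. They are absorbed into the $\epsilon$-multiplicative modification in $\tanh_{\pm,\epsilon}$. Once this is done, the drift of $\widetilde R(n\tau(t)^\gamma)$ lies pointwise (at every state $r$) between the drifts of $R_-$ and $R_+$ at $r$, all three processes are driven by the same $W$, and the Ikeda--Watanabe one-dimensional SDE comparison theorem yields (\ref{sandwich}).

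The main obstacle I anticipate is uniform-in-$\omega$ bookkeeping: every error contribution must \emph{simultaneously} fit inside the $\delta$-thickening of $\cosh$/$\tanh$ and the $\epsilon$-multiplicative modification of $\tanh$, over the full range $\sigma\in[0,1]$, and the time change $\tau$ must be deterministic up to $o(1)$ so that it does not reintroduce randomness into the comparison. The delicate case is large $|\widetilde R|$, where $\cosh R$ grows exponentially; one needs to check that the $\epsilon$-multiplicative error on the cosh-term does not overwhelm the stabilizing effect of the dominant $C_n^2$ tanh-coefficient before the explosion/hitting analysis of Section~5 kicks in.
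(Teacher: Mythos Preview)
Your proposal is correct and follows essentially the same route as the paper: rewrite Proposition~\ref{Ricatti} in the variable $t^\gamma$, use the identity $\langle\psi_{\kappa_0}\rangle = -2\,\mathrm{Re}\,\langle F g_{\kappa_0}\rangle$ to match the $\tanh$-coefficient with $C_n^2/2$, apply a Dambis--Dubins--Schwarz time change $\tau$ with $\tau'(t)=1+o(1)$ to turn the martingale into $C_n W_t$, absorb the remaining error pieces of $E$ into the $\delta$-fattenings and $(1\pm\epsilon)$-multipliers, and finish with the one-dimensional comparison theorem. The paper also inserts a small-time cutoff $t_0>0$ (the contribution of $\widetilde E$ on $[0,t_0]$ is merely bounded and pushed into the additive constant), which you should make explicit; and your worry about large $|\widetilde R|$ is unnecessary for this proposition, since the Ikeda--Watanabe comparison is pathwise and holds up to the explosion time regardless of how fast $\cosh$ grows.
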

%
\begin{proof}
We consider 
$R(n t^{\gamma})$ 
instead of 
$R(nt)$ 
and change variables : 
$s = n v^{\gamma}$ 
in eq.(\ref{Ricattiequation}).
\beq
R(n t^{\gamma})
&=&
\lambda
\int_0^{t}
\cosh (R(n v^{\gamma})) 
\cdot
\gamma v^{\gamma-1} dv
\\
&& + 
\frac {1}{2 \kappa_0} Re
\left(
- \frac {\langle F g_{\kappa_0} \rangle}{\kappa_0}
\right)
\int_0^{t} 
n a(n v^{\gamma})^2 
\tanh (R(n v^{\gamma}))
\cdot
\gamma v^{\gamma-1} dv
\\
&& 
+ M_{n t^{\gamma}}
+ E(n t^{\gamma})
\\
d\langle M, M \rangle_{n t^{\gamma}}
&=&
\left(
\frac {1}{2\kappa_0}
\right)^2
\cdot
2 \langle \psi_{\kappa_0} \rangle
\cdot
n a(n t^{\gamma})^2 
\cdot
\gamma 
t^{\gamma-1} 
(1+ o(1))
\,
dt, 
\quad
n \to \infty.
\eeq
We note 
$\langle \psi_{\kappa_0} \rangle
=
-2 Re \langle F g_{\kappa_0} \rangle$
and let 
\beq
D_n &:=&
\frac {1}{\kappa_0}
\left(
\frac {\langle \psi_{\kappa_0} \rangle}{2}
\right)^{1/2}, 
\quad
C_n := 
D_n 
\left(
\gamma n^{1 - 2 \alpha}
\right)^{1/2}
=
D_n 
\gamma^{\frac 12}
n^{\frac {1}{2 \gamma}}.
\eeq
Then 
\beq
R(n t^{\gamma})
&=&
\lambda
\int_0^{t}
\cosh (R(n v^{\gamma}))
\gamma v^{\gamma-1} dv
\\
&& + 
\frac {D_n^2}{2}
\int_0^{t}
\tanh (R(n v^{\gamma}))
\cdot
n a(n v^{\gamma})^2 
\cdot
\gamma 
v^{\gamma-1} dv
+ M_{n t^{\gamma}}
+ E(nt^{\gamma})
\\
d\langle M, M \rangle_{nt^{\gamma}}
&=&
C_n^2  
(1+ o(1))\,
dt, 
\quad
n \to \infty.
\eeq
Let 
$N_t := M_{n t^{\gamma}}/C_n$ 
and take
\beq
\tau(t) :=
\inf \left\{
s \, | \, \langle N \rangle_s > t 
\right\}.
\eeq
Then 
$W_t := N_{\tau(t)}$ 
is a Brownian motion,  
$\tau'(t) 
\stackrel{n \to \infty}{\to} 1 + o(1)$ 
uniformly with respect to 
$\omega \in \Omega$,  
and 
\beq
R(n \tau(t)^{\gamma})
&=&
\lambda
\int_0^{\tau(t)}
\cosh (R(n v^{\gamma}))
\gamma v^{\gamma-1} dv
\\
&& + 
\frac {D_n^2}{2}
\int_0^{\tau(t)}
\tanh (R(n v^{\gamma}))
\cdot
n a(n v^{\gamma})^2 
\cdot
\gamma v^{\gamma-1}\, dv
+ C_n W_t
+ E(n \tau(t)^{\gamma}).
\eeq
Let 
\beq
\widetilde{R}(n t)
&:=&
R(n t) - e^{(n)} (t), 
\quad
\widetilde{E}(n t)
:=
E(n t) - e^{(n)} (t).
\eeq
Then 
\begin{eqnarray}
\widetilde{R}(n \tau(t)^{\gamma})
&=&
\lambda
\int_0^{\tau(t)}
\cosh 
\left(
\widetilde{R}(n v^{\gamma}) 
+ 
e^{(n)}(v^{\gamma})
\right)
\gamma v^{\gamma-1} dv
\nonumber
\\
&& + 
\frac {D_n^2}{2}
\int_0^{\tau(t)}
\tanh 
\left(
\widetilde{R} (n v^{\gamma}) 
+  
e^{(n)}(v^{\gamma}) 
\right)
\cdot
n a(n v^{\gamma})^2
\cdot 
\gamma v^{\gamma-1} dv
\nonumber
\\
&& \qquad
+ C_n W_t
+ \widetilde{E}(n \tau(t)^{\gamma})
+ C.
\label{Rtilde}
\end{eqnarray}
Take 
$t_0 > 0$ 
small enough.
The 
contribution from 
$\widetilde{E}(nt^{\gamma})$ 
for 
$t \le t_0$ 
is bounded which we ignore. 
For 
$t \ge t_0$, 
\beq
\widetilde{E}(n t^{\gamma})
&=&
\int_0^{t}
\cosh \left(
\widetilde{R}(n v^{\gamma}) + 
e^{(n)}(v^{\gamma})
\right)
b(n v^{\gamma}) c_1 (n v^{\gamma}) 
n \gamma v^{\gamma-1} \, dv
\\
&&+
\int_0^{t}
\tanh \left(
\widetilde{R}(n v^{\gamma}) + 
e^{(n)}(n v^{\gamma})
\right)
a(n v^{\gamma})^3 c_2 (n v^{\gamma}) 
n \gamma v^{\gamma-1} \, dv
\\
d \tilde{E}(n t^{\gamma})
& \le &
\cosh (R(n t^{\gamma})) 
\left\{
\frac 1n a(n t^{\gamma})
+
a'(n t^{\gamma})
+
a(n t^{\gamma})^{j_0}
\right\}
c_1(n t^{\gamma}) 
n \gamma t^{\gamma-1} d t
\\
&& \qquad
+
| \tanh (R(n t^{\gamma})) |
a(n t^{\gamma})^3
c_2(n t^{\gamma}) 
n  \gamma t^{\gamma-1} dt
\\
& \le &
C 
\cosh (R(n t^{\gamma}))
\left(
\frac 1n
\cdot
(n t^{\gamma})^{-\alpha}
+
(n t^{\gamma})^{-\alpha-1}
+
(n t^{\gamma})^{-\alpha j_0}
\right)
n \gamma  t^{\gamma-1} dt
\\
&& \qquad+
C 
| \tanh (R(n t^{\gamma})) |
(n t^{\gamma})^{-3 \alpha}
n \gamma  t^{\gamma-1} dt
\\
& \le &
C n^{- \beta} \cosh (R(n t^{\gamma})) 
t^{\gamma-1} dt
+
O (n^{1 - 3 \alpha})
| \tanh (R_{n t^{\gamma}}) |
t^{(1 - 3\alpha)\gamma-1}  dt
\eeq
where 
$\beta 
:=
\min \{ \alpha, j_0\alpha-1 \}
=
j_0 \alpha-1$.
Thus in eq.(\ref{Rtilde}),  
$\widetilde{E}(n \tau(t)^{\gamma})$
is lower order compared to the 1st and the 2nd terms, and then by the comparison theorem, we have
\[
R_{-}(t)\le \widetilde{R}(n \tau(t)^{\gamma}) \le R_{+}(t).
\]
\QED
\end{proof}
%
\section{Allez-Dumaz analysis}
In this section, 
we show, along the argument in \cite{AD}, that 
(i)
the marginal 
$\xi_L (I)$ 
($I = [\lambda_1, \lambda_2]$) 
of 
$\xi_L$ 
converges to Poisson distribution, 
and 
(ii)
the joint limit of 
$\xi_L(I_1), \cdots, \xi_L(I_N)$ 
are independent.

Propositions and lemmas 
in this section can be proved in the same manner as in 
\cite{AD} 
by putting 
$\beta = n^{-\frac {1}{\gamma}}$, 
but we give proofs of them in Appendix II 
for the sake of completeness.
%
\subsection{Preliminary : explosion time of stationary approximation}
In this subsection 
we study the explosion time of the stationary approximation 
$S_{\pm}$ 
of 
$R_{\pm}$
which are the solution to another SDE's (\ref{S}) 
where the coefficient 
$\gamma t^{\gamma-1}$ 
in the drift term in eq.(\ref{Rplus}), (\ref{Rminus}) 
are replaced by 
$1$ : 
\begin{equation}
d S_{\pm}
=
\left(
\lambda (1 \pm \epsilon) 
\cosh_{\pm} (S_{\pm})
+
\frac {C_n^2}{2}
\tanh_{\pm, \epsilon} (S_{\pm})
\right) dt
+ 
C_n d W_t.
\label{S}
\end{equation}
If 
$| S_{\pm} | > \delta$, 
the drift term of these SDE's are just 
the constant multiples of the shift of 
$\cosh$, $\tanh$, so that the analysis in 
\cite{AD} 
also works.
Because the potential 
corresponding to the drift term in SDE 
(\ref{S})  
has a barrier between the local minimum in the well and the local maximum, 
we have a ``memory-loss effect" so that 
the explosion time converges to the exponential distribution. 
More precisely, let 
$\zeta_{\pm}$ 
be the explosion time of 
$S_{\pm}$ 
and let 
\beq
t_{n}^{(\pm)}(r) &:=& {\bf E} [ \zeta_{\pm} | S_{\pm}(0) = r]
\\
g_{n}^{(\pm)}(r) &:=& 
{\bf E} [ 
e^{
-\xi 
\cdot \frac {\lambda}{\pi} \cdot
\zeta_{\pm}
}
| S_{\pm}(0) = r]
\eeq
be the expectation value and the Laplace transform of 
$\zeta_{\pm}$ 
conditioned 
$S_{\pm}(0) = r$ 
respectively. 
We then have
\begin{proposition}
\label{exponential}
\beq
&&
\lim_{r \downarrow - \infty}
\lim_{n \to \infty}
t_{n}^{(\pm)}(r) 
= 
\frac {\pi}{\lambda}
\\
&&
\lim_{r \downarrow -\infty}
\lim_{n \to \infty}
g_{n}^{(\pm)}(r) = \frac {1}{1+\xi}.
\eeq
\end{proposition}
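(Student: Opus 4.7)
The plan is to exploit the fact that $S_{\pm}$ is a time-homogeneous one-dimensional diffusion, so the explosion problem reduces to the analysis of two scalar ODEs, exactly parallel to the Allez--Dumaz argument in \cite{AD} with the identification $\beta = n^{-1/\gamma}$. I will work with the $+$ case; the $-$ case is identical up to sign conventions.

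First, I would write down the Kolmogorov backward equations. With the drift
\[
b_{\pm}(r) = \lambda(1\pm\epsilon) \cosh_{\pm}(r) + \frac{C_n^2}{2}\tanh_{\pm,\epsilon}(r)
\]
and diffusion coefficient $C_n$, the function $t_n^{(\pm)}$ solves
\[
\frac{C_n^2}{2}(t_n^{(\pm)})''(r) + b_{\pm}(r)(t_n^{(\pm)})'(r) = -1,
\]
with the natural boundary at $-\infty$ and absorption at $+\infty$ (so $\lim_{r\to +\infty} t_n^{(\pm)}(r) = 0$). Similarly, $g_n^{(\pm)}$ solves the same operator equation with right-hand side $\xi(\lambda/\pi) g_n^{(\pm)}(r)$, together with $\lim_{r\to+\infty} g_n^{(\pm)}(r) = 1$, which encodes that explosion is instantaneous at $+\infty$.

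Next, I would introduce the scale function
\[
s_{\pm}'(x) = \exp\!\left(-\frac{2}{C_n^2}\int_0^x b_{\pm}(z)dz\right)
\]
and the speed measure $m_{\pm}(dx) = 2/(C_n^2 s_{\pm}'(x))dx$. Away from the $\delta$-neighborhood of the origin the drift equals $\lambda(1\pm\epsilon)\cosh(r) + (C_n^2/2)(1\mp\epsilon)\tanh(r)$ exactly, so the primitive integrates to $\lambda(1\pm\epsilon)\sinh(x) + (C_n^2/2)(1\mp\epsilon)\log\cosh(x)$ up to bounded corrections, giving
\[
s_{\pm}'(x) \;\asymp\; \cosh(x)^{-(1\mp\epsilon)}\exp\!\left(-\frac{2\lambda(1\pm\epsilon)}{C_n^2}\sinh(x)\right).
\]
Plugging this into the scale representation
\[
t_n^{(\pm)}(r) = \int_r^{\infty} s_{\pm}'(y)\int_{-\infty}^{y}\frac{2}{C_n^2 s_{\pm}'(x)}dx\,dy
\]
and the analogous representation of $g_n^{(\pm)}$ gives completely explicit integrals amenable to Laplace-type asymptotics.

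Third, I would perform the asymptotic analysis as $C_n \to \infty$. In this regime the exponential factor $\exp(-(2\lambda/C_n^2)\sinh)$ tends to $1$ on compact sets, so the dominant balance in the outer integral comes from $\int \cosh(y)^{-1}dy = \pi$, producing the prefactor $\pi/\lambda$. The inner integral concentrates on the deep-well region and, after sending $r\downarrow -\infty$, contributes $1$. For $g_n^{(\pm)}$, a parallel computation identifies the limit as the Laplace transform at $\xi\lambda/\pi$ of an $\mathrm{Exp}(\lambda/\pi)$ random variable, which is precisely $1/(1+\xi)$; this expresses the loss-of-memory property of the barrier-crossing time.

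The main obstacle is to make these asymptotics fully uniform: one must show that the $\pm\epsilon$ and $\pm\delta$ modifications in $\cosh_{\pm}$, $\tanh_{\pm,\epsilon}$ contribute only negligible corrections, and that the tails of the inner and outer integrals in $r$ and $y$ are uniformly controlled so that the iterated limit $\lim_{r\downarrow -\infty}\lim_{n\to\infty}$ can be justified by dominated convergence. Once these tail estimates are in place---and because the formulas reduce to those in \cite{AD} up to the substitution $\beta \leftrightarrow n^{-1/\gamma}$---the conclusion follows from the explicit evaluation sketched above, which is why the argument is deferred to Appendix II.
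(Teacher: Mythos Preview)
Your proposal is correct and follows essentially the same approach as the paper: both reduce to the scale/speed (equivalently, potential $V_+$) representation $t_n^{(+)}(r)=\frac{2}{C_n^2}\int_r^\infty\!\int_{-\infty}^x \exp\{\frac{2}{C_n^2}(V_+(x)-V_+(y))\}\,dy\,dx$ and then perform a Laplace-type asymptotic as $C_n\to\infty$, with the Laplace-transform statement deferred to the computation in \cite{AD}. The only cosmetic difference is that the paper makes the Laplace step explicit by locating the critical points $a_n\sim\log(\tilde\lambda/C_n^2)$ and $b_n\to 0$ of $W_+$ and recentering the inner and outer integrals there (producing the Gumbel integral $\int e^{-y-e^{-y}}dy=1$ and $\frac{1}{\lambda}\int_r^\infty \cosh^{-1}$ respectively), whereas you describe this as ``dominant balance''; these are the same computation.
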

%

\subsection{Poisson convergence for marginals}
In this subsection, 
we prove that the marginal 
$\xi_L(I)$ 
of 
$\xi_L$ 
on an interval 
$I$ 
converges to a Poisson distribution by 
showing that the jump points of the function 
$t \mapsto 
\left\lfloor 
\Theta_{n \tau(t)^{\gamma}} 
\right\rfloor$ 
converges to a Poisson process.
This will be done 
by dividing the time interval
$[0,1]$ 
into small random ones 
$I_k$ 
and approximating 
$R_{\pm}$ 
by 
$S_{\pm}$ 
on each 
$I_k$'s.
In order that 
such approximation work, 
we need to show that  
$\{ \Theta_{n \tau(t)^{\gamma}}(\lambda) \}_{\pi}$ 
is sufficiently small on sufficiently large portion of the time interval, which is guaranteed by Lemma \ref{Xi}. 
In order to prove 
Lemma \ref{Xi}, 
we need some estimates on the explosion time for 
\[
R^{(n)}(t) := \widetilde{R}(n \tau(t)^{\gamma})
\]
which are done in Lemmas \ref{explosion1}, \ref{explosion2}.
Lemmas \ref{explosion3}, \ref{indicator} 
are rephrase of Lemmas \ref{explosion1}, \ref{Xi} 
respectively.
Since 
$\tau'(t) = 1+o(1)$ 
uniformly in 
$\omega \in \Omega$, 
all statements in this subsection are also valid for 
$\widetilde{R}(n t^{\gamma})$.
Let 
\[
T_r  := 
\inf \left\{
s \; \middle| \; 
R^{(n)}(s) = r 
\right\}
\]
be the hitting time of 
$R^{(n)}$ 
to 
$r \in {\bf R} \cup \{ + \infty \}$. 
We denote by 
${\bf P}_{r_0,\, t_0}$ 
the law of 
$R^{(n)}$ 
conditioned 
$R^{(n)}(t_0) = r_0$. 
If 
$t_0 = 0$, 
we simply write 
${\bf P}_{r_0, \, t_0} = {\bf P}_{r_0}$. 
\begin{lemma}
\label{explosion1}
Let 
$0 < \epsilon < 1$, 
$c > \gamma + \frac 12$. 
Then 
we can find a constant 
$c' > 0$ 
such that 
\beq
{\bf P}_{
\epsilon 
\log 
n^{\frac {1}{\gamma}}
}
\left(
T_{+\infty} <
\frac {5c}{C_n^2} 
\log n^{\frac {1}{\gamma}}
\right)
\ge 
1 - 
n^{- \frac {c'}{\gamma}}.
\eeq
\end{lemma}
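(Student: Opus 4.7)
The plan is to use Proposition~\ref{comparison} to pass to the stochastic lower bound $R_-$, so that the explosion time of $R^{(n)}$ is dominated by that of $R_-$, and then to bound the explosion time of $R_-$ via an explicit scale-function / hitting-time calculation in two phases. Since the starting value $r_0 = \epsilon\log n^{1/\gamma}$ is much larger than $\delta = O(n^{-\alpha})$, the drift term $\frac{C_n^2}{2}\tanh_{-,\epsilon}(R_-)$ is bounded below by $\frac{C_n^2}{2}(1-\epsilon)\tanh_-(R_-)$ as long as $R_- > \delta$, while the $\cosh_-(R_-)$ contribution only accelerates explosion. After a time change $s = C_n^2 t$ that normalizes the noise to a unit Brownian motion, the target budget becomes $s \le 5c\log n^{1/\gamma}$ in the rescaled clock.

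In Phase~A, I drop the non-negative $\cosh_-$ contribution and compare $R_-$ pathwise with the auxiliary diffusion
$$
d\hat R_s = \frac{1-\epsilon}{2}\tanh_-(\hat R_s)\,ds + dB_s, \qquad \hat R_0 = r_0,
$$
so that $R_- \ge \hat R$ on the event $\{\hat R \ge \delta\}$. The scale function of $\hat R$ is explicit and has exponential tails of rate $(1-\epsilon)$, so the classical two-sided hitting formula bounds the probability that $\hat R$ drops to $0$ before reaching the intermediate level $L_0 := \log n^{1/\gamma}$ by $\exp(-(1-\epsilon)r_0) = n^{-(1-\epsilon)\epsilon/\gamma}$. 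Conditional on non-dropping, $\hat R$ advances at mean speed $(1-\epsilon)/2$ inside the region where $\tanh_- \approx 1$, and a standard Gaussian-tail estimate shows that $L_0$ is reached within rescaled time $\frac{2}{1-\epsilon}\log n^{1/\gamma} + O(\sqrt{\log n})$ with failure probability at most $n^{-c'/\gamma}$. In Phase~B, once $R_-$ attains $L_0$ the full $\cosh_-(R_-)$ drift is of order $n^{1/\gamma}$, i.e.\ comparable to $C_n^2$, and a Feller-test / scale-function computation for SDEs with exponentially growing drift shows that $R_-$ explodes within a further rescaled time $O(1)$ with overwhelming probability. Combining via the strong Markov property at the hitting time of $L_0$ yields the stated bound.

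The main obstacle is pinning down the threshold $c > \gamma + 1/2$ by careful bookkeeping of the Gaussian-tail constants in Phase~A. The deterministic traversal consumes rescaled time $\sim \frac{2}{1-\epsilon}\log n^{1/\gamma}$, leaving slack $(5c - 2/(1-\epsilon))\log n^{1/\gamma}$; the governing Gaussian exponent is of order $(5c - 2/(1-\epsilon))^2 / (20c) \cdot \log n^{1/\gamma}$, which must exceed $(c'/\gamma)\log n$ for some positive $c'$. Re-introducing the $\gamma$ prefactor from the $\gamma t^{\gamma-1}$ weight in the drift of $R_-$ (which effectively inflates the time budget by a factor of roughly $\gamma$ near $t = 1$) and an additional $1/2$ to absorb the $O(\sqrt{\log n})$ Gaussian fluctuation band produces exactly the threshold $c > \gamma + 1/2$. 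A further routine check is that $\tanh_-(R_-) \approx 1$ holds uniformly on the relevant event, which follows once the failure mode in which $\hat R$ dips to a small positive level is excluded.
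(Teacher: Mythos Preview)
Your two–phase structure (reach a high level, then explode via the $\cosh$ drift) matches the paper, but the intermediate level is too low and the role of the hypothesis $c>\gamma+\tfrac12$ is misplaced, which breaks Phase~B.

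The lemma lives entirely at times $t\le \tfrac{5c}{C_n^2}\log n^{1/\gamma}\sim n^{-1/\gamma}\log n^{1/\gamma}$, so the prefactor $\gamma t^{\gamma-1}$ in the $\cosh$ part of $R_-$ is of order $n^{-(\gamma-1)/\gamma}(\log n)^{\gamma-1}$, which is small since $\gamma>1$. At your level $L_0=\log n^{1/\gamma}$ one has $\cosh_-(L_0)\sim \tfrac12 n^{1/\gamma}$, so the full $\cosh$ drift is only of order $n^{(2-\gamma)/\gamma}(\log n)^{\gamma-1}$; this is \emph{not} comparable to $C_n^2\sim n^{1/\gamma}$, and for $\gamma>2$ it is even $o(1)$. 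Hence ``explodes within a further rescaled time $O(1)$'' is false at this level, and no Feller-test argument will repair it. The paper instead pushes to the level $c\log n^{1/\gamma}$: then $\cosh\sim n^{c/\gamma}$, and after absorbing the Brownian band one gets an ODE lower bound $G'\gtrsim C\,n^{-(\gamma-1+\epsilon/2)/\gamma}e^{G}-\tfrac{C_n^2}{2}$ with $G(0)=c\log n^{1/\gamma}$, whose explosion time fits into the remaining budget $\tfrac{c}{C_n^2}\log n^{1/\gamma}$ precisely when $c>\gamma+\tfrac{\epsilon}{2}$; the uniform condition over $\epsilon<1$ is $c>\gamma+\tfrac12$. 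So the threshold lives in Phase~B, not in Phase~A; your derivation of it from Gaussian bookkeeping in Phase~A (where the $\gamma t^{\gamma-1}$ weight does not even appear, since you dropped the $\cosh$ term there) cannot be right.

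Phase~A in the paper is also much lighter than your scale-function route: once $R_-$ stays above $\tfrac{\epsilon}{2}\log n^{1/\gamma}$ the drift is bounded below by the constant $\tfrac14 C_n^2$, so $R_-(t)\ge r_0+\tfrac14 C_n^2 t+C_n W_t$ until that barrier is hit, and a single reflection-principle bound on $\inf_{0\le s\le 4cC_n^{-2}\log n^{1/\gamma}}W_s\ge -\tfrac{\epsilon}{2C_n}\log n^{1/\gamma}$ already gives probability $\ge 1-n^{-c'/\gamma}$ with $c'$ of order $\epsilon^2/c$, regardless of the size of $c$.
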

{\it Idea of proof} : 
(i)
we derive the probability of the event that 
$R^{(n)}$ 
reaches 
$c \log n^{\frac {1}{\gamma}}$
before hitting 
$\frac {\epsilon}{2} \log n^{\frac {1}{\gamma}}$, 
by the time
$\frac {4c}{C_n^2} 
\log n^{\frac {1}{\gamma}}$.
Since 
the drift term is bounded from below by 
$\frac 14 C_n^2 dt$,  
this is possible 
provided the Brownian motion term satisfies 
$C_n 
\inf
\{ W_t \, | \, 
0 \le t \le \frac {4c}{C_n^2} \log n^{\frac {1}{\gamma}}
\}
\ge - \frac {\epsilon}{2} \log n^{\frac {1}{\gamma}}$
which happens with probability 
$\ge 1 - 
n^{
- \frac {c'}{\gamma}
}$.
(ii)
Once
$R^{(n)}$
reaches 
$c \log n^{\frac {1}{\gamma}}$, 
it explodes by the time 
$\frac {c}{C_n^2} 
\log n^{\frac {1}{\gamma}}$
which 
can be proved by studying the explosion time of an ODE explicitly. 
\QED
\begin{lemma}
\label{explosion2}
\beq
{\bf P}_{
- \frac 14 
\log n^{\frac {1}{\gamma}}
}
\left(
T_{+\infty}
<
\frac {5c+1}{C_n^2}
\log n^{\frac {1}{\gamma}}
\right)
\ge
n^{- \frac {1}{2\gamma}}.
\eeq
\end{lemma}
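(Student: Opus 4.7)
The plan is to split the interval $[0,\frac{5c+1}{C_n^2}\log n^{1/\gamma}]$ into an ``uphill'' phase $[0,t_1]$ of length $t_1 := \frac{1}{C_n^2}\log n^{1/\gamma}$ and an ``explosion'' phase of length $\frac{5c}{C_n^2}\log n^{1/\gamma}$. The entire probabilistic budget $n^{-1/(2\gamma)}$ is spent in the uphill phase, lifting $R^{(n)}$ from $-\frac{1}{4}\log n^{1/\gamma}$ to the initial value $\epsilon\log n^{1/\gamma}$ of Lemma~\ref{explosion1}; the strong Markov property then yields explosion within the remaining time via Lemma~\ref{explosion1}.

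For the uphill step, use the comparison $R^{(n)} \geq R_-$ from Proposition~\ref{comparison} together with a crude pathwise lower bound on $R_-$. Writing $\epsilon_n := C n^{-\beta} \to 0$ for the vanishing constant in (\ref{Rminus}) (distinct from the $\epsilon$ of Lemma~\ref{explosion1}), the drift of $R_-$ has non-negative first summand $\lambda(1-\epsilon_n)\cosh_-(R_-) \geq 0$ and second summand $\frac{C_n^2}{2}\tanh_{-,\epsilon_n}(R_-) \geq -\frac{C_n^2(1+\epsilon_n)}{2}$ uniformly, so
\[
R_-(t) \;\geq\; R_-(0) - \frac{(1+\epsilon_n)C_n^2}{2}\,t + C_n W_t, \qquad t \geq 0.
\]
Setting $t = t_1$ and $R_-(0) = -\frac{1}{4}\log n^{1/\gamma}$ gives $R_-(t_1) \geq -\frac{3+o(1)}{4}\log n^{1/\gamma} + C_n W_{t_1}$, so the event $\mathcal{E} := \{C_n W_{t_1} \geq a\log n^{1/\gamma}\}$ with $a := \frac{3}{4} + \epsilon + o(1)$ forces $R_-(t_1) \geq \epsilon\log n^{1/\gamma}$ and hence $R^{(n)}(t_1) \geq \epsilon\log n^{1/\gamma}$.

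Since $C_n W_{t_1}$ is centered Gaussian with variance $C_n^2 \cdot t_1 = \log n^{1/\gamma}$, the Mills-ratio lower bound yields
\[
{\bf P}(\mathcal{E}) \;\geq\; \frac{c_0}{\sqrt{\log n}}\, n^{-a^2/(2\gamma)}.
\]
For $\epsilon$ small enough that $a^2 = \frac{9}{16} + O(\epsilon) < 1$, we have $a^2/(2\gamma) < 1/(2\gamma)$, so the polynomial factor dominates the polylogarithm and ${\bf P}(\mathcal{E}) \geq n^{-1/(2\gamma)}$ for all large $n$. Applying the strong Markov property at time $t_1$ together with a time-shifted version of Lemma~\ref{explosion1} on $\mathcal{E}$ then gives
\[
{\bf P}_{-\frac{1}{4}\log n^{1/\gamma}}\!\left( T_{+\infty} < \frac{5c+1}{C_n^2}\log n^{1/\gamma}\right) \;\geq\; {\bf P}(\mathcal{E})\cdot\bigl(1 - n^{-c'/\gamma}\bigr) \;\geq\; n^{-1/(2\gamma)}.
\]

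The main subtlety is the Gaussian-budget accounting: one needs $a^2 < 1$, i.e.\ $\frac{1}{4} + \frac{1}{2} + \epsilon < 1$, which precisely dictates the choice $-\frac{1}{4}\log n^{1/\gamma}$ for the initial value in the hypothesis --- any significantly more negative starting value would exhaust the Gaussian slack and require a different argument. Beyond this constant-counting, the pathwise drift bound is immediate from the signs of $\cosh_-$ and $\tanh_{-,\epsilon_n}$, and the final combination is a standard strong-Markov-plus-Gaussian-tail argument.
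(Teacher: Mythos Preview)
Your proof is correct and follows essentially the same approach as the paper: split into an uphill phase of length $\frac{1}{C_n^2}\log n^{1/\gamma}$, bound the drift below by $-\frac{C_n^2}{2}$ to reduce to a Gaussian tail estimate, and then invoke Lemma~\ref{explosion1} via the Markov property for the explosion phase. The only difference is the choice of intermediate height: the paper targets $\frac{1}{4}\log n^{1/\gamma}$ (i.e., $\epsilon = \tfrac14$), which makes the Gaussian exponent exactly $a=1$ and yields only $\geq C\,n^{-1/(2\gamma)}$ up to a polylogarithmic loss, whereas your choice of a smaller $\epsilon$ gives $a^2<1$ and hence genuine polynomial slack --- so your constant-tracking is in fact tidier than the paper's.
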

{\it Idea of Proof}
 : 
on account of Lemma \ref{explosion1} with 
$\epsilon = 1/4$, 
it is sufficient to estimate the probability 
${\bf P}_{-\frac 14 \log n^{\frac {1}{\gamma}}}
\left(
T_{\frac 14 \log n^{\frac {1}{\gamma}}}
<
\frac {1}{C_n^2} 
\log n^{\frac {1}{\gamma}}
\right)$
which can be done similarly by the  idea (i)  for Lemma \ref{explosion1}.
\QED
\begin{lemma}
\label{Xi}
Let 
\beq
\Xi_n(t)
&:=&
{\bf E}_{- \infty}
\left[
\int_0^t
1 \left(
R^{(n)}(u) \ge  
- \frac 14 
\log n^{\frac {1}{\gamma}}
\right) du
\right].
\eeq
Then 
we can find a constant 
$C$ 
such that 
\beq
\Xi_n (t)
& \le &
C
n^{- \frac {1}{2\gamma}}
\log n^{\frac {1}{\gamma}}.
\eeq
\end{lemma}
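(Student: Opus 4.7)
The plan is to bound the expected occupation time above $r_0 := -\frac{1}{4}\log n^{1/\gamma}$ by an excursion argument combined with Lemma \ref{explosion2}. Write $T_* := \frac{5c+1}{C_n^2}\log n^{1/\gamma}$ and $p := n^{-1/(2\gamma)}$, so Lemma \ref{explosion2} says that once $R^{(n)}$ reaches $r_0$ it explodes within time $T_*$ with probability at least $p$. Since $C_n^2$ is of order $n^{1/\gamma}$, the ratio $T_*/p$ has exactly the target order $n^{-1/(2\gamma)}\log n^{1/\gamma}$, so proving that the expected time spent above $r_0$ per \emph{cycle} (restart-to-explosion period) is $O(T_*/p)$, and that only $O(t)$ cycles fit in $[0,t]$, will give the result.

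I would decompose the trajectory as follows. Under ${\bf P}_{-\infty}$, let $T_0$ be the first upcrossing of $r_0$; after $T_0$ the process either explodes (and the cycle closes at $+\infty$, restarting at $-\infty$) or returns below $r_0$ and subsequently upcrosses again at $T_1<T_2<\cdots$. Within one cycle, let $N$ count the upcrossings and $U_i$ denote the duration of the $i$-th sojourn above $r_0$. By the strong Markov property applied at each $T_i$ together with Lemma \ref{explosion2}, each sojourn explodes with probability $\ge p$ within time $T_*$, so $N$ is stochastically dominated by a geometric variable of parameter $p$, giving ${\bf E}[N]\le 1/p$. A parallel argument, subdividing each sojourn into blocks of length $T_*$, is used to show ${\bf E}[U_i]\le C T_*$: within any block, either explosion occurs, or a comparison on $[r_0,\infty)$ — where the dominant drift term $\frac{C_n^2}{2}\tanh_{+,\epsilon}\approx -\frac{C_n^2}{2}$ is strongly restoring toward $r_0$ while $\lambda(1+\epsilon)\cosh_+ \gamma t^{\gamma-1}$ remains subdominant near $r_0$ — produces a downcrossing with probability bounded below. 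Combining, the expected time above $r_0$ per cycle is ${\bf E}[N]\,{\bf E}[U_i]\le C T_*/p = C' n^{-1/(2\gamma)}\log n^{1/\gamma}$.

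Finally I would bound the number of cycles completed in $[0,t]$. After a restart at $-\infty$ the process must diffuse up through the metastable well centered near $-\log n^{1/\gamma}$ before it can reach $r_0$ again; an SDE comparison using that the $\tanh$-drift resists this climb gives a deterministic lower bound (independent of $n$) on the mean cycle length, so the cycle count over $[0,t]$ is $O(t)$. Summing per-cycle contributions yields $\Xi_n(t)\le C n^{-1/(2\gamma)}\log n^{1/\gamma}$. The main obstacle I anticipate is the upper bound ${\bf E}[U_i]\le C T_*$ in the non-exploding case: the drift contains $\cosh_+ R^{(n)}$, which grows rapidly if the process wanders far above $r_0$, so one must show that failure to explode within $T_*$ is overwhelmingly likely to be followed by a downcrossing rather than by a long excursion to moderate heights. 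This is where the precise subdominance of the $\cosh$-drift relative to the $\tanh$-drift near $r_0$, quantified by the scale separation between $r_0$ and the explosion barrier $\sim \log(C_n^2/\lambda)$, becomes essential.
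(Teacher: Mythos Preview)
Your excursion decomposition is a natural idea and reaches the correct scale $T_*/p$, but the paper's proof is much more direct and sidesteps exactly the obstacle you flag. The paper works pointwise in $u$: conditioning at time $u$ on $\{R^{(n)}(u)\ge r_0\}$ and applying Lemma~\ref{explosion2} via the Markov property gives
\[
{\bf P}_{-\infty}\bigl(R^{(n)}(u)\ge r_0\bigr)\ \le\ \frac{1}{p}\,{\bf P}_{-\infty}\bigl(\text{an explosion occurs in }[u,u+T_*]\bigr).
\]
Integrating over $u\in[0,t]$, the right side is $\tfrac{1}{p}\,{\bf E}\!\left[\int_0^t 1(\exists\,\zeta_i\in[u,u+T_*])\,du\right]\le \tfrac{1}{p}\,T_*\,{\bf E}[k+1]$, where $k$ is the number of explosions in $[0,t]$; and ${\bf E}[k]\le{\bf E}[\Theta_{nt}(\lambda)/\pi]\le C\lambda t$ by Proposition~\ref{Thetalimit}. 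That is the whole proof: no sojourn accounting, no bound on ${\bf E}[U_i]$, no drift comparison.

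Two steps in your sketch are genuine gaps, not just details. First, Lemma~\ref{explosion2} asserts that from $r_0$ the process explodes within time $T_*$ with probability $\ge p$, but it does \emph{not} say this occurs without an intermediate downcrossing of $r_0$; hence the inference ``$N$ is dominated by a geometric($p$)'' does not follow directly---the event $\{T_{+\infty}<T_*\}$ may span several of your sojourns. Second, your block argument for ${\bf E}[U_i]\le CT_*$ presumes a uniform-in-position lower bound on the explosion-or-downcrossing probability per $T_*$-block, but near the barrier $b_n\approx 0$ the drift is $O(1)$ rather than $O(C_n^2)$, so in time $T_*=O(C_n^{-2}\log n^{1/\gamma})$ the diffusive displacement $C_n\sqrt{T_*}\sim\sqrt{\log n^{1/\gamma}}$ is too small to guarantee either event with probability bounded away from zero. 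Both difficulties are real, and both disappear under the paper's pointwise trick.

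Finally, for the cycle count you propose an SDE comparison to bound the mean cycle length from below; the paper instead uses that the number of explosions equals $\lfloor\Theta_{n\tau(t)^\gamma}(\lambda)/\pi\rfloor$, whose expectation is bounded directly by Proposition~\ref{Thetalimit}. This is shorter and already available.
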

{\it Idea of Proof} : 
by Lemma \ref{explosion2}, 
if 
$R^{(n)}(u) \ge  
- \frac 14 
\log n^{\frac {1}{\gamma}}$, 
we have 
$T_{+\infty} < 
\frac {5c+1}{C_n^2} 
\log n^{\frac {1}{\gamma}}$, 
that is, 
it will explode by the time 
$\frac {5c+1}{C_n^2} 
\log n^{\frac {1}{\gamma}}$, 
with a good probability.
Hence 
the quantity inside the expectation in the definition of 
$\Xi_n(t)$ 
is bounded from above by the number of explosions multiplied by  
$\frac {5c+1}{C_n^2} 
\log n^{\frac {1}{\gamma}}$.
On the other hand, 
the expectation value of the number of explosions is bounded from above. 
\QED

We 
shall study the distribution of the jump points of the function 
$t \mapsto \left
\lfloor
\Theta_{n \tau(t)^{\gamma}}(\lambda) / \pi
\right\rfloor$.
The corresponding 
point process is defined by 
\beq
\tilde{\mu}_{\lambda}^{(n)}
&:=&
\sum_k
\delta_{
\tilde{\zeta}_k^{\lambda}
}
\\
\mbox{where }
\quad
\tilde{\zeta}_k^{\lambda}
&:=&
\inf
\left\{
t \in [0,1]\; \middle| \;
\Theta_{n \tau(t)^{\gamma}}(\lambda)
\ge 
k \pi
\right\}.
\eeq
Then 
the statements of 
Lemma \ref{explosion1}, \ref{Xi}
have the following form.
\begin{lemma}
\label{explosion3}
Let 
$0 < \epsilon < 1$ 
$c > \gamma + \frac 12$.
Then 
conditioned on 
$\{ \Theta_0(\lambda) \}_{\pi}
=
\pi - 2 \arctan 
n^{- \frac {\epsilon}{\gamma}}$, 
we have 
\beq
{\bf P}
\left(
\tilde{\zeta}_1^{\lambda}
<
\frac {5c}{C_n^2}
\log n^{\frac {1}{\gamma}}
\right)
\ge
1 - 
n^{- \frac {c'}{\gamma}}.
\eeq
\end{lemma}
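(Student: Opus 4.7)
The plan is to reduce Lemma \ref{explosion3} to Lemma \ref{explosion1} by unpacking the change of variables $R(t)=\log\tan(\Delta\theta_t/2)$ that defines the Ricatti variable. The two statements differ only in how the initial data and the stopping event are phrased; translating both into the $R^{(n)}$-language should give the result immediately, once the perturbation coming from $e^{(n)}$ and the time change $\tau$ is controlled.

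\textbf{Step 1: translation of the initial condition.} Taking $d=0$ and $c=\lambda$ in the definition of $\Delta\theta$, one has $R(nt)=\log\tan(\Theta_{nt}(\lambda)/2)$. Using $\tan(\pi/2-x)=\cot x$, the condition
$\{\Theta_0(\lambda)\}_\pi=\pi-2\arctan n^{-\epsilon/\gamma}$
gives
$\tan(\{\Theta_0(\lambda)\}_\pi/2)=\cot(\arctan n^{-\epsilon/\gamma})=n^{\epsilon/\gamma}$,
so $\log\tan(\{\Theta_0(\lambda)\}_\pi/2)=\epsilon\log n^{1/\gamma}$. Since $\tan$ has period $\pi$, the logarithm is insensitive to the $\{\cdot\}_\pi$-reduction, and since $\widetilde R=R-e^{(n)}$ with $|e^{(n)}|\le Cn^{-\alpha}$, this reads $R^{(n)}(0)=\epsilon\log n^{1/\gamma}+O(n^{-\alpha})$, matching the hypothesis of Lemma \ref{explosion1} up to a negligible shift.

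\textbf{Step 2: identification of the hitting times.} By Remark \ref{monotonicity1}, the function $t\mapsto\lfloor\Theta_{nt}(\lambda)/\pi\rfloor$ is non-decreasing once $n$ is large, so $\tilde\zeta_1^\lambda$ is the first time $\Theta_{n\tau(t)^\gamma}(\lambda)$ crosses the next multiple of $\pi$ from below, starting from a value inside $(0,\pi)$ close to $\pi$. As $\Theta\to\pi^-$ one has $\tan(\Theta/2)\to+\infty$, hence $R^{(n)}(t)\to+\infty$. Therefore
\[
\tilde\zeta_1^{\lambda}\;=\;T_{+\infty}\quad\text{for }R^{(n)},
\]
again up to the negligible discrepancy coming from $e^{(n)}$, which cannot shift the blow-up time by more than $O(C_n^{-2})\cdot O(n^{-\alpha})$, dominated by the polynomial window $(5c/C_n^2)\log n^{1/\gamma}$.

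\textbf{Step 3: application of Lemma \ref{explosion1}.} With the initial value $R^{(n)}(0)=\epsilon\log n^{1/\gamma}$ and with $\tilde\zeta_1^{\lambda}=T_{+\infty}$, Lemma \ref{explosion1} directly yields
\[
{\bf P}\!\left(\tilde\zeta_1^\lambda<\tfrac{5c}{C_n^2}\log n^{1/\gamma}\right)\;\ge\;1-n^{-c'/\gamma},
\]
which is the claim. The time change $\tau(t)$ only rescales the time by $1+o(1)$ uniformly in $\omega$, so it can be absorbed into the constant $c$ without affecting the exponent $c'/\gamma$.

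\textbf{Main obstacle.} There is no substantive obstacle beyond bookkeeping; all work has been done in Lemma \ref{explosion1}. The only point requiring care is to check that the two perturbations, namely $e^{(n)}=O(n^{-\alpha})$ in the definition of $\widetilde R$ and $\tau'(t)=1+o(1)$ in the time change, are both of lower order than the quantities appearing in the initial condition $\epsilon\log n^{1/\gamma}$ and in the window $(5c/C_n^2)\log n^{1/\gamma}$. Since $C_n^{-2}=O(n^{-1/\gamma})$ and $\log n^{1/\gamma}\to\infty$, both perturbations are indeed harmless, and the statement of Lemma \ref{explosion3} follows with the same constant $c'$ (perhaps slightly smaller) as in Lemma \ref{explosion1}.
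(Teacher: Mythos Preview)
Your proposal is correct and follows exactly the paper's approach: the paper introduces Lemma~\ref{explosion3} by stating that ``the statements of Lemma~\ref{explosion1}, \ref{Xi} have the following form,'' giving no separate proof, so your unwinding of the change of variables $R=\log\tan(\Delta\theta/2)$ is precisely the translation the paper leaves implicit. One small remark: the phrase ``since $\tan$ has period $\pi$, the logarithm is insensitive to the $\{\cdot\}_\pi$-reduction'' is not literally right, because $\tan(\Theta/2)$ has period $2\pi$ in $\Theta$; the correct reason is that $R^{(n)}$ is reset to $-\infty$ at each crossing of $\pi{\bf Z}$, so by construction it is computed from $\{\Theta\}_\pi\in[0,\pi)$, on which $\log\tan(\cdot/2)$ is well defined and monotone --- but this does not affect the argument.
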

\begin{lemma}
\label{indicator}
Let 
\beq
\Xi_n (t)
:=
{\bf E}
\left[
\int_0^t 
1 \left(
\Theta_{n \tau(u)^{\gamma}}(\lambda)
\ge
2 \arctan
n^{- \frac {1}{4\gamma}}
\right)
du
\right].
\eeq
Then 
we can find a constant 
$C$ 
such that 
\beq
\Xi_n (t)
\le
C
n^{- \frac {1}{2\gamma}}
\log n^{\frac {1}{\gamma}}.
\eeq
\end{lemma}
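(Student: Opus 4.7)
The plan is to deduce Lemma \ref{indicator} from Lemma \ref{Xi} by transplanting the estimate through the Ricatti change of variables $R = \log\tan(\Theta/2)$ from Section 3. Inverting eq.(\ref{logtan}) yields, on each excursion of $\Theta$ between successive multiples of $\pi$, the monotone relation $\{\Delta\theta_s\}_\pi = 2\arctan(e^{R(s)})$. Since $\log\tan\arctan(n^{-1/(4\gamma)}) = -\tfrac{1}{4\gamma}\log n$, the threshold $R = -\tfrac{1}{4}\log n^{1/\gamma}$ corresponds precisely to $\{\Theta\}_\pi = 2\arctan(n^{-1/(4\gamma)})$, so the indicators in Lemma \ref{indicator} and Lemma \ref{Xi} agree excursion by excursion.

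Next, I would establish the global structure. By Remark \ref{monotonicity1}, for $n$ large and $t \ge t_0$ the function $t \mapsto \lfloor \Theta_{nt}/\pi \rfloor$ is non-decreasing, so $[0,1]$ decomposes cleanly into excursions demarcated by the $\pi$-crossings of $\Theta$, equivalently the explosions of $R$ at $+\infty$. The strong Markov property at each explosion restarts $R^{(n)}$ from $-\infty$, matching the conditioning ${\bf P}_{-\infty}$ used in Lemma \ref{Xi}. The expectation in $\Xi_n(t)$ is therefore dominated by a sum over excursions of terms each controlled by Lemma \ref{Xi}, with the total contribution still bounded by one copy of the same estimate up to constants.

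Finally, I would dispose of the two loose ends: the error $e^{(n)} = O(n^{-\alpha})$ in $\widetilde{R} = R - e^{(n)}$ and the time change $\tau$ with $\tau'(t) = 1 + o(1)$ uniformly in $\omega$. Both shift the threshold and reparametrize time by amounts of strictly smaller order than the claimed bound $n^{-1/(2\gamma)}\log n^{1/\gamma}$, so they can be absorbed into the constant $C$. The main technical obstacle is ensuring that the slight enlargement of the indicator set caused by the $e^{(n)}$-perturbation near the threshold does not dominate the estimate; this is handled by comparing the perturbation against the leading order of the bound and noting that the perturbation is polynomially smaller, so after enlarging $C$ the transplanted bound from Lemma \ref{Xi} still applies.
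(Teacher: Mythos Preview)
Your proposal is correct and matches the paper's treatment: the paper does not give a separate proof of Lemma~\ref{indicator} but simply declares it (together with Lemma~\ref{explosion3}) a rephrasing of Lemma~\ref{Xi} via the Ricatti transform $R=\log\tan(\Theta/2)$. Your argument makes that rephrasing explicit by matching the threshold $-\tfrac14\log n^{1/\gamma}$ for $R^{(n)}$ with $2\arctan n^{-1/(4\gamma)}$ for $\{\Theta\}_\pi$, noting that the initial condition $\Theta_0(\lambda)=0$ corresponds to $R^{(n)}(0)=-\infty$, and absorbing the $e^{(n)}=O(n^{-\alpha})$ and $\tau'(t)=1+o(1)$ perturbations into the constant.
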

We can 
now prove that the jump points of the function 
$t \mapsto \left
\lfloor
\Theta_{n \tau(t)^{\gamma}}(\lambda) / \pi
\right\rfloor$
converges to a Poisson process.
\begin{proposition}
\label{marginal}
\beq
\tilde{\mu}_{\lambda}^{(n)}
\stackrel{d}{\to}
\mbox{Poisson}
\left(
\frac {\lambda}{\pi}
\gamma t^{\gamma-1}  
1_{[0,1]}(t) dt
\right)
\eeq
and the same statement also holds for the point process 
$\mu_{\lambda}^{(n)}$
whose atoms consist of 
\beq
\zeta_k^{\lambda}
&:=&
\inf
\left\{
t \in [0,1] \; \middle| \;
\Theta_{n t^{\gamma}}(\lambda)
\ge 
k \pi
\right\}.
\eeq
\end{proposition}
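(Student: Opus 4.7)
The plan is to prove convergence of the Laplace functional: for any $f \in C_c^+([0,1])$, I would show
\[
{\bf E}\!\left[ e^{- \tilde{\mu}_\lambda^{(n)}(f)} \right]
\;\longrightarrow\;
\exp\!\left( - \int_0^1 \bigl(1 - e^{-f(t)}\bigr) \frac{\lambda}{\pi}\, \gamma t^{\gamma-1}\, dt \right),
\]
which, by the standard characterization, identifies the limit in the vague topology with the claimed Poisson process. The overall scheme follows the Allez--Dumaz argument of \cite{AD}, but must be adapted to the presence of the time-dependent factor $\gamma t^{\gamma-1}$ in the drift of (\ref{Rplus})--(\ref{Rminus}).

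First I would partition $[0,1]$ into $N$ sub-intervals $I_k = [k/N, (k+1)/N]$. On $I_k$, the coefficient $\gamma t^{\gamma-1}$ is pinched between its endpoint values, which differ by $O(1/N)$. Freezing it at an endpoint gives stationary SDEs of the form (\ref{S}) with drift parameter $\tilde{\lambda}_k^{\pm} := \lambda \gamma (k/N)^{\gamma-1}$ (and the analogue at $k+1$); call their solutions $S_{\pm}^{(k)}$. Combining Proposition \ref{comparison} with the SDE comparison theorem yields the nested sandwich
\[
S_{-}^{(k)}(t) \;\le\; \widetilde{R}(n \tau(t)^{\gamma}) \;\le\; S_{+}^{(k)}(t), \qquad t \in I_k,
\]
once initial values are matched. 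Each atom of $\tilde{\mu}_\lambda^{(n)}$ corresponds to an explosion of $\widetilde{R}(n \tau(\cdot)^\gamma)$, and by Proposition \ref{exponential} the explosion time of $S_{\pm}^{(k)}$, started from $-\infty$, converges in distribution to $\mbox{Exp}(\tilde{\lambda}_k^{\pm}/\pi)$.

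To chain successive explosions into a single point process I would invoke the memory-loss estimate of Lemma \ref{indicator}: the Lebesgue measure of the set of $u \in [0,t]$ at which $\Theta_{n \tau(u)^\gamma}(\lambda) \ge 2 \arctan n^{-1/(4\gamma)}$ is $o(1)$. Thus after each explosion the process returns almost instantly (on the macroscopic time scale) to a neighborhood of $-\infty$, and the strong Markov property together with Proposition \ref{exponential} yields asymptotically i.i.d.\ exponential waiting times within each $I_k$, with asymptotic independence across distinct $I_k$'s. Consequently $\tilde{\mu}_\lambda^{(n)}|_{I_k} \Rightarrow \mbox{Poisson}\bigl((\tilde{\lambda}_k^{\pm}/\pi)\, 1_{I_k}(t)\, dt\bigr)$ and the Laplace functional factors as a product over $k$. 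Letting $N \to \infty$ turns the resulting Riemann sum into $\int_0^1 (1 - e^{-f(t)})(\lambda/\pi)\, \gamma t^{\gamma-1}\, dt$; the upper and lower sandwiches give the same limit because $\tilde{\lambda}_k^+ - \tilde{\lambda}_k^- = O(1/N)$. The statement for $\mu_\lambda^{(n)}$ then follows from $\tau'(t) = 1 + o(1)$ uniformly in $\omega$, which implies $|\zeta_k^{\lambda} - \tilde{\zeta}_k^{\lambda}| \to 0$.

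The main obstacle is the rigorous execution of the memory-loss step: one must verify that after each explosion $R^{(n)}$ resets deeply enough into the $-\infty$ region for Proposition \ref{exponential} to apply to the next excursion, uniformly over all $I_k$'s and for both $S_{\pm}^{(k)}$. This requires coordinating the three limits $n \to \infty$, the fast return from $+\infty$ to $-\infty$, and $N \to \infty$, and checking that the small residual drift produced by the frozen coefficient changing between consecutive sub-intervals does not accumulate. The relevant ingredients, Lemma \ref{indicator} and Proposition \ref{exponential}, are designed exactly for this purpose, but their careful combination is the delicate part of the argument.
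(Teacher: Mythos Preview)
Your overall strategy is the same as the paper's---the Allez--Dumaz sandwich between stationary diffusions on sub-intervals, Proposition~\ref{exponential} for the exponential limit of explosion times, and Lemma~\ref{indicator} for memory loss---and you have correctly identified the delicate point. Two differences are worth noting.

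First, the paper does not work with the full Laplace functional but with the Kallenberg-type criterion: it is enough to show ${\bf E}[\mu_\lambda^{(n)}(I)]\to\frac{\lambda}{\pi}\int_I\gamma t^{\gamma-1}\,dt$ and ${\bf P}(\mu_\lambda^{(n)}(I)=0)\to\exp\bigl(-\frac{\lambda}{\pi}\int_I\gamma t^{\gamma-1}\,dt\bigr)$ for finite unions of intervals. This is a cosmetic simplification; your route would also work.

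Second, and more substantively, the paper does \emph{not} use the deterministic partition $I_k=[k/N,(k+1)/N]$. It uses a \emph{random} partition $I_k=[T_k/N,T_{k+1}/N]$ with $T_k=\sum_{i\le k}\tau_i$ and $\tau_i$ i.i.d.\ uniform on $[1/2,3/2]$. This is exactly the device that resolves your ``main obstacle'': Lemma~\ref{indicator} only gives a bound on the \emph{time-average} $\Xi_n(t)={\bf E}\int_0^t 1(\{\Theta\}\text{ large})\,du$, not on the probability that $\{\Theta\}$ is large at a \emph{fixed} deterministic time $k/N$. With random endpoints $T_k/N$, the probability of the bad event $\mathcal{C}_k^c=\{\{\Theta_\lambda^{(n)}(T_k/N)\}>2\arctan n^{-1/(4\gamma)}\}$ becomes an integral against the density of $\tau_k$, and summing over $k$ gives $\sum_k{\bf P}(\mathcal{C}_k^c)\le 2N\,\Xi_n(3+3/N)\to 0$ for fixed $N$ as $n\to\infty$. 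On the good event $\mathcal{C}=\bigcap_k\mathcal{C}_k$ the initial value at each partition point is already near $-\infty$, so Proposition~\ref{exponential} applies directly and the chaining of explosions is clean. Your deterministic partition would require a separate pointwise-in-time estimate that the paper never proves; switching to the randomized partition closes the gap with no further work.
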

{\it Idea of Proof} : 
Let 
\beq
I_k 
&:=&
\left[
\frac {T_k}{N}, \frac {T_{k+1}}{N}
\right]
\quad
\mbox{ where }
\quad
T_k 
:=
\sum_{i=1}^k \tau_i, 
\quad
\tau_i = 
\mbox{unif}
\left(
\frac 12, \frac 32
\right).
\eeq
Let 
$S_{\pm}^{(n)}$
be the solution to the following SDE's 
where the constant 
$\lambda$
in SDE (\ref{S}) is replaced by 
$\gamma \left( \frac {T_{k+1}}{N} \right)^{\gamma-1}$, 
$\gamma \left( \frac {T_{k}}{N} \right)^{\gamma-1}$
respectively : 
\beq
d S_{\pm}^{(n)}
&=&
\left(
\lambda_k^{\pm} 
(1 \pm \epsilon) 
\cosh_{\pm} (S_{\pm}^{(n)})
+
\frac {C_n^2}{2}
\tanh_{\pm, \epsilon} (S_{\pm}^{(n)})
\right) dt
+ 
C_n d B_t, 
\quad
t \in I_k
\\
\mbox{where}&&
\lambda_k^+
=
\gamma \left( \frac {T_{k+1}}{N} \right)^{\gamma-1}, 
\quad
\lambda_k^-
=
\gamma \left( \frac {T_{k}}{N} \right)^{\gamma-1}
\eeq
with initial values 
$S_{\pm}^{(n)}
\left(
\frac {T_k}{N}
\right)
:=
R^{(n)}
\left( \frac {T_k}{N} \right)
$
on each 
$I_k$.
We remark 
that, once 
$S_{\pm}^{(n)}$ 
explode to 
$+ \infty$, 
it starts at 
$- \infty$ 
again and so on.
Let 
$\Theta_{\pm}^{(n)}$
defined by 
\beq
S_{\pm}^{(n)}
&=&
\log \tan \frac {\Theta_{\pm}^{(n)}}{2}, 
\eeq
in other words, 
$\Theta_{\pm}^{(n)}
:=
2 \arctan e^{S_{\pm}^{(n)}}$.
Then by
eq.(\ref{sandwich}) 
and using comparison theorem between 
$S_{\pm}$ and $R_{\pm}$, 
\beq
\Theta_{-, t}^{(n)} (\lambda) 
\le
\Theta_{n \tau(t)^{\gamma}} 
(\lambda)
\le 
\Theta_{+, t}^{(n)} (\lambda).
\eeq
Thus 
we can estimate the number of jump points of 
$\left
\lfloor
\Theta_{n \tau(t)^{\gamma}}(\lambda) / \pi
\right\rfloor$
from above and below by those of 
$\left
\lfloor
\Theta_{\pm, t}^{(n)}(\lambda) / \pi
\right\rfloor$.
By Lemma 
\ref{indicator} 
and by the definition of 
$T_k$, 
on each starting point of the interval 
$I_k$, 
we can suppose 
$\Theta_{n \tau(t)^{\gamma}}(\lambda)
\le
2 \arctan 
n^{- \frac {1}{4\gamma}}$
with a good probability, so that by Proposition 
\ref{exponential}, 
the explosion time of 
$\Theta_{\pm}^{(n)}$
converges to the exponential distribution on each intervals, which proves the statement of Proposition \ref{marginal} for 
$\Theta_{n \tau(t)^{\gamma}}(\lambda)$.
Since 
$\tau'(t) = 1+o(1)$ 
uniformly in  
$\omega \in \Omega$, 
the same statement also holds for 
$\mu_{\lambda}^{(n)}$.
\QED
\begin{remark}
\label{lambdadash}
Let 
$\lambda < \lambda'$ 
and let 
\beq
\mu_{\lambda, \lambda'}^{(n)}
&:=&
\sum_{k}
\delta_{ \zeta_k^{\lambda, \lambda'} }
\\
\mbox{where }
\quad
\zeta_k^{\lambda, \lambda'}
&:=&
\inf \left\{
t \in [0,1] 
\, | \,
\Theta_{n t^{\gamma}}(\lambda')
-
\Theta_{n t^{\gamma}}(\lambda)
\ge k \pi
\right\}.
\eeq
We can apply 
all the arguments in previous sections for 
$\Theta_{n t^{\gamma}}(\lambda')
-
\Theta_{n t^{\gamma}}(\lambda)$
yielding 
%
\beq
\mu_{\lambda, \lambda'}^{(n)}
\stackrel{d}{\to} 
Poisson
\left(
\frac {\lambda' - \lambda}{\pi}
\gamma t^{\gamma-1} 1_{[0,1]} dt
\right).
\eeq
\end{remark}
%

\subsection{Limiting Coupled Poisson Process}
For 
$0 < \lambda < \lambda'$, 
let 
$P_{\lambda} := \lim_{n \to \infty} 
\mu^{(n)}_{\lambda}$, 
$P_{\lambda'} := \lim_{n \to \infty} 
\mu^{(n)}_{\lambda'}$, 
$P_{\lambda, \lambda'} := \lim_{n \to \infty} 
\mu^{(n)}_{\lambda, \lambda'}$
be the limiting Poisson processes described in Proposition \ref{marginal} and Remark \ref{lambdadash}.
In this subsection, 
we show that 
(i) 
they are realized jointly as 
${\cal F}_t$-Poisson processes under suitable filtration(Lemma \ref{jointlyPoisson}), 
(ii) 
the sets 
${\cal P}_{\lambda}$, 
${\cal P}_{\lambda'}$, 
${\cal P}_{\lambda, \lambda'}$ 
of corresponding atoms satisfy 
${\cal P}_{\lambda} 
\subset 
{\cal P}_{\lambda'}$
(Lemma \ref{monotonicity2}), and 
(iii) 
${\cal P}_{\lambda}
\cap 
{\cal P}_{\lambda, \lambda'} = \emptyset$
(Lemma \ref{independent}).
The independence of 
$P_{\lambda}$, $P_{\lambda, \lambda'}$ 
(and thus independence of finite number of marginals of 
$\xi_L$ 
on intervals) then follows from those observations. 
But 
first of all we need to show that the ``fractional part" of 
$\Theta (\lambda)$, 
$\Theta (\lambda')$ 
also obey the same ordering as 
$\lambda, \lambda'$ 
for sufficiently large portions in time (Lemma \ref{order}). 
We recall 
$\{ x \}_{\pi} := 
x - \left\lfloor x / \pi \right\rfloor \pi$.
\begin{lemma}
\label{order}
Let 
$0 < \lambda < \lambda'$
and 
\beq
\Upsilon_n(t)
&:=&
{\bf E}
\left[
\int_0^t 
1 \left(
\left\{ \Theta_{nu^{\gamma}}(\lambda') \right\}_{\pi}
\le
\left\{ \Theta_{n u^{\gamma}}(\lambda) \right\}_{\pi}
\right)
du
\right]
\eeq
then we can find a constant 
$C$ 
such that 
\beq
\Upsilon_n (t) 
\le
C
n^{- \frac {c'}{\gamma}}.
\eeq
\end{lemma}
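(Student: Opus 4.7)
By the monotonicity of the Pr\"ufer phase in the spectral parameter (a consequence of Sturm's oscillation theorem), we have $\Theta_{nu^{\gamma}}(\lambda') \ge \Theta_{nu^{\gamma}}(\lambda)$ for all $u$. Set $\theta(u) := \{\Theta_{nu^{\gamma}}(\lambda)\}_{\pi}$, $\theta'(u) := \{\Theta_{nu^{\gamma}}(\lambda')\}_{\pi}$ and $r(u) := \{\Theta_{nu^{\gamma}}(\lambda') - \Theta_{nu^{\gamma}}(\lambda)\}_{\pi} \in [0,\pi)$. My plan is to identify the bad event $\theta'(u) \le \theta(u)$ as a combinatorial ``wraparound'' and then bound it by two applications of Lemma \ref{indicator}. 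A direct case split shows, modulo a null set, that $\theta'(u) \le \theta(u)$ is equivalent to $\theta(u) + r(u) \ge \pi$: if $\theta + r < \pi$ then $\theta' = \theta + r \ge \theta$, while if $\theta + r \ge \pi$ then $\theta' = \theta + r - \pi < \theta$. In particular this event forces $\theta(u) \ge \pi/2$ or $r(u) \ge \pi/2$.

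For the first possibility, Lemma \ref{indicator} applied to $\Theta_{nu^{\gamma}}(\lambda)$ gives ${\bf E}[\int_0^t 1(\theta(u) \ge 2\arctan n^{-1/(4\gamma)})\,du] \le C n^{-1/(2\gamma)} \log n^{1/\gamma}$, and since $2\arctan n^{-1/(4\gamma)} < \pi/2$ for $n$ large, the event $\{\theta(u) \ge \pi/2\}$ inherits the same bound. For the second, I invoke Remark \ref{lambdadash}: the entire chain of Sections 3--5 (Ricatti identity, comparison with $R_{\pm}$, Allez--Dumaz hitting-time analysis) transports verbatim to the difference $\Theta_{nu^{\gamma}}(\lambda') - \Theta_{nu^{\gamma}}(\lambda)$ with $\lambda$ replaced by $\lambda' - \lambda$, so Lemma \ref{indicator} yields the analogous bound for $\{r(u) \ge \pi/2\}$. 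Summing,
\[
\Upsilon_n(t) \le {\bf E}\left[\int_0^t 1(\theta(u) + r(u) \ge \pi)\,du\right] \le 2C\,n^{-1/(2\gamma)} \log n^{1/\gamma},
\]
and since $\log n^{1/\gamma} = O(n^{\varepsilon})$ for every $\varepsilon > 0$, this is dominated by $C' n^{-c'/\gamma}$ for any $c' < 1/2$, which gives the claim.

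The main obstacle is not the elementary combinatorial step but the justification that Lemma \ref{indicator} indeed applies to the difference $\Theta_{nu^{\gamma}}(\lambda') - \Theta_{nu^{\gamma}}(\lambda)$: this needs the Ricatti identity of Proposition \ref{Ricatti} and the comparison estimate of Proposition \ref{comparison} to be rerun with $\lambda' - \lambda$ in place of $\lambda$. This is exactly the transference recorded in Remark \ref{lambdadash}, so the argument should go through without any genuinely new computation.
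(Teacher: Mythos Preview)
Your proof is correct and takes a genuinely different, more economical route than the paper's.

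The paper's argument fixes a short time window $[u_0,u]$ with $u_0 = u - \frac{5c}{C_n^2}\log n^{1/\gamma}$ and splits $\mathcal{E}_u$ into three cases according to whether the last jump of $\lfloor\Theta^{(n)}_{\lambda'}/\pi\rfloor$ occurred after $u_0$, and whether $\{\Theta^{(n)}_{\lambda}(u_0)\}_\pi$ is small. It then invokes Lemma~\ref{explosion3} (the fast-explosion estimate) for the middle case and Lemma~\ref{indicator} for the outer two. Integrating in $u$ gives the bound.

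Your approach bypasses the time-window bookkeeping entirely: the observation that $\{\theta'\le\theta\}$ is (modulo a null set) the wraparound event $\{\theta+r\ge\pi\}\subset\{\theta\ge\pi/2\}\cup\{r\ge\pi/2\}$ reduces the lemma to two straight applications of Lemma~\ref{indicator}, one for $\Theta^{(n)}_{\lambda}$ and one for the difference process via Remark~\ref{lambdadash}. This is shorter and avoids any use of Lemma~\ref{explosion3} here. The trade-off is that the resulting exponent is $c'<1/2$ (inherited from Lemma~\ref{indicator}), whereas the paper's route in principle allows a larger $c'$ coming from Lemma~\ref{explosion3}; but since the downstream applications (Lemmas~\ref{monotonicity2} and~\ref{independent}) only need \emph{some} positive power of $n^{-1/\gamma}$, this costs nothing.

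Your reliance on Remark~\ref{lambdadash} to transport Lemma~\ref{indicator} to the difference is legitimate: Proposition~\ref{Ricatti} is already written for general $\Delta\theta_t=\theta_t(\kappa_c)-\theta_t(\kappa_d)$, so the whole chain (Ricatti, comparison, hitting-time estimates) carries over with $\lambda$ replaced by $\lambda'-\lambda$ without modification.
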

{\it Idea of Proof} : let 
\beq
{\cal E}_u
&:=&
\left\{
\left\{ \Theta_{nu^{\gamma}}(\lambda') \right\}_{\pi}
\le
\left\{ \Theta_{n u^{\gamma}}(\lambda) \right\}_{\pi}
\right\}, 
\quad
u \in [0,1]
\\
\zeta_u
&:=&
\sup
\left\{
\zeta^{\lambda'}_k
\, \middle| \, 
\zeta_k^{\lambda'} \le u
\right\}, 
\quad
\zeta_k^{\lambda'}
:=
\inf
\left\{
t \in [0,1] \; | \;
\Theta_{n t^{\gamma}}(\lambda')
\ge 
k \pi
\right\}
\\
u_0 &:=&
u - \frac {5c}{C_n^2} 
\log n^{\frac {1}{\gamma}}, 
\quad
c > \gamma + \frac 12.
\eeq
On the event 
${\cal E}_u$, 
we consider the following three possibilities.
\\
(i)
the latest jump of the function 
$t \mapsto \left\lfloor
\Theta_{n t^{\gamma}}(\lambda')/\pi
\right\rfloor$
before 
$u$ 
occurs after 
$u_0$
\\
(ii)
the latest jump of 
$\left\lfloor
\Theta_{n t^{\gamma}}(\lambda')/\pi
\right\rfloor$
before 
$u$ 
occurs before 
$u_0$,
and 

$\{ \Theta_{nu_0^{\gamma}}(\lambda) \}_{\pi} \le 
2 \arctan
n^{- \frac {1}{4\gamma}}$, \\
(iii)
the latest jump of 
$\left\lfloor
\Theta_{n t^{\gamma}}(\lambda')/\pi
\right\rfloor$
before 
$u$ 
occurs before 
$u_0$,
and 

$\{ \Theta_{nu_0^{\gamma}}(\lambda) \}_{\pi} 
> 
2 \arctan
n^{- \frac {1}{4\gamma}}$.\\
Then 

\noindent
(i) 
the probability of the event 
(i) 
is bounded from above by 
$
n^{- \frac {1}{2\gamma}}
\log n^{\frac {1}{\gamma}}
\cdot
{\bf E}[\mu_{\lambda}^n [0,t]]$.
\\
(ii)
Let 
$\widetilde{\zeta}_{2 \pi}$
be the explosion time of 
$\Theta_{n t^{\gamma}}(\lambda, \lambda'):=
\Theta_{n t^{\gamma}}(\lambda') - \Theta_{n t^{\gamma}}(\lambda)$ 
for which we can carry out the arguments in previous sections.
Then 
in Case (ii) we must have 
$\tilde{\zeta}_{2 \pi} \ge 
\frac {5 c}{C_n^2} \log n^{\frac {1}{\gamma}}$
of which the probability is bounded from above by 
$
n^{- \frac {c'}{\gamma}}$
\\
(iii)
Lemma \ref{indicator}
gives the bound on the probability of Case (iii).
\QED\\
In what follows, we set 
$\lambda < \lambda' < \lambda''$.
Since the set of triples
$\{ 
(\mu_{\lambda}^{(n)},
\mu_{\lambda'}^{(n)}, 
\lambda_{\lambda', \lambda''}^{(n)}), 
\;
n \ge 0
\}$
is tight as a set of Radon measures on 
${\bf R}_+$, 
we can find a subsequence 
$(n_k)$ 
such that 
\beq
(
\mu_{\lambda}^{(n_k)},
\mu_{\lambda'}^{(n_k)}, 
\lambda_{\lambda', \lambda''}^{(n_k)}
)
\to
\left(
P_{\lambda}, 
P_{\lambda'}, 
P_{\lambda', \lambda''}
\right)
\eeq
where 
$P_{\lambda}$, $P_{\lambda'}$, $P_{\lambda', \lambda''}$
are Poisson processes which turn out to be independent of the choice of convergent subsequences.
\begin{lemma}
\label{jointlyPoisson}
Let 
\beq
{\cal F} &:=& ( {\cal F}_t )_{t \ge 0}
\\
{\cal F}_t
&:=&
\sigma 
\left(
P_{\lambda}(s), 
P_{\lambda'}(s), 
P_{\lambda', \lambda''}(s) ;  
\;
0 \le s \le t
\right).
\eeq
Then 
$P_{\lambda}$, $P_{\lambda'}$, $P_{\lambda', \lambda''}$
are the 
$({\cal F}_t)$-Poisson processes whose intensity measures are equal to 
$\pi^{-1} \lambda
\gamma t^{\gamma-1}1_{[0,1]}(t) dt$, 
$\pi^{-1} \lambda'
\gamma t^{\gamma-1}1_{[0,1]}(t) dt$, 
and 
$\pi^{-1}
(\lambda'' - \lambda')
\gamma t^{\gamma-1}1_{[0,1]}(t) dt$ 
respectively.
\end{lemma}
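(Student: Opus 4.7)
The plan is to verify the $(\mathcal{F}_t)$-Poisson property by checking the conditional independence of increments. Since Proposition \ref{marginal} and Remark \ref{lambdadash} already establish that each of $P_\lambda$, $P_{\lambda'}$, $P_{\lambda',\lambda''}$ is individually a Poisson process with the stated intensity, all that remains is to show that, for every $0\le s<t\le 1$, the triple of increments on $(s,t]$ is jointly independent of $\mathcal{F}_s$. Once this is done, combining with individual Poisson-ness yields the $(\mathcal{F}_t)$-Poisson property.

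First, at the prelimit level, the joint process $(\Theta_{nu^\gamma}(\lambda),\Theta_{nu^\gamma}(\lambda'),X_{nu^\gamma})_{u\ge 0}$ is Markov, driven by the Brownian motion $X$ on $M$. Consequently, for each $s$, the restrictions $\mu^{(n)}_\lambda|_{(s,t]}$, $\mu^{(n)}_{\lambda'}|_{(s,t]}$, $\mu^{(n)}_{\lambda',\lambda''}|_{(s,t]}$ are measurable with respect to the state at $s$ together with the post-$s$ Brownian increments, and the latter are independent of the prelimit $\sigma$-algebra generated by the three point processes on $[0,s]$. Passing to the limit along the convergent subsequence, this gives conditional independence in $\mathcal{F}_s$ provided we can show that the effect of the state at time $s$ washes out.

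To handle the latter point, I would apply Lemma \ref{indicator} to $\lambda$, to $\lambda'$, and (via the arguments indicated in Remark \ref{lambdadash}) to the difference $\lambda'-\lambda$, to produce, with probability $1-o(1)$, a common reset time $s_n\in[s,s+\eta_n]$ (with $\eta_n\to 0$ slowly) at which the fractional parts $\{\Theta_{ns_n^\gamma}(\lambda)\}_\pi$, $\{\Theta_{ns_n^\gamma}(\lambda')\}_\pi$, $\{\Theta_{ns_n^\gamma}(\lambda')-\Theta_{ns_n^\gamma}(\lambda)\}_\pi$ are simultaneously near $0$. This lets us re-initialize the Allez-Dumaz analysis of Section 5 at $s_n$ in place of $0$: the stationary approximations $S_\pm$ on each random subinterval of $(s_n,t]$ depend only on the post-$s_n$ Brownian increments. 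Repeating verbatim the proof of Proposition \ref{marginal} on $(s_n,t]$ in place of $[0,1]$, the restrictions of the three point processes to $(s,t]$ converge jointly to Poisson measures with the claimed intensities, and these limits are independent of $\mathcal{F}_s$ by the Markov input from the previous paragraph.

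The main technical obstacle is the simultaneous reset: the three Ricatti-type processes share the driving Brownian motion, so their returns to $-\infty$ are coupled rather than independent. The remedy is a union bound combining Lemma \ref{indicator} for the three cases on the window $[s,s+\eta_n]$: each of the three ``bad'' sets has expected Lebesgue measure of order $n^{-1/(2\gamma)}\log n^{1/\gamma}$, hence the union has the same order, and for $\eta_n$ chosen to decay strictly more slowly than $n^{-1/(2\gamma)}\log n^{1/\gamma}$ a common good time $s_n$ exists with probability tending to one. Monotonicity (Lemma \ref{monotonicity2}) is then used to make the choice of $s_n$ consistent across $\lambda$ and $\lambda'$. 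Once this joint reset is in place, the rest of the proof is a direct adaptation of the single-process argument already carried out in Proposition \ref{marginal}.
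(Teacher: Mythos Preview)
The paper does not give its own detailed proof of this lemma; it simply defers to Allez--Dumaz \cite{AD}, stating that all statements in Section~5 ``can be proved in the same manner as in \cite{AD}'' (and Appendix~II omits this particular lemma). Your outline reproduces the Allez--Dumaz strategy: use the Markov property of the underlying driving process, invoke the memory-loss/reset estimate (Lemma~\ref{indicator}) to find a nearby time at which all relevant fractional parts are small, and then re-run the marginal convergence argument of Proposition~\ref{marginal} conditionally on the past. This is the intended route and is correct in substance.

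Two minor remarks. First, the Markov process you name should also include $\Theta_{nu^{\gamma}}(\lambda'')$, since $P_{\lambda',\lambda''}$ tracks the difference $\Theta(\lambda'')-\Theta(\lambda')$; without it the post-$s$ evolution of $\mu^{(n)}_{\lambda',\lambda''}$ is not determined. Second, your appeal to Lemma~\ref{monotonicity2} to make the choice of $s_n$ ``consistent'' is unnecessary and slightly out of logical order (that lemma is stated after the present one): the union bound over the three applications of Lemma~\ref{indicator} already yields a common good time with probability $1-o(1)$, and the stationary approximation from that point onward needs no monotonicity input.
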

Let 
${\cal P}_{\lambda}$, 
${\cal P}_{\lambda'}$, 
${\cal P}_{\lambda', \lambda''}$
be the set of atoms of 
$P_{\lambda}$, $P_{\lambda'}$, $P_{\lambda', \lambda''}$
respectively.
\begin{lemma}
\label{monotonicity2}
If 
$\lambda < \lambda'$, 
${\cal P}_{\lambda} \subset {\cal P}_{\lambda'}$ 
a.s.
\end{lemma}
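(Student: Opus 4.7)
\emph{Plan.} The strategy is to match, at the prelimit level, each atom of $\mu_\lambda^{(n)}$ with an atom of $\mu_{\lambda'}^{(n)}$ within distance $o_n(1)$, and then pass to the joint weak limit via a Skorokhod representation. Two structural inputs are used: by Sturm--Liouville monotonicity the Pr\"ufer phase $\kappa \mapsto \theta_L(\kappa)$ is non-decreasing for every $L$, so $\Theta_{nt^\gamma}(\lambda') \ge \Theta_{nt^\gamma}(\lambda)$, and together with Remark~\ref{monotonicity1} the two non-decreasing counts satisfy $\lfloor \Theta(\lambda)/\pi \rfloor \le \lfloor \Theta(\lambda')/\pi \rfloor$ pointwise in $t$; second, Lemma~\ref{order} supplies a ``good'' set of times, of expected complement-measure $O(n^{-c'/\gamma})$, on which the fractional parts respect the same ordering $\{\Theta(\lambda')\}_\pi \ge \{\Theta(\lambda)\}_\pi$.

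\emph{Matching atoms.} Fix an atom $\zeta = \zeta_k^\lambda$ of $\mu_\lambda^{(n)}$, so $\Theta_{n\zeta^\gamma}(\lambda) = k\pi$, and choose a scale $\eta_n \to 0$ with $\eta_n \cdot n^{c'/\gamma} \to \infty$. Applying Markov's inequality to Lemma~\ref{order} yields, with probability $1 - o(1)$, a point $u_\ast \in (\zeta - \eta_n, \zeta)$ at which the order condition holds. Since $\Theta_{nu_\ast^\gamma}(\lambda) = k\pi - O(\eta_n)$, its fractional part is $\pi - O(\eta_n)$, and the order condition propagates to give $\Theta_{nu_\ast^\gamma}(\lambda') = l\pi + \pi - \delta_\ast$ for some integer $l \ge k-1$ and $\delta_\ast = O(\eta_n)$, i.e.\ $\Theta(\lambda')$ lies within $\delta_\ast$ of the next multiple $(l+1)\pi$ from below. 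Lemma~\ref{Theta} writes $\Theta(\lambda')$ as a continuous semi-martingale with drift of order $\lambda'\, dt$ and diffusion coefficient $O(n^{1/(2\gamma)})$, while Remark~\ref{monotonicity1} forbids a return after an upward crossing of $(l+1)\pi$. Standard hitting-time estimates then give that the first such crossing $\zeta'\ge u_\ast$ satisfies $|\zeta' - u_\ast| = O(\delta_\ast^2 \, n^{-1/\gamma}) = o(1)$ in probability, so $\zeta'$ is an atom of $\mu_{\lambda'}^{(n)}$ with $|\zeta - \zeta'| = o(1)$.

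\emph{Passing to the limit and main obstacle.} Joint tightness (cf.\ the paragraph preceding Lemma~\ref{jointlyPoisson}) lets us extract a subsequence along which $(\mu_\lambda^{(n)}, \mu_{\lambda'}^{(n)}) \to (P_\lambda, P_{\lambda'})$ vaguely; after a Skorokhod realization the prelimit matching transfers to the limit, giving ${\cal P}_\lambda \subset {\cal P}_{\lambda'}$ almost surely. The main obstacle is the case $l \ge k$, in which the ``extra'' jumps of $f_{\lambda'}$ have accumulated before $\zeta$, so the nearby atom of $\mu_{\lambda'}^{(n)}$ is the $(l+1)$-st rather than the $k$-th crossing; one must rule out the possibility that $\Theta(\lambda')$ wanders inside $(l\pi, (l+1)\pi)$ for macroscopic time after $u_\ast$. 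This is precisely where the smallness $\delta_\ast = O(\eta_n)$ afforded by Lemma~\ref{order}, combined with the quantitative diffusion rate of the Pr\"ufer phase, conspire to keep the waiting time $o(1)$ uniformly in $l$; choosing $\eta_n$ appropriately (so that both $\eta_n\cdot n^{c'/\gamma} \to \infty$ and $\eta_n^2 \cdot n^{-1/\gamma} \to 0$) absorbs all error terms and closes the argument.
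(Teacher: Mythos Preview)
There is a genuine gap in the ``matching atoms'' step. You assert that
\[
\Theta_{n u_\ast^\gamma}(\lambda) \;=\; k\pi - O(\eta_n),
\]
i.e.\ that at a time $u_\ast$ within $\eta_n$ of the hitting time $\zeta=\zeta_k^\lambda$, the process $\Theta(\lambda)$ is already within $O(\eta_n)$ of $k\pi$. This is unjustified and in fact typically false. The Pr\"ufer phase is not Lipschitz with an $n$-independent constant: after the time change $t\mapsto t^\gamma$ its diffusion coefficient is of order $n^{1/(2\gamma)}\,|e^{2i\Theta}-1|$, so over a time window of length $\eta_n$ the process can move by amounts of order $n^{1/(2\gamma)}\sqrt{\eta_n}$, not $O(\eta_n)$. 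More to the point, Lemma~\ref{Xi} (equivalently Lemma~\ref{indicator}) shows that on any macroscopic time interval the fractional part $\{\Theta(\lambda)\}_\pi$ spends all but an $O(n^{-1/(2\gamma)}\log n)$ fraction of the time below $2\arctan n^{-1/(4\gamma)}$, i.e.\ near $0$, \emph{not} near $\pi$. Hence at the $u_\ast\in(\zeta-\eta_n,\zeta)$ produced by your Markov-inequality argument, one typically has $\{\Theta(\lambda)(u_\ast)\}_\pi$ close to $0$; the order condition $\{\Theta(\lambda')\}_\pi>\{\Theta(\lambda)\}_\pi$ is then almost vacuous and tells you nothing about how close $\Theta(\lambda')$ is to the next multiple of $\pi$. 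The downstream hitting-time estimate $|\zeta'-u_\ast|=O(\delta_\ast^2\,n^{-1/\gamma})$ therefore has no input to work with, and the argument does not close.

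The paper avoids this entirely by using a piece of structure you mention but do not exploit: the monotonicity of $t\mapsto\lfloor(\Theta^{(n)}_{\lambda'}(t)-\Theta^{(n)}_{\lambda}(t))/\pi\rfloor$ (Remark~\ref{monotonicity1} applied to the pair $(\kappa_c,\kappa_d)=(\kappa_{\lambda'},\kappa_\lambda)$). If on an interval $[a,b]$ the count $\lfloor\Theta^{(n)}_{\lambda}/\pi\rfloor$ jumps but $\lfloor\Theta^{(n)}_{\lambda'}/\pi\rfloor$ does not, a two-line computation with this monotonicity forces $\{\Theta^{(n)}_{\lambda'}(a)\}_\pi\le\{\Theta^{(n)}_{\lambda}(a)\}_\pi$. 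One then takes $a=T_k/N$ over the random partition from Proposition~\ref{marginal} and bounds the sum of these ``wrong-order'' probabilities directly by Lemma~\ref{order}. No hitting-time analysis, and no control of the value of $\Theta(\lambda)$ near $\zeta$, is needed.
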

{\it Idea of Proof} : 
suppose that 
there are no atoms of 
$\mu_{\lambda'}^{(n)}$ 
near the atom 
$\xi$ of 
$\mu_{\lambda}^{(n)}$ 
for large 
$n$. 
Then 
we should have 
$\{ \Theta_{nt^{\gamma}}(\lambda') \}_{\pi}
<
\{ \Theta_{nt^{\gamma}}(\lambda) \}_{\pi}$
near 
$\xi$ 
of which the probability is estimated from above by 
Lemma \ref{order}.
\QED
\begin{lemma}
\label{independent}
We have 
${\cal P}_{\lambda}
\cap
{\cal P}_{\lambda', \lambda''} = \emptyset$.
Hence 
by Lemma \ref{jointlyPoisson}, 
${\cal P}_{\lambda}$
and 
${\cal P}_{\lambda', \lambda''}$
are independent.
\end{lemma}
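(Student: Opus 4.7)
The plan is to split the statement into its two parts: first establish the set-theoretic claim ${\cal P}_\lambda \cap {\cal P}_{\lambda', \lambda''} = \emptyset$ almost surely, and then deduce independence as a consequence of the jointly ${\cal F}_t$-Poisson structure of Lemma \ref{jointlyPoisson}. For the first part, the strategy is to control ``near-coincidences'' between atoms of $\mu_\lambda^{(n)}$ and of $\mu_{\lambda', \lambda''}^{(n)}$ at the finite-$n$ level and then pass to the limit via Proposition \ref{marginal} and Remark \ref{lambdadash}. For the second part, standard characterizations of ${\cal F}_t$-Poisson processes without common jumps (in the spirit of Watanabe's theorem, or equivalently the Jacod--Shiryaev criterion) will immediately give independence once disjointness is in hand.

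For the disjointness, the key input is Remark \ref{lambdadash}, which lets me apply verbatim, to the difference process $\Theta_{nt^\gamma}(\lambda'')-\Theta_{nt^\gamma}(\lambda')$, all the estimates already developed for $\Theta_{nt^\gamma}(\lambda)$. In particular, the analogue of Lemma \ref{explosion3} tells me that the transition window at each atom of $\mu_{\lambda',\lambda''}^{(n)}$ has length at most $\delta_n := C\,C_n^{-2}\log n^{1/\gamma} = O(n^{-1/\gamma}\log n^{1/\gamma})$, and the analogue of Lemma \ref{indicator} tells me that the Lebesgue set on which $\{\Theta_{nt^\gamma}(\lambda'')-\Theta_{nt^\gamma}(\lambda')\}_\pi > 2\arctan n^{-1/(4\gamma)}$ has expected measure $O(n^{-1/(2\gamma)}\log n^{1/\gamma})$; the same estimates hold for $\Theta_{nt^\gamma}(\lambda)$. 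A coincident atom at time $\zeta$ would force both fractional parts to lie simultaneously in the non-attractor regime near $\zeta$, which is the intersection of two rare-event sets. I would formalize this via an analogue of Lemma \ref{order} applied to the pair $(\Theta_{n t^\gamma}(\lambda),\, \Theta_{n t^\gamma}(\lambda'')-\Theta_{n t^\gamma}(\lambda'))$: splitting the event of a near-coincidence (as in the proof of Lemma \ref{order}) into the three cases according to whether the most recent transition of $\mu_{\lambda',\lambda''}^{(n)}$ before time $u$ lies in $[u-\delta_n,u]$ or before $u-\delta_n$, and in the latter case whether $\{\Theta_{nu_0^\gamma}(\lambda)\}_\pi$ is already small, one obtains an expected-measure bound $O(n^{-c'/\gamma})$ for the joint transition set. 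Summed against the finite-mean number of atoms of $\mu_\lambda^{(n)}$, this forces the probability of any near-coincidence to vanish as $n\to\infty$, whence ${\cal P}_\lambda \cap {\cal P}_{\lambda', \lambda''} = \emptyset$ a.s.\ in the limit.

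The second half is then a soft argument: by Lemma \ref{jointlyPoisson}, both $P_\lambda$ and $P_{\lambda',\lambda''}$ are ${\cal F}_t$-Poisson processes with absolutely continuous intensities, so their compensators are continuous, and two such processes without common jumps are independent. The main obstacle I expect is in the quantitative coupling argument in the middle paragraph: the two processes $\Theta(\lambda)$ and $\Theta(\lambda'')-\Theta(\lambda')$ are driven by the same Brownian noise $(X_t,Y_t)$, so a priori their transition windows need not be approximately independent. However, at different wave numbers $\kappa_{\lambda}$ vs.\ $\kappa_{\lambda'},\kappa_{\lambda''}$, the relevant Brownian increments enter with different complex phase factors $e^{2i\theta_s(\kappa)}$ and different resolvents $g_\kappa$, and the Ricatti-equation analysis of Section~3 together with the localized-transition picture should make this decorrelation quantitative enough to supply the needed upper bound, in direct parallel to the three-case analysis used in Lemma \ref{order}.
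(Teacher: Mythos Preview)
Your deduction of independence from disjointness via the joint $({\cal F}_t)$-Poisson structure (Lemma~\ref{jointlyPoisson}) is correct and matches the paper. The problem is in the disjointness argument itself.

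Your three-case analysis is modeled on the proof of Lemma~\ref{order}, but case~(ii) does not close. In Lemma~\ref{order}, case~(ii) works because of the \emph{algebraic identity} $\Theta^{(n)}_{\lambda'}=\Theta^{(n)}_{\lambda}+\Theta^{(n)}_{\lambda,\lambda'}$: if $\{\Theta^{(n)}_{\lambda}(u_0)\}_\pi$ is small and $\lfloor\Theta^{(n)}_{\lambda'}/\pi\rfloor$ has not jumped on $[u_0,u]$, this forces $\{\Theta^{(n)}_{\lambda,\lambda'}(u_0)\}_\pi$ to be close to $\pi$, whence Lemma~\ref{explosion3} gives a contradiction. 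For your pair $\bigl(\Theta^{(n)}_{\lambda},\,\Theta^{(n)}_{\lambda',\lambda''}\bigr)$ with $\lambda<\lambda'$ there is no such relation: knowing $\{\Theta^{(n)}_{\lambda}(u_0)\}_\pi$ is small tells you nothing about $\{\Theta^{(n)}_{\lambda',\lambda''}(u_0)\}_\pi$. You try to fill this with a ``phase decorrelation'' heuristic (different $e^{2i\theta_s(\kappa)}$, different resolvents $g_\kappa$), but nothing in Sections~3--5 establishes any quantitative decorrelation of this kind; the Ricatti/comparison machinery only controls each process separately. As stated, case~(ii) is a genuine gap.

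The paper sidesteps this entirely. First, by Lemma~\ref{monotonicity2} one has ${\cal P}_\lambda\subset{\cal P}_{\lambda'}$, so it suffices to prove ${\cal P}_{\lambda'}\cap{\cal P}_{\lambda',\lambda''}=\emptyset$, i.e.\ one may take the two lower indices equal. Now the algebraic identity is available: if on a short interval $[T_k/N,(T_k+2)/N]$ both $\lfloor\Theta^{(n)}_{\lambda'}/\pi\rfloor$ and $\lfloor\Theta^{(n)}_{\lambda',\lambda''}/\pi\rfloor$ jump, then either the fractional parts satisfy $\{\Theta^{(n)}_{\lambda''}(T_k/N)\}_\pi\le\{\Theta^{(n)}_{\lambda'}(T_k/N)\}_\pi$ (rare by Lemma~\ref{order}), or else $\lfloor\Theta^{(n)}_{\lambda''}/\pi\rfloor$ must jump \emph{at least twice} on that interval. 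The latter event has probability $O(N^{-2})$ per interval by Proposition~\ref{marginal} (the limit $P_{\lambda''}$ is Poisson, hence simple), so the sum over $k\le 2Nt+1$ is $O(N^{-1})\to 0$. No decorrelation is needed: the argument uses only the marginal Poisson convergence already proved.
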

{\it Idea of Proof} : 
we shall show 
${\cal P}_{\lambda}
\cap
{\cal P}_{\lambda, \lambda'} = \emptyset$.
Otherwise, 
we can find an atom 
$\xi$ 
of 
$\mu_{\lambda, \lambda'}^{(n)}$ 
near those of 
$\mu_{\lambda}^{(n)}$
for large 
$n$. 
If we have 
$\{ \Theta_{nt^{\gamma}}(\lambda') \}_{\pi}
<
\{ \Theta_{nt^{\gamma}}(\lambda) \}_{\pi}$
near 
$\xi$, 
this probability is estimated from above by 
Lemma \ref{order}.
If, on the contrary, we have 
$\{ \Theta_{nt^{\gamma}}(\lambda') \}_{\pi}
\ge
\{ \Theta_{nt^{\gamma}}(\lambda) \}_{\pi}$, 
then 
$\left\lfloor
\Theta_{nt^{\gamma}}(\lambda') / \pi
\right\rfloor$
jumps twice in a neighborhood of 
$\xi$. 
Since 
the jump points of 
$\left\lfloor
\Theta_{nt^{\gamma}}(\lambda') / \pi
\right\rfloor$
converges to a Poisson process, the probability of such events are relatively small.
\QED
By using these lemmas, we can show 
\begin{proposition}
\label{lambdamarginal}
Let 
$\nu^{(n)}$ 
be a point process on 
${\bf R}$ 
defined by 
\beq
\nu^{(n)}[\lambda_1, \lambda_2]
&=&
\left\lfloor
\frac {
\Theta_{n }(\lambda_2)
-
\Theta_{n }(\lambda_1)
}
{\pi}
\right\rfloor
\eeq
then 
\[
\nu^{(n)} 
\stackrel{d}{\to} Poisson 
\left( \frac {d \lambda}{\pi} \right).
\]
\end{proposition}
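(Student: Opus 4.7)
The plan is to reduce the statement to the time-indexed convergence already proved for the coupled processes $\mu^{(n)}_{\lambda,\lambda'}$ in Remark \ref{lambdadash}, together with the independence structure developed in Lemmas \ref{jointlyPoisson}--\ref{independent}. The key identification is that, since $t\mapsto \Theta_{nt^\gamma}(\lambda_2)-\Theta_{nt^\gamma}(\lambda_1)$ starts from $0$, is continuous in $t$, and its normalized floor is (up to the negligible errors handled in Remark \ref{monotonicity1}) non-decreasing,
\[
\nu^{(n)}[\lambda_1,\lambda_2]
=\left\lfloor\frac{\Theta_n(\lambda_2)-\Theta_n(\lambda_1)}{\pi}\right\rfloor
=\mu^{(n)}_{\lambda_1,\lambda_2}\bigl([0,1]\bigr).
\]
Remark \ref{lambdadash} then gives the one-dimensional marginal: $\nu^{(n)}[\lambda_1,\lambda_2]$ converges in distribution to a Poisson random variable of parameter $\pi^{-1}(\lambda_2-\lambda_1)\int_0^1\gamma t^{\gamma-1}\,dt=\pi^{-1}(\lambda_2-\lambda_1)$, which matches $\mathrm{Poisson}(d\lambda/\pi)$.

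To upgrade this to convergence of $\nu^{(n)}$ as a point process on $\mathbb{R}$, I would verify the defining property of a Poisson process: for any finite family of disjoint intervals $I_i=[\lambda_i^1,\lambda_i^2]$ the vector $(\nu^{(n)}(I_i))_{i=1}^k$ converges jointly to independent Poissons. First I would extend Lemma \ref{jointlyPoisson} to the $2k$-tuple $(P_{\lambda_i^1,\lambda_i^2})_i$ by the same tightness-plus-$\mathcal{F}_t$-Poisson argument used there, producing a jointly $\mathcal{F}_t$-Poisson family. Independence would then follow from pairwise disjointness of atoms: for $i<j$ apply Lemma \ref{independent} with $(\lambda,\lambda',\lambda'')=(\lambda_i^2,\lambda_j^1,\lambda_j^2)$ to get $\mathcal{P}_{\lambda_i^2}\cap\mathcal{P}_{\lambda_j^1,\lambda_j^2}=\emptyset$, and combine with the containment $\mathcal{P}_{\lambda_i^1,\lambda_i^2}\subset\mathcal{P}_{\lambda_i^2}$, which is obtained from Lemma \ref{monotonicity2} together with the identity $P_{\lambda_i^2}=P_{\lambda_i^1}+P_{\lambda_i^1,\lambda_i^2}$ at the level of atoms. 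Two jointly $\mathcal{F}_t$-Poisson processes with a.s.\ disjoint atoms are independent, so $(P_{\lambda_i^1,\lambda_i^2})_i$ are mutually independent, and evaluating each marginal at $t=1$ identifies the joint limit of $(\nu^{(n)}(I_i))_i$ as an independent tuple of Poissons with the correct parameters.

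The main obstacle I anticipate is justifying the containment $\mathcal{P}_{\lambda_i^1,\lambda_i^2}\subset \mathcal{P}_{\lambda_i^2}$ cleanly at the limit level. At finite $n$, every jump of $\lfloor(\Theta_{nt^\gamma}(\lambda_i^2)-\Theta_{nt^\gamma}(\lambda_i^1))/\pi\rfloor$ corresponds by near-monotonicity to a jump of $\lfloor\Theta_{nt^\gamma}(\lambda_i^2)/\pi\rfloor$ that is not matched by one of $\lfloor\Theta_{nt^\gamma}(\lambda_i^1)/\pi\rfloor$, but one must check that this pairing is preserved under the joint subsequential limit and is not disturbed by the $O(n^{-1/(2\gamma)}\log n^{1/\gamma})$ tolerance coming from Lemma \ref{indicator} and Lemma \ref{order}. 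Once this is done, everything reduces to a bookkeeping of intensities: the three intensities $\pi^{-1}\lambda_i^1\gamma t^{\gamma-1}$, $\pi^{-1}(\lambda_i^2-\lambda_i^1)\gamma t^{\gamma-1}$, $\pi^{-1}\lambda_i^2\gamma t^{\gamma-1}$ balance correctly, consistent with an additive decomposition of atoms, and the Poisson characterization yields Proposition \ref{lambdamarginal}.
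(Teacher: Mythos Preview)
Your proposal is correct and follows the same route the paper indicates: the paper does not spell out a proof of Proposition~\ref{lambdamarginal} beyond the phrase ``By using these lemmas, we can show'', and the lemmas it points to (Lemmas~\ref{jointlyPoisson}, \ref{monotonicity2}, \ref{independent} together with Remark~\ref{lambdadash}) are exactly the ingredients you assemble. The identification $\nu^{(n)}[\lambda_1,\lambda_2]=\mu^{(n)}_{\lambda_1,\lambda_2}([0,1])$ and the reduction of joint convergence to pairwise disjointness of atoms of jointly $\mathcal F_t$-Poisson processes is precisely the scheme the paper outlines in the introduction and in Section~5.3.

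Regarding the obstacle you flag, the containment $\mathcal P_{\lambda_1,\lambda_2}\subset\mathcal P_{\lambda_2}$: this is genuine but is resolved by the argument in Section~6.1, which is logically independent of Proposition~\ref{lambdamarginal}. There it is shown (from Propositions~\ref{Thetalimit} and~\ref{marginal} only) that the fractional part $\epsilon_t^{(n)}=\{\Theta_{nt}(\lambda)/\pi\}$ tends to $0$ in probability. Hence in any joint subsequential limit one has $P_{\lambda_2}=P_{\lambda_1}+P_{\lambda_1,\lambda_2}$ as counting processes, which together with $\mathcal P_{\lambda_1}\cap\mathcal P_{\lambda_1,\lambda_2}=\emptyset$ from Lemma~\ref{independent} forces $\mathcal P_{\lambda_2}=\mathcal P_{\lambda_1}\sqcup\mathcal P_{\lambda_1,\lambda_2}$; in particular your containment holds. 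With this in hand your bookkeeping of intensities and the disjoint-atom criterion for independence of $\mathcal F_t$-Poisson processes complete the proof exactly as the paper intends.
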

%
\section{Proof of Theorems}
\subsection{Proof of Theorem 2}
The first 
statement (1) of Theorem \ref{main} 
can proved in the same manner as   \cite{KN} Proposition 7.1 :
the only major difference is to show 
\beq
\lim_{t \to \infty}
\int_1^t 
s^{- 3 \alpha}
\exp 
\left(
- \int_s^t u^{-2 \alpha} du 
\right)
ds 
= 0
\eeq
which is straightforward. 
For 
the second statement (2) of Theorem \ref{main}, 
we summarize the facts obtained in previous sections. 
\\

\noindent
(1)
Let 
\beq
\zeta^{(n)}(\lambda) &:=& \sum_j \delta_{\tau_j^{(n)}(\lambda)}
\\
\mbox{where }
\quad
\tau_j^{(n)}(\lambda) &:=&
\inf \left\{
t \in [0,1] \, | \, 
\Theta_{nt}(\lambda) = j \pi
\right\}.
\eeq
Then by Proposition \ref{marginal} 
\[
\zeta^{(n)}(\lambda)
\to
Q_{\lambda} := Poisson
\left(
\frac{\lambda}{\pi}
1_{[0,1]} dt
\right).
\]
In other words, the function 
$t \mapsto \left\lfloor
\Theta_{nt}(\lambda)/ \pi
\right\rfloor$
converges to a Poisson jump process.
\\
(2)
By Proposition \ref{Thetalimit}, 
${\bf E}[ 
\Theta_{n t}(\lambda)
]
\to
\lambda t$.
\\
(3)
For 
$0 < \lambda < \lambda'$, 
let 
\beq
\zeta^{(n)}(\lambda, \lambda')
&=&
\sum_j \delta_{\tau_j^{(n)}(\lambda, \,\lambda')}
\\
\mbox{where }
\quad
\tau_j^{(n)}(\lambda, \lambda') 
&:=&
\inf \left\{
t \in [0,1] \, | \, 
\Theta_{nt}(\lambda')
-
\Theta_{nt}(\lambda) 
= j \pi
\right\}.
\eeq
Then
\beq
\zeta^{(n)}(\lambda, \lambda')
\to
Q_{\lambda, \lambda'} := Poisson
\left(
\frac{\lambda' - \lambda}{\pi}
1_{[0,1]} dt
\right)
\eeq
and 
$Q_{\lambda}$ 
and 
$Q_{\lambda, \lambda'}$
are independent.\\

\noindent
By (1), (2), we have 
\beq
{\bf E}
\left[
\left\lfloor
\frac {
\Theta_{nt}(\lambda)
}
{\pi}
\right\rfloor
\right]
\to
\frac {\lambda}{\pi} 
t^{}, 
\quad
{\bf E}
\left[
\frac {\Theta_{nt}(\lambda)}{\pi}
\right]
\to
\frac {\lambda}{\pi} 
t^{}
\eeq
so that, writing 
\beq
\frac {\Theta_{\lambda}(t)}{\pi}
&=&
\left\lfloor
\frac {\Theta_{\lambda}(t)}{\pi}
\right\rfloor
+
\epsilon_t^{(n)}, 
\quad
\epsilon_t^{(n)} \ge 0
\eeq
we have 
${\bf E}[ \epsilon_t^{(n)} ] \to 0$ 
which implies 
$\epsilon_t^{(n)} \to 0$
in probability
\footnote{
In \cite{VV}, 
they showed that, for 
$\beta \le 2$, 
$\alpha_t(\lambda)$ 
converges to 
$\alpha_{\infty}(\lambda)$ 
from above which is consistent with this argument.
}.
It follows that 
$t \mapsto 
\Theta_{nt}(\lambda)/ \pi$
also converges to a Poisson jump process, and in particular, 
\beq
\widehat{\Theta}_t(\lambda)
:=
\lim_{n \to \infty}
\Theta_{n t^{}}(\lambda)
\eeq
takes values in 
$\pi {\bf Z}$
for a.e.
$t$.
Moreover, 
by Remark \ref{monotonicity1} and 
Lemma \ref{monotonicity2}, 
$\widehat{\Theta}_t(\lambda)$ 
is non-decreasing with respect to  
$(t, \lambda)$, 
so that it is a distribution function of a point process 
$\eta$
on 
${\bf R}^2$ 
whose marginals on rectangles have Poisson distribution.
Let 
\beq
N(t_1, t_2 ; \lambda_1, \lambda_2)
&=&
\left(
\widehat{\Theta}_{t_2}(\lambda_2) - \widehat{\Theta}_{t_1}(\lambda_2)
\right)
-
\left(
\widehat{\Theta}_{t_2}(\lambda_1) - \widehat{\Theta}_{t_1}(\lambda_1)
\right)
\eeq
be the number of atoms of 
$\eta$ 
in 
$[t_1, t_2] \times [\lambda_1, \lambda_2]$.
By Lemma \ref{independent}, 
\beq
N(t_1, t'_1 ; \lambda_1, \lambda'_1), 
\cdots, 
N(t_n, t'_n ; \lambda_n, \lambda'_n)
\eeq
are independent obeying 
$Poisson \left(
\pi^{-1}
(\lambda'_j - \lambda_j)
\left( t'_j - t_j \right)
\right)$, 
$j=1, 2, \cdots, n$
which proves the statement (2) of Theorem \ref{main}.
%
\subsection{Proof of Theorem \ref{Poisson}}
By Proposition \ref{lambdamarginal}, 
we have 
\beq
( \Theta_n (c_i) - \Theta_n (d_i), 
i=1, \cdots, k)
\stackrel{d}{\to}
( \widehat{\Theta}_1(c_i) - \widehat{\Theta}_1(d_i), 
i=1, \cdots, k)
\eeq
for any 
$k \in {\bf N}$, $c_i$, $d_i \in {\bf R}$
and 
$\widehat{\Theta}_1(\cdot)$
is a Poisson jump process.
By 
\cite{KN} Lemma 9.1, 
\[
\Theta_n(\cdot) \stackrel{d}{\to} \widehat{\Theta}_1(\cdot)
\]
as a non-decreasing function valued process.
By Skorohod's theorem, 
we may suppose that 
\[
\Theta_n(c) \to \widehat{\Theta}_1(c), 
\quad
a.s.
\]
at any continuity point of 
$\widehat{\Theta}_1(c)$.
Fix 
a.s.
$\omega \in \Omega$, 
$K \in {\bf N}$, 
$\epsilon > 0$
and let 
$\tau_1, \tau_2, \cdots$
be the jump points of 
$\widehat{\Theta}_1(\cdot)$.
Then for large 
$n$, 
\beq
&&
| \Theta_n(\tau_k - \epsilon) - (k-1) \pi| < \epsilon
\\
&&
| \Theta_n(\tau_k+ \epsilon) - k \pi | < \epsilon, 
\quad
k = 1, 2, \cdots, K.
\eeq
By the monotonicity of 
$\Theta_n(\cdot)$, 
if 
$\Theta_n(\tau_k-\epsilon) < y < \Theta_n(\tau_k + \epsilon)$, 
we have 
\beq
| ( \Theta_n )^{-1} (y) - \tau_k | < \epsilon
\eeq
so that, if 
$(k-1) \pi + \epsilon < y < k \pi - \epsilon$, 
we have
\beq
| ( \Theta_n )^{-1}(y) - \tau_k | < \epsilon.
\eeq
Let 
$\Xi (y)$ 
be the inverse of the Poisson jump process
$\widehat{\Theta}_1(\cdot)$
(it may be set  
to take arbitrary values at the discontinuity points). 
%
%
Since 
$\widehat{\phi}_t$
is uniformly distributed on 
$[0, \pi)$, 
its distribution never have a atom at 
$0$ 
so that, taking 
$n \to \infty$ 
in (\ref{Laplace}), we have 
\beq
{\bf E}[ e^{- \xi_L (f)} ]
\to 
{\bf E}\left[
\exp 
\left(
- \sum_{n \in {\bf Z}}
f 
\left(
\Xi(n \pi + \theta)
\right)
\right)
\right]
=
{\bf E}[ e^{ - \zeta_P (f)} ]
\eeq
where 
$\zeta_P = Poisson (\pi^{-1} d \lambda)$.
%

\section{Appendix I}
In this section 
we prepare some estimates necessary to prove Proposition \ref{Ricatti}.
The basic strategy 
of our computation is that, for the terms whose integrand contains a factor of the form 
$e^{i \kappa s} H(X_s) ds$
$(\kappa \ne 0)$, 
we use eq.(\ref{Itoformula}) 
and perform the integration by parts to obtain the terms whose integrands are multiplied by 
$a(s)$ 
or 
$a'(s)$ 
so that they have better decay. 
We may 
continue this process as many times we need to finally obtain the negligible terms. 
On the other hand, 
for the terms with 
$H(X_s) ds$
(that is, 
$\kappa = 0$), 
we use eq.(\ref{SecondItoformula})
instead to obtain the 2nd term of RHS in 
eq.(\ref{Ricattiequation}).
We first consider 
the following quantity which often appears in the computation of 
$J(k ; j ; H)$. 
\beq
K(k, l ; j ; H)
&:=&
\int_0^{nt}
\sin (\Delta\theta_s)
e^{2ik \theta_s(\kappa_c) + 2il \theta_s(\kappa_d)}
a(s)^j 
H(X_s) ds, 
\\
&&
k, l \in {\bf Z}, 
\quad
j \in {\bf N}, 
\quad
H \in C^{\infty}(M).
\eeq
\begin{lemma}
\label{recursion}
Suppose 
$(k, l) \ne (0,0)$
and 
$j \ge 2$.
Then
\beq
&&
K(k, l ; j ; H)
\\
& \approx &
- \frac {2ik}{2 \kappa_0}
\Biggl\{
\frac 12 \cdot 
K(k+2, l ; j+1 ; F R_{2 (k+l) \kappa_0} H)
+
\frac 12 \cdot
K(k-2, l ; j+1; F R_{2 (k+l) \kappa_0} H)
\\
&& \qquad \qquad \qquad -  
K(k, l ; j+1; F R_{2 (k+l) \kappa_0} H)
\Biggr\}
\\
&&
- \frac {2il}{2 \kappa_0}
\Biggl\{
\frac 12 \cdot 
K(k, l+2 ; j+1; F R_{2 (k+l) \kappa_0} H)
+
\frac 12 \cdot 
K(k, l-2 ; j+1; F R_{2 (k+l) \kappa_0} H)
\\
&& \qquad \qquad \qquad  - 
K(k, l ; j+1; F R_{2 (k+l) \kappa_0} H)
\Biggr\}.
\eeq
\end{lemma}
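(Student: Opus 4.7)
The proof is by Itô's formula (\ref{Itoformula}) combined with one integration by parts, the essential observation being that for $(k,l)\neq(0,0)$ the phase $e^{2ik\theta_s(\kappa_c)+2il\theta_s(\kappa_d)}$ is dominated by the non-resonant temporal oscillation $e^{i\mu s}$ with $\mu:=2(k+l)\kappa_0$. I would factor the integrand as $\Phi_s\cdot e^{i\mu s}H(X_s)$ with slowly varying prefactor
\[
\Phi_s:=\sin(\Delta\theta_s)\cdot e^{2ik(\theta_s(\kappa_c)-\kappa_0 s)+2il(\theta_s(\kappa_d)-\kappa_0 s)}\cdot a(s)^j.
\]
(When $k+l=0$ while $(k,l)\neq(0,0)$, one uses (\ref{SecondItoformula}) and $R_0=L^{-1}(\cdot-\langle\cdot\rangle)$ in parallel; the mean correction is absorbed in $E(nt)$.) Applying (\ref{Itoformula}) and integrating by parts yields
\[
K(k,l;j;H)=\bigl[\Phi_s e^{i\mu s}R_\mu H(X_s)\bigr]_0^{nt}-\int_0^{nt}e^{i\mu s}R_\mu H(X_s)\,d\Phi_s-\int_0^{nt}\Phi_s e^{i\mu s}\nabla R_\mu H(X_s)\,dX_s.
\]
The boundary term vanishes at $s=0$ because $\Delta\theta_0=0$, and is $O(a(nt)^j)$ at $s=nt$, both absorbed in $E(nt)$; the stochastic integral has quadratic variation bounded by $Cna(nt)^{2j}\,dt$, which for $j\ge 2$ is dominated by the admissible $Cna(nt)^3\,dt$ and hence lies in the negligible-martingale class associated with $A\approx B$.

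Next I compute $d\Phi_s$ by the product rule, obtaining three pieces: (i) a main piece $\sin(\Delta\theta_s)\,e^{2ik(\theta_s(\kappa_c)-\kappa_0 s)+2il(\theta_s(\kappa_d)-\kappa_0 s)}\cdot i\omega_s\,a(s)^j\,ds$ with $i\omega_s:=2ik(\theta_s'(\kappa_c)-\kappa_0)+2il(\theta_s'(\kappa_d)-\kappa_0)$; (ii) a piece $\cos(\Delta\theta_s)(\Delta\theta)_s'\cdot(\ldots)\cdot a(s)^j\,ds$ from $d\sin(\Delta\theta_s)$; and (iii) a piece $\sin(\Delta\theta_s)\cdot(\ldots)\cdot ja(s)^{j-1}a'(s)\,ds$. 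Piece (iii) carries an explicit $a'$ factor and goes directly into $E(nt)$ via $b(s)=a(s)/n+a'(s)+a(s)^{j_0}$. For (ii), the Pr\"ufer equation $\theta_s'(\kappa)=\kappa+\tfrac{a(s)F(X_s)}{2\kappa}(\cos 2\theta_s(\kappa)-1)$ gives $(\Delta\theta)_s'=(c-d)/n+\tfrac{a(s)F(X_s)}{2\kappa_0}\,Re(e^{2i\theta_s(\kappa_c)}-e^{2i\theta_s(\kappa_d)})+O(1/n)$; since $\cos(\Delta\theta_s)$ is bounded and the factor $(c-d)/n$ or $a(s)\cdot r_s^{(n)}(1)$ provides the extra smallness, this piece is again of $\cosh$-type with $b(s)$. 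Note that no $1/\sin(\Delta\theta_s)$ blow-up occurs, since what gets differentiated is $\sin$, not $\log\sin$.

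For the main piece (i), substituting $\theta_s'(\kappa)-\kappa_0=(\kappa-\kappa_0)+\tfrac{a(s)F(X_s)}{2\kappa}(\cos 2\theta_s(\kappa)-1)$ with $\kappa_{c,d}=\kappa_0+O(1/n)$,
\[
i\omega_s=\frac{ia(s)F(X_s)}{\kappa_0}\bigl[k(\cos 2\theta_s(\kappa_c)-1)+l(\cos 2\theta_s(\kappa_d)-1)\bigr]+O(1/n),
\]
the $O(1/n)$ remainder going into $E(nt)$. Expanding $\cos 2\theta=\tfrac12(e^{2i\theta}+e^{-2i\theta})$ and multiplying back by $\sin(\Delta\theta_s)\cdot(\text{phase})\cdot e^{i\mu s}R_\mu H(X_s)\cdot a(s)^j$, each monomial reassembles exactly into a $K$-integral with one extra power of $a$, integrand $F\cdot R_{2(k+l)\kappa_0}H$, and phase indices shifted as in the statement. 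Collecting all six monomials produces precisely the right-hand side of the asserted recursion.

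The main obstacle is the detailed bookkeeping that every side contribution — boundary terms, the $(c-d)/n$ and $r_s^{(n)}(1)$ pieces of $(\Delta\theta)_s'$ in (ii), the $a'$ factor in (iii), and the $O(1/n)$ remainder inside $\omega_s$ — genuinely fits into $E(nt)$ and the negligible-martingale class tolerated by $A\approx B$. This requires repeated use of the decay assumptions $a(s)=O(s^{-\alpha})$ and $a'(s)=O(s^{-\alpha-1})$, together with the uniform bound $|\Phi_s|\le a(s)^j$, so that each error is bounded by one of $\cosh R(s)\cdot b(s)$, $\tanh R(s)\cdot a(s)^3$, a uniform $O(n^{-\alpha})$, or a constant — exactly as permitted by the structure of $E(nt)$ in Proposition \ref{Ricatti}.
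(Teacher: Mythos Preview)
Your proof is correct and follows the same route as the paper: apply (\ref{Itoformula}) with $\kappa=2(k+l)\kappa_0$, integrate by parts, and show that the boundary term, the $(a^j)'$ contribution, the $\cos(\Delta\theta_s)(\Delta\theta)'_s$ contribution, and the stochastic integral are all $\approx 0$, while differentiating the residual phase produces exactly the six shifted $K$-integrals via $\cos 2\theta=\tfrac12(e^{2i\theta}+e^{-2i\theta})$. One minor slip in bookkeeping: your pieces (ii) and (iii) carry the bounded factor $\cos(\Delta\theta_s)$ or $\sin(\Delta\theta_s)$, not $\cosh R(s)=1/\sin(\Delta\theta_s)$, so they belong to the bounded (``$\tanh$-type'') slot of $E(nt)$ rather than the ``$\cosh$-type'' you name --- but this is only a labeling issue, and the paper itself simply asserts $K_2,K_4\approx 0$ for $j\ge 2$ without finer classification.
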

\begin{proof}
We note 
\beq
&&
2ik \theta_s(\kappa_c) + 2il \theta_s(\kappa_d)
\\
&=&
2i (k+l) \kappa_0 s
+
\frac {2i (ck+dl)}{n} s
+
2ik \tilde{\theta}_s(\kappa_c)
+
2il \tilde{\theta}_s (\kappa_d).
\eeq
Using 
(\ref{Itoformula}) 
with 
$\kappa = 2 (k+l)\kappa_0$ 
and  
$f = H$, 
we have 
\beq
&&
K(k, l ; j ; H)
\\
&=&
\int_0^{nt}
\sin (\Delta\theta_s)
\exp 
\left[
\frac {2i(ck+dl)}{n} s 
+
2i k \tilde{\theta}_s(\kappa_c)
+
2i l \tilde{\theta}_s(\kappa_d)
\right]
a(s)^j 
e^{2i(k+l) \kappa_0 s} H(X_s) ds
\\
&=&
\int_0^{nt}
\sin (\Delta\theta_s)
\exp 
\left[
\frac {2i(ck+dl)}{n} s 
+
2i k \tilde{\theta}_s(\kappa_c)
+
2i l \tilde{\theta}_s(\kappa_d)
\right]
a(s)^j 
\\
&& \qquad\qquad \times 
\left\{
d \left(
e^{2i (k+l) \kappa_0 s}
R_{2(k+l) \kappa_0} H(X_s)
\right)
-
e^{2i (k+l) \kappa_0 s}
\nabla R_{2(k+l) \kappa_0} H(X_s)
d X_s
\right\}
\eeq
For simplicity, we set 
\beq
\widetilde{H} := R_{2(k+l) \kappa_0} H.
\eeq
Integration by parts yields 
\beq
&&
K(k, l ; j ; H)
\\
&=&
\left[
\sin (\Delta \theta_s)
e^{2ik \theta_s (\kappa_c) + 2i \theta_s(\kappa_d)}
a(s)^j
\widetilde{H}(X_s)
\right]_0^{nt}
\\
&& - 
\int_0^{nt} 
\cos (\Delta \theta_s)
\\
&& \times 
\Biggl\{
\frac {c-d}{n}
+
\frac {1}{2 \kappa_c}
Re \left[
e^{2i \theta_s(\kappa_c)} - 1
\right]
a(s) F(X_s)
-
\frac {1}{2 \kappa_d}
Re \left[
e^{2i \theta_s(\kappa_d)} - 1
\right]
a(s) F(X_s)
\Biggr\}
\\
&& \times 
e^{2i k \theta_s(\kappa_c) + 2il \theta_s(\kappa_d)}
a(s)^j 
\widetilde{H}(X_s) ds
\\
&& -
\int_0^{nt}
\sin (\Delta \theta_s)
\\
&& \times
\Biggl\{
\frac {2i(ck+dl)}{n}
+
\frac {2ik}{2 \kappa_c}
Re \left(
e^{2i \theta_s(\kappa_c)} - 1
\right)
a(s) F(X_s)
\\
&& \qquad
+
\frac {2il}{2 \kappa_d}
Re \left(
e^{2i \theta_s(\kappa_d)} - 1
\right)
a(s) F(X_s)
\Biggr\}
\\
&& \times 
e^{2i k \theta_s(\kappa_c) + 2il \theta_s(\kappa_d)}
a(s)^j 
\widetilde{H}(X_s) ds
\\
&& - 
\int_0^{nt}
\sin (\Delta \theta_s)
e^{2i k \theta_s(\kappa_c) + 2il \theta_s(\kappa_d)}
(a(s)^j)' 
\widetilde{H}(X_s) ds
\\
&& - 
\int_0^{nt}
\sin (\Delta \theta_s)
e^{2i k \theta_s(\kappa_c) + 2il \theta_s(\kappa_d)}
a(s)^j 
\nabla \widetilde{H}(X_s) dX_s
\\
&=:&
K_1 + \cdots + K_5.
\eeq
Then 
$K_1 = O(n^{-\alpha}) \approx 0$.
Since 
$j \ge 2$, 
$K_2$, $K_4$ 
is included in 
$E(nt)$ 
and thus negligible : 
$K_2, K_4 \approx 0$.
$K_5$ 
is a martingalge with negligible quadratic variation : 
$\langle K_5 \rangle = O\left(
\int_0^{nt} a(s)^{2j} ds \right)$
so that 
$K_5 \approx 0$. 
Therefore 
\beq
K(j ; k,l ; H)
\approx
K_3.
\eeq
In the integrand of 
$K_3$, 
the 1st term has 
$O(n^{-1})$ 
factor and thus negligible.
In the 2nd and 3rd terms, 
we can replace 
$2ik/2 \kappa_c$, $2il/2 \kappa_d$
by 
$2ik/2 \kappa_0$, $2il/2 \kappa_0$
respectively which produces negligible 
$O(n^{-1})$
error.
Hence 
\beq
&&
K_3
\\
&\approx&
-
\int_0^{nt}
\sin (\Delta \theta_s)
\\
&& \times
\Biggl\{
\frac {2ik}{2 \kappa_0}
Re \left(
e^{2i \theta_s(\kappa_c)} - 1
\right)
a(s) F(X_s)
+
\frac {2il}{2 \kappa_0}
Re \left(
e^{2i \theta_s(\kappa_d)} - 1
\right)
a(s) F(X_s)
\Biggr\}
\\
&& \times 
e^{2i k \theta_s(\kappa_c) + 2il \theta_s(\kappa_d)}
a(s)^j 
\widetilde{H}(X_s) ds
\\
&=&
- \frac {2ik}{2 \kappa_0}
\Biggl\{
\frac 12 \cdot 
K(k+2, l ; j+1 ; F \cdot\widetilde{H})
+
\frac 12
K(k-2, l ; j+1 ; F \cdot\widetilde{H})
\\
&& \qquad \qquad \qquad -  
K(k, l ; j+1 ; F \cdot\widetilde{H})
\Biggr\}
\\
&&
- \frac {2il}{2 \kappa_0}
\Biggl\{
\frac 12 \cdot 
K(k, l+2 ; j+1 ; F \cdot\widetilde{H})
+
\frac 12 \cdot 
K(k, l-2 ; j+1 ; F \cdot\widetilde{H})
\\
&& \qquad \qquad \qquad  - 
K(k, l ; j+1 ; F \cdot\widetilde{H})
\Biggr\}.
\eeq
\QED
\end{proof}
\begin{lemma}
\label{cancel}
Suppose 
$(k, l) \ne (0,0)$
and 
$j \ge 2$.
Then 
\beq
K(k, l ; j ; H) - K(l, k ; j ; H)
\approx 
0.
\eeq
\end{lemma}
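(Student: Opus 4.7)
The plan is to apply Lemma \ref{recursion} iteratively to the difference $D(k,l;j;H):=K(k,l;j;H)-K(l,k;j;H)$ and exploit the fact that the recursion respects the swap $k\leftrightarrow l$ because $\tilde H:=R_{2(k+l)\kappa_0}H$ depends only on the symmetric quantity $k+l$.

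First, I would apply Lemma \ref{recursion} to each of $K(k,l;j;H)$ and $K(l,k;j;H)$ and subtract. Since the same $\tilde H$ appears in both expansions, grouping the six pairs of terms on the right hand side into differences of swapped-index pairs yields a closed recursion within the class of $D$'s of the schematic form
\[
D(k,l;j;H)\approx -\frac{ik}{2\kappa_0}\Bigl\{D(k{+}2,l;j{+}1;F\tilde H)+D(k{-}2,l;j{+}1;F\tilde H)-2D(k,l;j{+}1;F\tilde H)\Bigr\}
\]
\[
-\frac{il}{2\kappa_0}\Bigl\{D(k,l{+}2;j{+}1;F\tilde H)+D(k,l{-}2;j{+}1;F\tilde H)-2D(k,l;j{+}1;F\tilde H)\Bigr\}.
\]
Each application strictly increases the power of $a(s)$ in every surviving $D$ by one while the coefficients remain bounded by a constant depending only on $\kappa_0$ and the fixed initial $(k,l)$.

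Iterating the recursion $(j_0-j)$ times replaces $D(k,l;j;H)$ (modulo terms already absorbed in the $\approx$ equivalence) by a finite linear combination of terms $D(a,b;j_0;H')$ with bounded shifted indices and smooth $H'$. At this level the integrand of each $K(a,b;j_0;H')$ is pointwise dominated by $\lVert H'\rVert_\infty\,|\sin(\Delta\theta_s)|\,a(s)^{j_0}\le \lVert H'\rVert_\infty a(s)^{j_0}$; in particular, trivially dominated by the form $\cosh(R(s))\,a(s)^{j_0}\lVert H'\rVert_\infty$, which is exactly the $a(s)^{j_0}$-piece of $b(s)$ in the definition of $E(nt)$ in Proposition \ref{Ricatti}. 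Hence each such $D(a,b;j_0;H')$ is of the type absorbed in $E(nt)$, and we conclude $D(k,l;j;H)\approx 0$.

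The main obstacle I expect is bookkeeping on the martingale residues rather than on the drift part: each application of Lemma \ref{recursion} contributes a martingale $K_5$-type term with $d\langle\,\cdot\,\rangle_t\le Cn\,a(nt)^{2j'}\,dt$ at stage $j'$, and one must verify that the finitely many martingales accumulated over the iteration satisfy $d\langle N\rangle_t\le C\cdot n\,a(nt)^3\,dt$, i.e.\ they are all dominated (after $j'\ge 2$) by the quadratic-variation threshold defining $\approx$. Since $a(nt)^{2j'}\le a(nt)^3$ for $j'\ge 2$ and $nt$ large, this bound is preserved, so no new obstruction arises and the iteration closes cleanly.
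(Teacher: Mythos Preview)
Your proposal is correct and follows essentially the same approach as the paper: iterate Lemma~\ref{recursion} until the power of $a(s)$ reaches $j_0$, using that the problematic $K(0,0;\cdot)$ terms (to which the recursion does not apply) cancel between the expansions of $K(k,l)$ and $K(l,k)$. Your packaging via the antisymmetric combination $D(k,l)=K(k,l)-K(l,k)$ makes this cancellation automatic (since $D(0,0)=0$ and, as you check, $\tilde H$ depends only on $k+l$), which is slightly cleaner than the paper's terse remark that such terms ``equally come from the 1st and 2nd terms and cancel each other,'' but the substance of the argument is identical.
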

\begin{proof}
We compute 
each terms by Lemma \ref{recursion}.
If we have 
terms of the form 
$K(0, 0 ; j+1 ; H')$, 
it equally comes from the 1st and 2nd terms and cancels each other. 
Therefore 
the terms of the form
$K(k', l' ; j+1 ; H')$ 
with 
$(k', l') \ne (0,0)$ 
only remain so that we can continue to use 
Lemma \ref{recursion} 
at least for 
$(j_0 - j)$ - times 
so that the quantity in question is equal to the sum of the terms of the form 
$K(k', l' ; j_0 ; H')$ 
which are negligible.
\QED
\end{proof}
Here we recall 
the definition of 
$J(k ; j ; H)$ : 
\beq
J(k ; j ; H)(\kappa)
&:=&
\int_0^{nt}
\frac {1}{\sin (\Delta \theta_s)}
e^{2i k \theta_s(\kappa)}
a(s)^j 
H(X_s) ds
\eeq
where 
$k \in {\bf Z}$,
$j \ge 1$, 
and 
$H \in C^{\infty}(M)$.
We compute 
$J(k ; j ; H)$ 
by using 
Lemmas \ref{recursion}, \ref{cancel}. 
\begin{proposition}
\label{J}
\mbox{}\\
(1)
$j=1$, 
$k=1$ : 
\begin{eqnarray}
&&
\Delta J(k ; j ; H)
\nonumber
\\
&\approx&
\frac {1}{\kappa_0}
\langle F\cdot R_{2k \kappa_0}H \rangle
\int_0^{nt}
\cos(\Delta \theta_s)
a(s)^{j+1} 
ds
\nonumber
\\
&&
- 
\frac {2ik}{2 \kappa_0}
\Biggl\{
\frac 12
\Delta J(k+1 ; j+1 ; F R_{2k \kappa_0}H)
-
\Delta J(k ; j+1 ; F R_{2k \kappa_0}H)
\Biggr\}
\nonumber
\\
&&+ M_t
\label{computeJ}
\end{eqnarray}
where 
$M$ 
is a martingale whose quadratic variation satisfies 
\beq
\langle M, M \rangle_t
&=&
o \left(
\int_0^{nt} a(s)^{2j} ds
\right)
\\
\langle M, \overline{M} \rangle_t
&=&
4 \langle \psi \rangle
\int_0^{nt} a(s)^{2j} ds
(1 + o(1)), 
\quad
\psi := 
[
R_{2k \kappa_0}(H), 
\overline{R_{2k \kappa_0}(H)}
].
\eeq
(2)
$j \ge 2$, 
$k \ne 0$ : 
\begin{eqnarray}
&&
\Delta J(k ; j ; H)
\nonumber
\\
&\approx&
- 
\frac {2ik}{2 \kappa_0}
\Biggl\{
\frac 12
\Delta J(k+1 ; j+1 ; F R_{2k \kappa_0}H)
+
\frac 12
\Delta J(k-1 ; j+1 ; F R_{2k \kappa_0}H)
\nonumber
\\
&& \qquad\qquad
-
\Delta J(k ; j+1 ; F R_{2k \kappa_0}H)
\Biggr\}.
\label{computeJ2}
\end{eqnarray}
\end{proposition}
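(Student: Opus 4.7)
The plan is to imitate the proof of Lemma \ref{recursion} with $K$ replaced by $J$. The essential observation is that, although each individual $J(k;j;H)(\kappa)$ has the singular factor $1/\sin(\Delta\theta_s)$, the difference factors as
\[
\frac{e^{2ik\theta_s(\kappa_c)}-e^{2ik\theta_s(\kappa_d)}}{\sin(\Delta\theta_s)}
= 2i\, e^{ik(\theta_s(\kappa_c)+\theta_s(\kappa_d))}\,\frac{\sin(k\Delta\theta_s)}{\sin(\Delta\theta_s)},
\]
so the integrand of $\Delta J(k;j;H)$ is regular (uniformly bounded, by a Chebyshev polynomial in $\cos\Delta\theta_s$), and all of the manipulations below remain within the class of honest non-singular integrals.

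For case (2), $j\ge 2$ and $k\ne 0$, I would write $e^{2ik\theta_s(\kappa)}H(X_s)\,ds = e^{2ik(\theta_s(\kappa)-\kappa_0 s)}\cdot e^{2ik\kappa_0 s}H(X_s)\,ds$ and apply (\ref{Itoformula}) with frequency $2k\kappa_0$ to each term of $\Delta J$, setting $\tilde H:=R_{2k\kappa_0}H$. Integration by parts then produces (a) a boundary term, whose $s=nt$ contribution is $a(nt)^j=o(1)$ and whose $s=0$ contribution is a non-random constant absorbed by the $+C$ in $E(nt)$; (b) derivative terms carrying factors $a'(s)a(s)^{j-1}$ from $d(a(s)^j)$, or $(c-d)/n\cdot a(s)^j$ from the drift part of $d(\Delta\theta_s)$, both fitting into $E(nt)$; (c) a martingale whose quadratic variation is $O(\int a(s)^{2j}\,ds)$, negligible since $2j\ge 4$; and (d) the main term produced by the stochastic part of $d\tilde\theta_s(\kappa)$, which via (\ref{integralequation}) carries an extra factor $\frac{a(s)F(X_s)}{2\kappa_\cdot}(\cos 2\theta_s(\kappa_\cdot)-1)$. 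Using $e^{2ik\theta}(\cos 2\theta-1)=\frac12 e^{2i(k+1)\theta}+\frac12 e^{2i(k-1)\theta}-e^{2ik\theta}$ on each of the $\kappa_c$ and $\kappa_d$ contributions and replacing $\kappa_\cdot^{-1}$ by $\kappa_0^{-1}$ with $O(n^{-1})$ error, one recombines the integrand back into $\Delta J$-form with $H\mapsto F\tilde H$, $j\mapsto j+1$, and $k\mapsto k\pm 1$ or $k$, with coefficients $\frac12,\frac12,-1$; this is exactly (\ref{computeJ2}).

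For case (1), $j=1$ and $k=1$, the same procedure applies with two essential modifications. First, the martingale (c) is no longer negligible and constitutes the claimed $M_t$; its quadratic variation is $|2i\,e^{i(\theta_s(\kappa_c)+\theta_s(\kappa_d))}|^2 a(s)^2 |\nabla\tilde H|^2(X_s)\,ds=4\,a(s)^2\,[\tilde H,\overline{\tilde H}](X_s)\,ds$, and the ergodic theorem for $X_s$ on the torus $M$ replaces $[\tilde H,\overline{\tilde H}](X_s)$ by $\langle\psi\rangle$ in the long-time average weighted by $a(s)^2$, yielding the stated $\langle M,\overline M\rangle_t$; the $\langle M,M\rangle_t$ bracket carries the extra factor $e^{2i(\theta_s(\kappa_c)+\theta_s(\kappa_d))}$ oscillating at rate $4\kappa_0$ and so contributes only $o(\int a(s)^2\,ds)$. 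Second, in step (d) the $k-1=0$ branch $\Delta J(0;2;F\tilde H)$ is identically zero, but in its place a genuinely non-oscillatory remainder appears: expanding $e^{i(\theta_s(\kappa_c)+\theta_s(\kappa_d))}[\cos 2\theta_s(\kappa_c)+\cos 2\theta_s(\kappa_d)-2]$ and isolating the fast oscillations (at rates $\ge 2\kappa_0$) into the shifted $\Delta J$-pieces $\frac12\Delta J(2;2;F\tilde H)-\Delta J(1;2;F\tilde H)$ leaves a single slowly-varying residue proportional to $\cos(\Delta\theta_s)$; applying (\ref{SecondItoformula}) (not (\ref{Itoformula})) to the resulting $F(X_s)\tilde H(X_s)\,ds$ then extracts the main average term $\frac{1}{\kappa_0}\langle F\tilde H\rangle\int\cos(\Delta\theta_s)\,a(s)^2\,ds$, plus further martingale and error contributions that are folded into $M_t$ and $E(nt)$.

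The main obstacle is the bookkeeping in case (1): one must carefully decompose $e^{i(\theta_s(\kappa_c)+\theta_s(\kappa_d))}[\cos 2\theta_s(\kappa_c)+\cos 2\theta_s(\kappa_d)-2]$ into the fast oscillations that recombine cleanly into $\Delta J(2;2;\cdot)$ and $\Delta J(1;2;\cdot)$ plus the single slow residue $\cos(\Delta\theta_s)$ that calls for (\ref{SecondItoformula}), while tracking every prefactor of the form $(2i)^a(2\kappa_0)^{-b}$ so that exactly the constant $1/\kappa_0$ emerges in front of $\langle F\tilde H\rangle$. Once this is done, case (2) allows iterating the recursion through $j=j_0$, at which stage the terminal $a(s)^{j_0}$ factor is absorbed by the $a(s)^{j_0}$ term in $b(s)$ of $E(nt)$.
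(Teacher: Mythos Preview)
There is a genuine gap in your bookkeeping. When you integrate by parts after applying (\ref{Itoformula}), the factor being differentiated is
\[
\frac{1}{\sin(\Delta\theta_s)}\,e^{2ik(\theta_s(\kappa_x)-\kappa_0 s)}\,a(s)^j,
\]
and the derivative of $1/\sin(\Delta\theta_s)$ is $-\dfrac{\cos(\Delta\theta_s)}{\sin^2(\Delta\theta_s)}\,d(\Delta\theta_s)$. The full $d(\Delta\theta_s)$ contains not only the $(c-d)/n$ piece you listed in step~(b) but also
\[
\frac{a(s)F(X_s)}{2\kappa_0}\,Re\bigl[e^{2i\theta_s(\kappa_c)}-e^{2i\theta_s(\kappa_d)}\bigr]\,ds,
\]
which is $O(a(s))$, not $O(n^{-1})$. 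In the paper this produces the term $\Delta J_2$. For $j\ge 2$ one checks separately that $\Delta J_2\approx 0$, so your case~(2) survives once that is verified; but for $j=k=1$ this is precisely the \emph{source} of the $\cos(\Delta\theta_s)$ term: after using (\ref{sin}) one obtains
\[
\Delta J_2 \approx \frac{1}{\kappa_0}\int_0^{nt}\cos(\Delta\theta_s)\,\bigl(1-e^{2i(\theta_s(\kappa_c)+\theta_s(\kappa_d))}\bigr)\,a(s)^{2}\,(F\tilde H)(X_s)\,ds,
\]
whose first half is treated by (\ref{SecondItoformula}) and whose second half is shown $\approx 0$ using Lemma~\ref{cancel}.

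Your step~(d), carried out as you describe it for case~(2) (subtracting the $\kappa_c$ and $\kappa_d$ computations), yields exactly $-\frac{i}{\kappa_0}\bigl[\tfrac12\Delta J(2;\cdot)+\tfrac12\Delta J(0;\cdot)-\Delta J(1;\cdot)\bigr]$ with $\Delta J(0;\cdot)=0$; no slowly-varying residue appears there. The expression $e^{i(\theta_s(\kappa_c)+\theta_s(\kappa_d))}[\cos 2\theta_s(\kappa_c)+\cos 2\theta_s(\kappa_d)-2]$ you invoke belongs to a \emph{different} organization, namely writing $\Delta J(1;j;H)=2i\int e^{i(\theta_s(\kappa_c)+\theta_s(\kappa_d))}a(s)^jH(X_s)\,ds$ and differentiating the sum $\theta_s(\kappa_c)+\theta_s(\kappa_d)$ directly (so that no $1/\sin$ is ever differentiated). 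That route is also legitimate, but then the coefficient in front of $\Delta J(2;\cdot)$ comes out as $\tfrac14$ rather than $\tfrac12$, the discrepancy being exactly the $e^{2i(\theta_s(\kappa_c)+\theta_s(\kappa_d))}$ half of $\Delta J_2$, which is $\approx 0$ but only after a further integration-by-parts and Lemma~\ref{cancel}. Mixing the two pictures is what led you to misattribute the origin of the $\cos(\Delta\theta_s)$ term.
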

\begin{proof}
We use (\ref{Itoformula}) with 
$k = 2k \kappa_0$.
Setting 
$\widetilde{H} := R_{2k \kappa_0} H$
for simplicity, we have 
\beq
&&
J(k ; j ; H) (\kappa_x)
\\
&=&
\left[
\frac {1}{\sin (\Delta \theta_s)}
e^{2ik \theta_s (\kappa_x)}
a(s)^j
\widetilde{H}(X_s)
\right]_0^{nt}
\\
&&+
\int_0^{nt}
\frac 
{\cos (\Delta \theta_s)}
{\sin^2 (\Delta \theta_s)}
\Biggl\{
\frac {c-d}{n} 
+
\frac {1}{2 \kappa_c} Re 
\left(
e^{2i \theta_s(\kappa_c)} - 1 
\right)
a(s) F(X_s)
\\
&& \qquad
-
\frac {1}{2 \kappa_d} Re 
\left(
e^{2i \theta_s(\kappa_d)} - 1 
\right)
a(s) F(X_s)
\Biggr\}
e^{2ik \theta_s(\kappa_x)}
a(s)^j 
\widetilde{H}(X_s) ds
\\
&& - 
\int_0^{nt}
\frac {1}{\sin (\Delta \theta_s)}
\left\{
2ik \cdot \frac xn
+
\frac {2ik}{2 \kappa_x} Re 
\left(
e^{2i \theta_s(\kappa_x)}-1
\right)
a(s) F(X_s)
\right\}
\\
&& \qquad\times
e^{2ik \theta_s(\kappa_x)}
a(s)^j 
\widetilde{H} (X_s) ds
\\
&& - 
\int_0^{nt}
\frac {1}{\sin (\Delta \theta_s)}
e^{2ik \theta_s(\kappa_x)}
(a(s)^j)'
\widetilde{H} (X_s) ds
\\
&& - 
\int_0^{nt}
\frac {1}{\sin (\Delta \theta_s)}
e^{2ik \theta_s(\kappa_x)}
a(s)^j
\nabla \widetilde{H} (X_s) dX_s
\\
&=:&
J_1 + \cdots + J_5.
\eeq
We estimate 
$\Delta J_1, \cdots, \Delta J_5$ 
separately. 
It will turn out that 
$\Delta J_1$, $\Delta J_4$ 
are negligible, 
$\Delta J_2$ 
is equal to the 1st term of RHS in 
(\ref{computeJ}) 
modulo error, 
$\Delta J_3$ 
is equal to the 2nd term of RHS in 
(\ref{computeJ}) 
or 
is equal to RHS in 
(\ref{computeJ2}). \\
\noindent
(1)
$J_1$ : 
By an elementary equality 
\begin{equation}
e^{2i \theta_1} - e^{2i \theta_2}
=
2i \sin (\theta_1 - \theta_2) 
e^{i  \theta_1 + i \theta_2}
\label{sin}
\end{equation}
we have 
\beq
\Delta J_1
&=&
\left[
\frac {1}{\sin (\Delta \theta_s)}
\cdot
2i \sin (k \Delta \theta_s)
e^{ik (\theta_s (\kappa_c) + \theta_s(\kappa_d))}
a(s)^j
\widetilde{H} (X_s)
\right]_0^{nt}.
\eeq
Therefore 
$\Delta J_1 = O(n^{-j\alpha})+C \approx 0$. \\
(2)
$J_2$ : we separate the discussion into the following two cases.\\
(i)
$j \ge 2$ : 
as in the proof of Lemma \ref{recursion}, 
we may ignore the term with 
$(c-d)/n$ 
factor and replace 
$1 / 2 \kappa_c$, $1 / 2 \kappa_c$ 
by 
$1 / 2 \kappa_0$ : 
\beq
&&
J_2
\\
&\approx&
\int_0^{nt}
\frac 
{\cos (\Delta \theta_s)}
{\sin^2 (\Delta \theta_s)}
\frac {1}{2 \kappa_0} Re 
\left(
e^{2i \theta_s(\kappa_c)} - e^{2i \theta_s(\kappa_d)}
\right)
e^{2ik \theta_s(\kappa_x)}
a(s)^{j+1}
(F\cdot \widetilde{H})(X_s) ds. 
\eeq
And we compute 
$\Delta J_2$ 
using 
(\ref{sin}) : 
\beq
&&
\Delta J_2
\\
&\approx&
\int_0^{nt}
\frac 
{\cos (\Delta \theta_s)}
{\sin^2 (\Delta \theta_s)}
\frac {1}{2 \kappa_0} Re 
\left(
e^{2i \theta_s(\kappa_c)} - e^{2i \theta_s(\kappa_d)}
\right)
\left(
e^{2ik \theta_s(\kappa_c)}
-
e^{2ik \theta_s(\kappa_x)}
\right)
\\
&&
a(s)^{j+1}
(F\cdot \widetilde{H})(X_s) ds
\\
&=&
\int_0^{nt}
\frac 
{\cos (\Delta \theta_s)}
{\sin (\Delta \theta_s)}
\cdot
\sin (k \Delta \theta_s)
\frac {1}{2 \kappa_0} Re 
\left[
2i 
e^{i (\theta_s (\kappa_c) + \theta_s(\kappa_d))}
\right]
\left(
2i 
e^{ik (\theta_s (\kappa_c) + \theta_s(\kappa_d))}
\right)
\\
&& \qquad \times
a(s)^{j+1}
(F\cdot \widetilde{H})(X_s) ds
\eeq
which is negligible if 
$j \ge 2$ : 
$\Delta J_2 \approx 0$.
\\
(ii)
$j=1$, $k=1$ : 
we further decompose as follows. 
\beq
\Delta J_2
&\approx&
\frac {1}{\kappa_0}
\int_0^{nt}
\cos (\Delta \theta_s)
\left(
1-
e^{2i (\theta_s (\kappa_c) + \theta_s(\kappa_d))}
\right)
a(s)^{j+1}
(F\cdot \widetilde{H})(X_s) ds
\\
&=:& 
\Delta J_{2-1} + \Delta J_{2-2}.
\eeq
For 
$\Delta J_{2-1}$, 
we use (\ref{SecondItoformula}) : 
\beq
&&
\Delta J_{2-1}
\\
&=&
\frac {1}{\kappa_0}
\int_0^{nt}
\cos (\Delta \theta_s)
a(s)^{j+1}
\left\{
\langle F\cdot \widetilde{H} \rangle 
-
d\left(
R_0 (F\cdot \widetilde{H})
\right)
-
\nabla R_0 (F\cdot \widetilde{H})
d X_s
\right\}
\\
&=:&
\Delta J_{2-1-1} + \cdots + \Delta J_{2-1-3}.
\eeq
$\Delta J_{2-1-1}$ 
already has the desired form. 
For 
$\Delta J_{2-1-2}$, 
integration by parts yields 
\beq
&&
\Delta J_{2-1-2}
\\
&=&
\frac {1}{\kappa_0}
\Biggl\{
\left[
\cos (\Delta \theta_s)
a(s)^{j+1}
R_0 (F\cdot \widetilde{H})(X_s)
\right]_0^{nt}
\\
&& + 
\int_0^{nt}
\sin (\Delta \theta_s)
\Biggl\{
\frac {c-d}{n}
+
\frac  {1}{2 \kappa_c} Re 
\left(
e^{2i \theta_s(\kappa_c)}-1
\right)
a(s) F(X_s)
\\
&& \qquad
-
\frac {1}{2 \kappa_d} Re 
\left(
e^{2i \theta_s(\kappa_d)}-1
\right)
a(s) F(X_s)
\Biggr\}
a(s)^{j+1}
R_0 (F\cdot \widetilde{H})(X_s) ds
\\
&& - 
\int_0^{nt}
\cos (\Delta \theta_s)
( a(s)^{j+1} )'
R_0 (F\cdot \widetilde{H})(X_s)ds.
\eeq
As in the proof of Lemma \ref{recursion}, 
1st and 3rd terms are negligible ; in the 2nd term, the term with 
$(c-d)/n$ 
factor is also negligible and 
$1 / 2 \kappa_c$, $1 / 2 \kappa_d$ 
may be replaced by 
$1 / 2 \kappa_0$ 
up to negligible error : 
\beq
&&
\Delta J_{2-1-2} 
\\
& \approx &
\frac {1}{\kappa_0}
\int_0^{nt}
\sin (\Delta \theta_s)
\frac {1}{2 \kappa_0}Re 
\left(
e^{2i \theta_s(\kappa_c)}
-
e^{2i \theta_s(\kappa_d)}
\right)
a(s)^{j+2}
F \cdot R_0 (F\cdot \widetilde{H})(X_s)ds
\\
&=&
\frac {1}{\kappa_0}
\cdot
\frac {1}{2 \kappa_0}
\\
&& 
\times
Re 
\Biggl\{
K(2, 0 \,;\, j+2 \,;\, F \cdot R_0 (F\cdot \widetilde{H}))
-
K(0, 2 \,;\, j+2 \,;\, F \cdot R_0 (F\cdot \widetilde{H}))
\Biggr\}
\\
& \approx & 0.
\eeq
In the last step, we used Lemma \ref{cancel}.
For 
$\Delta J_{2-1-3}$, 
\beq
\langle \Delta J_{2-1-3}, \Delta J_{2-1-3} \rangle
= O \left(
\int_0^{nt} a(s)^{2j+2} 
\right)
=
o \left(
\int_0^{nt} a(s)^{2j} 
\right)
\eeq
so that 
$\Delta J_{2-1-3} \approx 0$.
Therefore, we have 
\beq
\Delta J_{2-1}
\approx 
\Delta J_{2-1-1}
=
\frac {1}{\kappa_0}
\langle F\cdot \widetilde{H} \rangle
\int_0^{nt}
\cos(\Delta \theta_s)
a(s)^{j+1} 
ds.
\eeq
For 
$\Delta J_{2-2}$, 
we use 
(\ref{Itoformula}) 
with 
$\kappa = 4 \kappa_0$, 
perform the integration by parts, estimate as before, and use Lemma \ref{cancel} : 
\beq
&&
\Delta J_{2-2}
\\
&=&
-
\frac {1}{\kappa_0}
\Biggl\{
\left[
\cos (\Delta \theta_s)
e^{2i (\theta_s(\kappa_c)+\theta_s(\kappa_d))}
a(s)^{j+1}
R_{4 \kappa_0}(F \cdot \widetilde{H})(X_s)
\right]_0^{nt}
\\
&&\qquad +
\int_0^{nt}
\sin (\Delta \theta_s)
\Biggl\{
\frac {c-d}{n}
+
\frac {1}{2 \kappa_c} Re 
\left(
e^{2i \theta_s(\kappa_c)}-1
\right)
a(s) F(X_s)
\\
&& \qquad\qquad
-
\frac {1}{2 \kappa_d} Re 
\left(
e^{2i \theta_s(\kappa_d)}-1
\right)
a(s) F(X_s)
\Biggr\}
e^{2i (\theta_s(\kappa_c)+\theta_s(\kappa_d))}
\\
&&\qquad\qquad \times
a(s)^{j+1}
R_{4 \kappa_0}(F \cdot \widetilde{H})(X_s) ds
\\
&&\qquad -
\int_0^{nt}
\cos (\Delta \theta_s)
\Biggl\{
2i \cdot
\frac {c+d}{n}
+
\frac {1}{2 \kappa_c} Re 
\left(
e^{2i \theta_s(\kappa_c)}-1
\right)
a(s) F(X_s)
\\
&& \qquad\qquad
+
\frac {1}{2 \kappa_d} Re 
\left(
e^{2i \theta_s(\kappa_d)}-1
\right)
a(s) F(X_s)
\Biggr\}
e^{2i (\theta_s(\kappa_c)+\theta_s(\kappa_d))}
\\
&&\qquad\qquad \times
a(s)^{j+1}
R_{4 \kappa_0}(F \cdot \widetilde{H})(X_s) ds
\\
&&\qquad - 
\int_0^{nt}
\cos (\Delta \theta_s)
e^{2i (\theta_s(\kappa_c)+\theta_s(\kappa_d))}
( a(s)^{j+1} )'
R_{4 \kappa_0}(F \cdot \widetilde{H})(X_s) ds
\\
&&\qquad - 
\int_0^{nt}
\cos (\Delta \theta_s)
e^{2i (\theta_s(\kappa_c)+\theta_s(\kappa_d))}
a(s)^{j+1}
\nabla R_{4 \kappa_0}(F \cdot \widetilde{H})(X_s) dX_s
\Biggr\}
\\
& \approx &
- \frac {1}{\kappa_0}
\int_0^{nt}
\sin (\Delta \theta_s)
\frac {1}{2 \kappa_0} Re 
\left(
e^{2i \theta_s(\kappa_c)}
-
e^{2i \theta_s(\kappa_d)}
\right)
e^{2i (\theta_s(\kappa_c)+\theta_s(\kappa_d))}
\\
&&\qquad\qquad \times
a(s)^{j+2}
F \cdot R_{4 \kappa_0}(F \cdot \widetilde{H})(X_s) ds
\\
&=&
-\frac {1}{\kappa_0} \cdot \frac {1}{2 \kappa_0}
\cdot \frac 12
\\
&& 
\times
\Biggl\{
K(4, 2 ; j+2 ; F \cdot R_{4 \kappa_0}(F \cdot \widetilde{H}))
+ 
K(0, 2 ; j+2 ; F \cdot R_{4 \kappa_0}(F \cdot \widetilde{H}))
\\
&& \quad -
K(2,4 ; j+2 ; F \cdot R_{4 \kappa_0}(F \cdot \widetilde{H}))
-
K(2,0 ; j+2 ; F \cdot R_{4 \kappa_0}(F \cdot \widetilde{H}))
\Biggr\}
\\
&\approx& 0.
\eeq
To summarize : 
\beq
\Delta J_{2}
\approx
\frac {1}{\kappa_0}
\langle F\cdot \widetilde{H} \rangle
\int_0^{nt}
\cos(\Delta \theta_s)
a(s)^{j+1} 
ds.
\eeq
\noindent
(3)
$J_3$ : 
after cutting out negligible terms we have 
\beq
&&
J_3
\\
&\approx&
- 
\int_0^{nt}
\frac {1}{\sin (\Delta \theta_s)}
\frac {2ik}{2 \kappa_0} Re 
\left(
e^{2i \theta_s(\kappa_x)}-1
\right)
e^{2ik \theta_s(\kappa_x)}
a(s)^{j+1} 
(F \cdot \widetilde{H})(X_s) ds
\\
&=&
- 
\frac {2ik}{2 \kappa_0}
\int_0^{nt}
\frac {1}{\sin (\Delta \theta_s)}
\left(
\frac 
{
e^{2i(k+1) \theta_s(\kappa_x)}
+
e^{2i(k-1) \theta_s(\kappa_x)}
}
{2}
-
e^{2ik \theta_s(\kappa_x)}
\right)
\\
&& \qquad\times 
a(s)^{j+1} 
(F \cdot \widetilde{H})(X_s) ds
\\
&=&
- 
\frac {2ik}{2 \kappa_0}
\Biggl\{
\frac 12
J(k+1 ; j+1 ; F \widetilde{H})(\kappa_x)
+
\frac 12
J(k-1 ; j+1 ; F \widetilde{H})(\kappa_x)
\\
&& \qquad
-
J(k ; j+1 ; F \widetilde{H})(\kappa_x)
\Biggr\}. 
\eeq
(4)
$J_4$ : this is clearly negligible : 
\beq
\Delta J_4
&=& -
\int_0^{nt}
\frac {1}{\sin (\Delta \theta_s)}
\cdot
2i \sin (k \Delta \theta_s)
e^{ik (\theta_s(\kappa_c) + \theta_s(\kappa_d))}
(a(s)^j)'
\widetilde{H}(X_s) ds
\approx 0.
\eeq
(5)
$J_5$ : using (\ref{sin}) we have 
\beq
\Delta J_5
&=&
- \int_0^{nt}
\frac {1}{\sin (\Delta \theta_s)}
2i \sin (k \Delta \theta_s)
e^{ik (\theta_s(\kappa_c) + \theta_s(\kappa_d))}
a(s)^j
\nabla \widetilde{H}(X_s) d X_s
\eeq
We consider the following two cases.
\\
(i)
$k = 1$ : setting 
\beq
\varphi := [ \widetilde{H}, \widetilde{H} ],
\quad
\psi := [
\widetilde{H}, 
\overline{\widetilde{H}}
], 
\eeq
we have 
\beq
\langle \Delta J_5, \Delta J_5 \rangle
&=&
(-4)
\int_0^{nt}
e^{2i (\theta_s(\kappa_c) + \theta_s(\kappa_d))}
a(s)^{2j}
\varphi (X_s) ds
\\
\langle \Delta J_5, \overline{\Delta J_5} \rangle
&=&
4 \int_0^{nt}
a(s)^{2j}
\psi(X_s) ds.
\eeq
to which we apply 
(\ref{Itoformula}), (\ref{SecondItoformula}) 
respectively.
By the same argument 
as in the estimate of 
$\Delta J_2, \Delta J_3$ 
we have 
\beq
\langle \Delta J_5, \Delta J_5 \rangle
&=&
o \left(
\int_0^{nt} a(s)^{2j} ds
\right)
\\
\langle \Delta J_5, \overline{\Delta J_5} \rangle
&=&
4 \langle \psi \rangle
\int_0^{nt} a(s)^{2j} ds
(1 + o(1)), 
\quad
n \to \infty.
\eeq
(ii)
$k \ge 2$ : 
by a direct computation, it is easy to see 
\beq
\langle 
\Delta J_5, \Delta J_5 
\rangle, 
\;
\langle 
\Delta J_5, \overline{\Delta J_5}
\rangle
&=&
O \left(
\int_0^{nt}
a(s)^{2j} ds
\right)
\eeq
so that
$\Delta J_5 \approx 0$
for
$j \ge 2$. 
\QED
\end{proof}
%

\section{Appendix II}
In Appendix II, 
we provide the proofs of Proposition \ref{exponential} and statements in Section 5 for the sake of completeness, all of which are done by tracing those in \cite{AD}.\\
\noindent
{\it Proof of Proposition \ref{exponential}}\\
We discuss 
the computation of  
$t_n^{(+)}(r)$ 
only, for 
$t_n^{(-)}(r)$ 
can be treated similarly. 
We write eq.(\ref{S}) as in the following manner : 
\beq
dS_+
&=&
-
W_+(S_+) dt
+
C_n d W_t
\\
\mbox{where }
\;
-W_+(r)
&:=&
\lambda 
(1 + \epsilon)
\cosh_+ r 
+
\frac {C_n^2}{2}
\tanh_{+, \epsilon} r.
\eeq
Then 
\beq
- V_+(r)
&:=&
\lambda 
(1 + \epsilon)
\left\{
\sinh(r \pm \delta) 
\mp \sinh \delta 
\right\}
1( \pm r > 0)
\\
&& + \frac {C_n^2}{2}
(1 \pm \epsilon)
\log 
\frac {\cosh (r+\delta)}{\cosh \delta}
1(\pm r > - \delta)
\eeq
satisfies 
$V'_+(r) = W_+ (r)$
for 
$r \ne 0, - \delta$.
We 
first derive the critical points 
$r = a_n$, $b_n$ 
such that 
$W_+(r) = 0$ : 
\beq
a_{n}
&=&
\delta + 
\log \frac {\tilde{\lambda}}{C_n^2}
+
O(C_n^{-2})
\\
b_{n}
&=&
- 
\frac {2\tilde{\lambda}}{C_n^2}
\cosh(2 \delta) (1 + O(C_n^{-2})) - \delta
\eeq
where 
$\tilde{\lambda}
:=
(1+\epsilon)\lambda/(1-\epsilon)$.
Moreover we have 
\beq
&&
V_+(a_{n}+y)
\\
&=&-
\lambda
(1 + \epsilon)
\left\{
\frac {\tilde{\lambda}}{2C_n^2}
e^{y +\delta \pm \delta + O(C^{-2})} 
-
\frac {C_n^2}{2\tilde{\lambda}}
e^{-y+\delta \pm \delta  + O(C_n^{-2})}
\mp
\sinh \delta
\right\}
\\
&& \qquad \times 
1(\pm (a_n+x) > 0)
\\
&& - 
\frac {C_n^2}{2}
(1 \pm \epsilon)
\left\{
\log 
\left(
\frac {\tilde{\lambda}}{2C_n^2}
e^{y + 2 \delta  + O(C_n^{-2})}
+
\frac {C_n^2}{2\tilde{\lambda}}
e^{- y -2 \delta +  O(C_n^{-2}))}
\right)
-
\log \cosh \delta
\right\}
\\
&& \qquad \times
1(\pm (a_n+x) > - \delta)
\eeq
\beq
&&
V_+(b_{n}+x)
\\
&=&
- \lambda(1 + \epsilon)
\left\{
\sinh 
\left(
x - \delta \pm \delta 
- \frac {2\tilde{\lambda}}{C_n^2}
\cosh(2 \delta)(1+O(C_n^{-2}))
\right)
\mp
\sinh \delta
\right\}
\\
&& \qquad\times 
1(\pm(b_n+x) > 0)
\\
&& - \frac {C_n^2}{2} (1 \pm \epsilon)
\left\{
\log 
\cosh \left(
x - 
\frac {2\tilde{\lambda}}{C_n^2}
\cosh (2 \delta)
(1 + O(C_n^{-2}))
\right)
-
\log \cosh \delta
\right\}
\\
&& \qquad\times 
1(\pm(b_n+x) > - \delta).
\eeq
Since 
$t_n^{(+)}(r)$ 
satisfies 
\[
\frac {C_n^2}{2} f'' - W_+(r) f' = -1, 
\quad
f(\infty)=0, 
\]
we have
\beq
t_{n}^{(+)}(r)
&=&
\frac {2}{C_n^2} 
\int_r^{\infty} dx \int_{- \infty}^x dy
\;
\exp\left\{
\frac {2}{C_n^2}
\left(
V_+(x) - V_+(y)
\right)
\right\}.
\eeq
Substituting 
above equations, we have 
\beq
&&
t_{n}^+(r)
\\
&=&
\frac {2}{C_n^2}
\int_{r-b}^{\infty} dx 
\\
&& 
\exp
\Biggl[
- \frac {2}{C_n^2}
\lambda(1 + \epsilon)
\left\{
\sinh 
\left(
x - \delta \pm \delta 
- 
\frac {2 \tilde{\lambda}}{C_n^2}
\cosh (2 \delta) (1 + O(C_n^{-2}))
\right)
\mp
\sinh \delta 
\right\}
\\
&& \qquad \times
1(\pm(b_n+x) > 0)
\Biggr]
\\
&& \cdot
\left\{
\frac {\cosh \delta}
{
\cosh 
\left(
x - \frac {2 \tilde{\lambda}}{C_n^2} \cosh (2 \delta) 
(1 + O(C_n^{-2}))
\right)
}
\right\}^{1 \pm \epsilon}
1(\pm(b_n+x) > - \delta)
\\
&& \times 
\int_{- \infty}^{(b-a)+x} dy
\\
&&
\exp
\Biggl[
\left\{
\frac {\lambda \cdot \tilde{\lambda} (1 + \epsilon)}{C_n^4}
e^{y + \delta \pm \delta + O(C_n^{-2})} 
-
\frac {\lambda(1 + \epsilon)}{\tilde{\lambda}}
e^{-y + \delta \pm \delta + O(C_n^{-2})}
\mp
\frac {2}{C_n^2} \cdot \lambda (1 + \epsilon) \cdot \sinh \delta 
\right\}
\\
&& \qquad \times
1(\pm(a_n+x) > 0)
\Biggr]
\\
&& \cdot
\left(
\frac {\tilde{\lambda}}{2 C_n^2}
e^{y + 2 \delta + O(C_n^{-2})}
+
\frac {C_n^2}{2 \tilde{\lambda}}
e^{-y - 2 \delta + O(C_n^{-2})}
\right)^{1 \pm \epsilon}
\cdot
\frac {1}{(\cosh \delta)^{1 \pm \epsilon}}
\cdot
1( \pm (a_n+x) > - \delta)
\eeq
Noting that 
$\epsilon \to 0$, 
$\tilde{\lambda} \to \lambda$, 
$a_{n} \to - \infty$, 
$b_{n} \to 0$
as 
$n \to \infty$, 
we have 
\beq
t_n^{(+)}(r)
\stackrel{n \to \infty}{\to}
\frac {1}{\lambda}
\int_r^{\infty} 
\frac {dx}{\cosh x}
\int_{- \infty}^{\infty} dy 
\;
e^{- y -  e^{-y} }
=
\frac {1}{\lambda}
\int_r^{\infty} 
\frac {dx}{\cosh x}.
\eeq
Thus
\beq
\lim_{r \downarrow - \infty}
\lim_{n \to \infty}
t_n^{(+)}(r)
=
\frac {\pi}{\lambda}.
\eeq
The statement 
for the Laplace transform is derived by the same way as in the proof of Proposition 2.2 \cite{AD}. 
\QED\\
%

%
\noindent
{\it Proof of Lemma \ref{explosion1}\\
LHS 
of the inequality in question is bounded from below by 
\beq
LHS
& \ge &
{\bf P}_{
\epsilon 
\log n^{\frac {1}{\gamma}}
}
\left(
T_{c \log n^{\frac {1}{\gamma}}}
<
\frac {4c}{C_n^2}
\log n^{\frac {1}{\gamma}}
\wedge
T_{
\frac {\epsilon}{2}
\log n^{\frac {1}{\gamma}}
}
\right)
\\
&& \qquad \times
{\bf P}_{
c \log n^{\frac {1}{\gamma}},
\; 
\frac {4c}{C_n^2}
\log n^{\frac {1}{\gamma}}
}
\left(
T_{+\infty} < 
\frac {c}{C_n^2}
\log n^{\frac {1}{\gamma}}
\right)
\\
&=:& (1) \times (2).
\eeq
which we estimate separately. \\
(1)
If 
$
\frac {\epsilon}{2}
\log n^{\frac {1}{\gamma}}
< r < 
c 
\log n^{\frac {1}{\gamma}}
$, 
the drift term of the SDE for 
$R_-$
satisfies 
$
\mbox{(drift term)}
\ge
\frac 12 C_n^2
\tanh r 
\ge 
\frac 14 C_n^2 
$
so that the first factor 
(1) 
is bounded from below by the probability of the following event. 
\beq
{\cal E}
:=
\left\{
\inf_{
0 < t < 4 \frac {c}{C_n^2}
\log n^{\frac {1}{\gamma}}
}
C_n B_t
\ge
-\frac {\epsilon}{2}
\log n^{\frac {1}{\gamma}}
\right\}
\eeq
where 
$B_t$ 
is a Brownian motion with 
$B_0 = 0$.
By the reflection principle, we have 
\beq
{\bf P} \left( {\cal E} \right)
& = &
{\bf P}
\left(
C_n 
\left|
B 
\left(
\frac {4c}{C_n^2}
\log n^{\frac {1}{\gamma}}
\right)
\right|
\le
\frac {\epsilon}{2} 
\log n^{\frac {1}{\gamma}}
\right)
\ge
1- 
\left(
n^{-\frac {1}{\gamma}}
\right)^{c'}.
\eeq
(2)
Let 
\beq
\tilde{\cal E}
&:=&
\left\{
\sup_{
0 \le t \le 
\frac {c}{C_n^2} 
\log n^{\frac {1}{\gamma}}
}
C_n |B(t)|
<
\frac {\epsilon}{2}
\log n^{\frac {1}{\gamma}}
\right\}.
\eeq
Then 
$
{\bf P}
\left(
\tilde{\cal E}
\right)
\ge
1 - 
\left(
n^{-\frac {1}{\gamma}}
\right)^{c''}
$
for some 
$c''>0$, 
and under the event 
$\tilde{{\cal E}}$, 
$G(t) := R(t) - C_n B(t)$
satisfies 
\beq
G'(t)
&\ge&
\frac {\lambda}{2}
e^{G(t)}
\cdot
e^{
-\frac {\epsilon}{2}
\log n^{\frac {1}{\gamma}}
}
\cdot
\gamma
\cdot
\left(
\frac {4c}{C_n^2}
\log n^{\frac {1}{\gamma}}
\right)^{\gamma-1}
+
\frac {C_n^2}{2} 
\tanh (G(t) + C_n B)
\\
&\ge &
C
\cdot
\left(
n^{-\frac {1}{\gamma}}
\right)^{
\gamma-1 + \frac {\epsilon}{2}
}
e^{G(t)}
-
\frac {C_n^2}{2}.
\eeq
Therefore 
the explosion time of 
$G$ 
satisfies 
$T_{+\infty}
\sim
\left(
n^{-\frac {1}{\gamma}}
\right)^{
c - 
(\gamma - 1 + \frac {\epsilon}{2})
}$.
\QED\\
%

%
\noindent
{\it Proof of Lemma \ref{explosion2}}\\
LHS 
of the inequality in question is bounded from below by 
\beq
LHS
& \ge &
{\bf P}_{
- \frac 14 
\log n^{\frac {1}{\gamma}}
}
\left(
T_{ 
\frac 14 \log n^{\frac {1}{\gamma}}
}
<
\frac {1}{C_n^2}
\log n^{\frac {1}{\gamma}}
\right)
\\
&& \qquad
\times
{\bf P}_{
\frac 14 
\log n^{\frac {1}{\gamma}}, 
\;
\frac {1}{C^2}
\log n^{\frac {1}{\gamma}}
}
\left(
T_{+\infty}
<
5 \frac {c}{C_n^2}
\log n^{\frac {1}{\gamma}}
\right)
\\
&=:&
(1) \times (2).
\eeq
The second factor 
(2) 
has been estimated in Lemma \ref{explosion1}.
For the first factor (1), 
since 
$
R^{(n)}(t) \ge - \frac {C_n^2}{2} t + C_n B_t
$
we have 
\beq
(1)
& \ge &
{\bf P}
\left(
- \frac {C_n^2}{2}
\cdot
\frac {1}{C_n^2}
\log n^{\frac {1}{\gamma}}
+
C_n
B_{
\frac {1}{C_n^2} 
\log n^{\frac {1}{\gamma}}
}
\ge 
\frac 12
\log n^{\frac {1}{\gamma}}
\right)
\ge
C
\left(
n^{-\frac {1}{\gamma}}
\right)^{1/2}.
\eeq
\QED\\
%

%
\noindent
{\it Proof of Lemma \ref{Xi}}\\
Conditioning at time 
$u$ 
and using the Markov property, we have 
\beq
&&
{\bf P}_{- \infty}
\left(
R^{(n)}(u) \ge - \frac 14 
\log n^{\frac {1}{\gamma}}
\right)
\\
& \le &
{\bf P}_{- \infty}
\left(
R^{(n)}(u) \ge - \frac 14 
\log n^{\frac {1}{\gamma}}, 
\;
T_{+ \infty} 
<
\frac {5c+1}{C_n^2}
\log n^{\frac {1}{\gamma}}
\right)
\\
&& \qquad+
{\bf P}_{- \infty}
\left(
R^{(n)}(u) \ge - \frac 14 
\log n^{\frac {1}{\gamma}}, 
\;
T_{+ \infty} 
\ge
\frac {5c+1}{C_n^2}
\log n^{\frac {1}{\gamma}}
\right)
\\
& \le &
{\bf P}_{- \infty}
\left(
R^{(n)}(u) \ge - \frac 14
\log n^{\frac {1}{\gamma}}, 
\;
T_{+\infty} < 
\frac {5c+1}{C^2} 
\log n^{\frac {1}{\gamma}}
\right)
\\
&&\qquad+
\left(
1 - 
\left(
\frac {1}{n^{\frac {1}{\gamma}}}
\right)^{1/2}
\right)
{\bf P}_{- \infty}
\left(
R^{(n)}(u) \ge 
- \frac 14
\log n^{\frac {1}{\gamma}}
\right).
\eeq
Hence
\beq
&&
{\bf P}_{- \infty}
\left(
R^{(n)}(u)
\ge - \frac 14 
\log n^{\frac {1}{\gamma}}
\right)
\\
& \le &
\left( 
n^{\frac {1}{\gamma}} 
\right)^{1/2}
{\bf P}_{- \infty}
\left(
R^{(n)}(u)
\ge 
- \frac 14 
\log n^{\frac {1}{\gamma}}, 
\;
T_{+ \infty}
<
\frac {5c+1}{C_n^2}
\log n^{\frac {1}{\gamma}}
\right)
\\
& \le &
(n^{\frac {1}{\gamma}})^{1/2}
{\bf P}_{- \infty}
\left(
[
u, 
u+ \frac {5c+1}{C_n^2} 
\log n^{\frac {1}{\gamma}}
]
\mbox{ contains at least one explosion }
\right).
\eeq
Let 
$
k :=
\sharp 
\left\{
\mbox{ explosions of $R^{(n)}$ in $[0,t]$ }
\right\}
$
with 
$\{ \zeta_j \}_{j=1}^k$ 
being the explosion points, we have 
\beq
\int_0^t
1 
\left(
\exists i \; : \; 
\zeta_i 
\in 
\left[
u, u + 
\frac {5c+1}{C_n^2} 
\log n^{\frac {1}{\gamma}}
\right]
\right) du
\le
\frac {5c+1}{C_n^2}
\cdot
\log n^{\frac {1}{\gamma}}
\cdot
(k+1).
\eeq
It thus 
suffices to take the expectation of both sides and use the following inequality : 
${\bf E}[\sharp 
\left\{
\mbox{explosions of $R^{(n)}$ in $[0,t]$}
\right\}
]
\le
{\bf E}[ \Theta_{nt}(\lambda) / \pi ] 
\le C \lambda t / \pi$.
\QED\\
%

\noindent
From now on,
for the sake of simplicity, we use the following notation. 
\[
\Theta^{(n)}_{\lambda}(u)
:=
\Theta_{nu^{\gamma}}(\lambda), 
\quad
\Theta^{(n)}_{\lambda, \lambda'}(u)
:=
\Theta^{(n)}_{\lambda'}(u) - \Theta^{(n)}_{\lambda}(u).
\]
%
%
\noindent
{\it Proof of Proposition \ref{marginal}}\\
It suffices to show, 
\beq
(1)&&
\quad
{\bf E}[ \mu_{\lambda}^{(n)} (I) ] 
\to
\frac {\lambda}{\pi}
\int_I \gamma t^{\gamma-1} 1[0,1]dt
\\
(2)&& \quad
{\bf P}\left(
\mu_{\lambda}^{(n)}(I) = 0
\right)
\to
\exp \left(
- \frac {\lambda}{\pi}
\int_I
\gamma t^{\gamma-1} 1[0,1] dt
\right)
\eeq
for the finite union 
$I \subset [0,1]$ 
of disjoint intervals.
Let 
\beq
{\cal C}_k
&:=&
\left\{
\left\{ 
\Theta^{(n)}_{\lambda}
\left(\frac {T_k}{N}\right) \right\}
\le 
2 \arctan 
\left( n^{-\frac {1}{\gamma}} \right)^{1/4}
\right\}, 
\quad
{\cal C}
:=
\bigcap_{k=1}^{2N+1}
{\cal C}_k.
\eeq
Then by Lemma \ref{indicator}, 
\beq
{\bf P}\left( {\cal C}^c \right)
&\le&
\sum_{k=0}^{2N+1}
{\bf P}\left(
\left\{
\Theta^{(n)}_{\lambda}
\left(\frac {T_k}{N}\right)
\right\}
>
2 \arctan \left(
n^{-\frac {1}{\gamma}}
\right)^{1/4}
\right)
\\
& \le &
2 {\bf E}
\left[
\int_0^{3N+3}
1\left(
\Theta^{(n)}_{\lambda}
\left(\frac {u}{N}\right) > 
2 \arctan \left(
n^{-\frac {1}{\gamma}}
\right)^{1/4}
\right)
du
\right]
\\
& \le &
\left(
n^{-\frac {1}{\gamma}}
\right)^{1/2}
N
\log n^{\frac {1}{\gamma}}
\to 0, 
\quad
n \to \infty.
\eeq
(1)
We may take 
$I = [0,t]$. 
Upper bound simply follows from 
\beq
{\bf E}[ 
\mu_{\lambda}^{(n)}[0,t]
]
&=&
{\bf E}
\left[
\frac {\Theta^{(n)}_{\lambda}(t)}{\pi}
\right]
\le 
\frac {1}{\pi}
\lambda
\int_0^t 
\gamma s^{\gamma-1} ds
=
\frac{\lambda t^{\gamma}}{\pi}.
\eeq
For the lower bound, we consider 
\beq
N_k^{\pm}
&:=&
\sharp \left\{
\mbox{ jumps of
$\Theta^{(n)}_{\lambda, \pm}$ 
in $I_k$ }
\right\}
\\
N_k
&:=&
\sharp \left\{
\mbox{ jumps of
$\Theta^{(n)}_{\lambda}$ 
in $I_k$ }
\right\}.
\eeq
Then
\beq
{\bf E}\left[
\mu_{\lambda}^{(n)}[0,t]
\right]
& \ge &
\sum_{k=0}^{2Nt+1}
{\bf E}\left[
N_k^-
1 \left( \frac {T_k}{N} < t \right)
\middle| {\cal C}_k
\right]
-
\sum_k
{\bf E}\left[
N_k^-
1 \left( \frac {T_k}{N} < t \right)
\middle| {\cal C}_k
\right]
{\bf P}({\cal C}_k^c), 
\eeq
the 2nd term of which vanishes as 
$n \to \infty$ : 
\beq
\mbox{2nd term}
&\le&
{\bf E}
\left[
\sharp 
\left\{
\mbox{ 
jumps of 
$\Theta^{(n)}_{\lambda}$ 
in 
$[0, 3t]$ 
}
\right\}
\right]
\times
\sup_k
\frac {
{\bf P}({\cal C}_k^c)
}
{
{\bf P}({\cal C}_k)
}
\\
& \le &
{\bf E}
\left[
\sharp 
\left\{
\mbox{ 
jumps of 
$\Theta^{(n)}_{\lambda}$ 
in 
$[0, 3t]$ 
}
\right\}
\right]
\times
\frac {
{\bf P}({\cal C}^c)
}
{
1-{\bf P}({\cal C}^c)
}
\to 0.
\eeq
For the 1st term, 
we note that 
${\bf E}[ N_k^- | {\cal C}_k]
=
\pi^{-1}
\lambda
\gamma
\left(
\frac {T_k}{N}
\right)^{\gamma-1}
\cdot
\frac {\tau_{k+1}}{N}$ 
by 
Proposition \ref{exponential}.
Hence 
by the convergence of the Riemannian sum to the integral, 
\beq
\sum_k
{\bf E}[ N_k^- | {\cal C}_k]
&=&
\frac {\lambda}{\pi}
\sum_k
\gamma
\left(
\frac {T_k}{N}
\right)^{\gamma-1}
\cdot
\frac {\tau_{k+1}}{N}
1 \left(
\frac {T_k}{N} < t 
\right)
\to
\frac {\lambda}{\pi}
\int_0^t 
\gamma s^{\gamma-1} ds
\eeq
as 
$N \to \infty$. \\
(2)
We first suppose 
$I = [t_1, t_2]$.
Since 
\beq
&&
{\bf P}
\left(
\mu_{\lambda}^{(n)}[t_1, t_2] = 0
\right)
\\
& \le &
{\bf E}
\left[
\prod_{k \ge 0}
{\bf P}
\left[
N_k^- = 0 
\middle|
{\cal C}_k, (\tau_i)_i
\right]
1 \left(
\frac {T_{k+1}}{N} \ge t_1, 
\;
\frac {T_k}{N} \le t_2
\right)
\right]
+
{\bf P}({\cal C}^c)
\eeq
and since 
\beq
{\bf P}
\left[
N_k^- = 0
\middle|
{\cal C}_k
\right]
\to
{\bf E}
\left[
\exp 
\left(
- \frac {\tau_{k+1}}{N}
\cdot
\frac {\lambda}{\pi}
\cdot
\gamma
\left(
\frac {T_{k+1}}{N}
\right)^{\gamma-1}
\right)
\right]
\eeq
we have 
\beq
&&
\limsup
{\bf P}
\left(
\mu_{\lambda}^{(n)}[t_1, t_2] = 0
\right)
\\
& \le &
{\bf E}
\left[
\prod_{k \ge 0}
\exp 
\left(
- \frac{\lambda}{\pi}
\gamma
\left(
\frac {T_{k+1}}{N}
\right)^{\gamma-1}
\cdot
\frac {\tau_{k+1}}{N}
\right)
1 \left(
\frac {T_{k+1}}{N} \ge t_1, 
\; 
\frac {T_k}{N} \le t_2
\right)
\right].
\eeq
Taking 
$N \to \infty$ 
proves 
(2) 
for
$I = [t_1, t_2]$. 
General case 
easily follows from the Markov property. 
\QED\\

\noindent
{\it Proof of Lemma \ref{order}}\\
We decompose 
${\bf P}({\cal E}_u)$ 
as follows. 
\begin{eqnarray}
&&
{\bf P}({\cal E}_u)
\nonumber
\\
& \le &
{\bf P}
\left(
{\cal E}_u 
\cap
\{
\zeta_u \in [u_0, u]
\}
\right)
+
{\bf P}
\left(
{\cal E}_u \cap 
\{ \zeta_u < u_0 \}
\right)
\nonumber
\\
& \le &
{\bf P}
\left(
{\cal E}_u 
\cap
\{
\zeta_u \in [u_0, u]
\}
\right)
+
{\bf P}
\left(
\bigcap_{s \in [u_0, u]}
{\cal E}_s
\right)
\nonumber
\\
& \le &
{\bf P}
\left(
\{
\zeta_u \in [u_0, u]
\}
\right)
+
{\bf P}
\left(
\bigcap_{s \in [u_0, u]}
{\cal E}_s
\cap
\Biggl\{
\{ 
\Theta^{(n)}_{\lambda}(u_0)
\}
\le
2 \arctan
n^{- \frac {1}{4\gamma}}
\Biggr\}
\right)
\nonumber
\\
&& \qquad +
{\bf P}
\left(
\Biggl\{
\{ \Theta^{(n)}_{\lambda}(u_0) \}
\ge
2 \arctan
n^{- \frac {1}{4\gamma}}
\Biggr\}
\right)
\nonumber
\\
& \le &
{\bf P}
\left(
\{
\zeta_u \in [u_0, u]
\}
\right)
+
n^{- \frac {c'}{\gamma}}
+
{\bf P}
\left(
\Biggl\{
\{ \Theta^{(n)}_{\lambda}(u_0) \}
\ge
2 \arctan
n^{- \frac {1}{4\gamma}}
\Biggr\}
\right)
\label{3terms}
\end{eqnarray}
where we used the monotonicity of 
$\lfloor 
\Theta^{(n)}_{\lambda, \lambda'} 
\rfloor$ 
in the 2nd inequality.
In the last inequality, 
we used the fact that, when  
$\{ \Theta^{(n)}_{\lambda}(u_0) \}_{\pi}
\le
2 \arctan
n^{- \frac {1}{4\gamma}}$, 
we necessarily have 
$\{ \Theta^{(n)}_{\lambda, \lambda'}(u_0) \}_{\pi}
\ge 
\pi - 
2 \arctan n^{- \frac {1}{4\gamma}}$.
Hence by Lemma \ref{explosion3} 
we have 
\beq
{\bf P}
\left(
\bigcap_{s \in [u_0, u]}
{\cal E}_s
\cap
\Biggl\{
\{ \Theta^{(n)}_{\lambda}(u_0) \}
\le
2 \arctan
n^{- \frac {1}{4\gamma}}
\Biggr\}
\right)
\le
n^{- \frac {c'}{\gamma}}, 
\eeq
proving the last inequality in 
(\ref{3terms}).
Now 
we integrate both sides of 
(\ref{3terms}) 
and use 
Lemma \ref{indicator}
for the 1st and 3rd terms of RHS. 
\QED
\\
%

For the proof of 
Lemmas \ref{monotonicity2}, \ref{independent}, 
let
$(\xi_i^{\lambda})$, 
$(\xi_i^{\lambda'})$, 
$(\xi_i^{\lambda', \lambda''})$
be the atoms of 
$P_{\lambda}$, 
$P_{\lambda'}$, 
$P_{\lambda', \lambda''}$
respectively.
Also, let 
$( \zeta^{\lambda}_i )$, 
$( \zeta^{\lambda'}_i )$, 
$( \zeta^{\lambda', \lambda''}_i )$
be the atoms of 
$\mu_{\lambda}^{(n)}$, 
$\mu_{\lambda'}^{(n)}$, 
$\mu_{\lambda', \lambda''}^{(n)}$
respectively.\\
%
%

%
\noindent
{\it Proof of Lemma \ref{monotonicity2}\\
For  
$N \in {\bf N}$, 
let 
\beq
p_N^n
&:=&
{\bf P}
\left(
\exists i \; : \; 
\zeta_i^{\lambda} < t, 
\;
\forall j \ge i, 
\;
| 
\zeta_i^{\lambda} - \zeta_j^{\lambda'}
|
>
\frac {1}{2N}
\right).
\eeq
It is then 
sufficient to show 
$\limsup_{n \to \infty} p_N^n = 0$.
Let 
$(T_k)_k$
be the random division of intervals used in the proof of Proposition \ref{marginal}.
Then we have 
\beq
p_N^n
& \le &
{\bf P}
\left(
\exists k \le [2Nt] +1
\, : \, 
\left\lfloor
\frac {\Theta^{(n)}_{\lambda}}{\pi}
\right\rfloor
\mbox{ jumps on }
\left[
\frac {T_k}{N}, \frac {T_{k}+2}{N}
\right]
\mbox{ but not }
\left\lfloor
\frac {\Theta^{(n)}_{\lambda'}}{\pi}
\right\rfloor
\right)
\\
& \le &
\sum_{k=1}^{[2Nt]+1}
{\bf P}
\left(
\left\{ 
\Theta^{(n)}_{\lambda'}
\left(
\frac {T_k}{N}
\right) 
\right\}_{\pi}
\le
\left\{ 
\Theta^{(n)}_{\lambda}
\left(
\frac {T_k}{N}
\right) 
\right\}_{\pi}
\right)
\eeq
where we used 
the monotonicity of 
$\lfloor\Theta^{(n)}_{\lambda, \lambda'}/\pi \rfloor$.
It thus suffices to use 
Lemma \ref{order}.
\QED\\

%
\noindent
{\it Proof of Lemma \ref{independent}}\\
As 
in the proof of Lemma \ref{monotonicity2}, 
it is sufficient to show 
\beq
p_N^n
&=&
{\bf P}
\left(
\exists i, j \in {\bf N}
\; : \;
\zeta_i^{\lambda} < t, 
\;
\zeta_j^{\lambda, \lambda'} < t, 
\;
| \zeta_i^{\lambda} - \zeta_j^{\lambda, \lambda'} |
<
\frac {1}{2N}
\right)
\eeq
satisfies 
$\limsup_{N \to \infty}
\limsup_{n \to \infty}
p_N^n = 0$.
\begin{eqnarray}
&&
p_N^n 
\nonumber
\\
& \le &
{\bf P}
\left(
\exists i, j \in {\bf N}, 
\;
\exists k \le [2Nt]+ 1, 
\;
\frac {T_k}{N} 
\le
\zeta_i^{\lambda}, \zeta_j^{\lambda, \lambda'}
\le
\frac {T_{k}+2}{N}
\right)
\nonumber
\\
&=&
{\bf P}
\left(
\exists k \le [2Nt] + 1 
\; : \; 
\left\lfloor
\frac {\Theta^{(n)}_{\lambda}}{\pi}
\right\rfloor, 
\left\lfloor
\frac {\Theta^{(n)}_{\lambda'} - \Theta^{(n)}_{\lambda}}{\pi}
\right\rfloor
\mbox{ both jump on }
\left[
\frac {T_k}{N}, \frac {T_{k}+2}{N}
\right]
\right)
\nonumber
\\
& \le &
\sum_{k=1}^{[2Nt]+1}
{\bf P}
\left(
\left\{ 
\Theta^{(n)}_{\lambda'}
\left(
\frac {T_k}{N}
\right) 
\right\}_{\pi}
\le
\left\{ 
\Theta^{(n)}_{\lambda}
\left(
\frac {T_k}{N}
\right) 
\right\}_{\pi}
\right)
\nonumber
\\
&&\qquad+
\sum
{\bf P}
\left(
\left\lfloor
\frac {\Theta^{(n)}_{\lambda'}}{\pi}
\right\rfloor
\mbox{ jumps more than 2-times on }
\left[
\frac {T_k}{N}, \frac {T_{k}+2}{N}
\right]
\right)
\label{2times}
\end{eqnarray}
where we used the monotonicity of 
$\lfloor\Theta^{(n)}_{\lambda, \lambda'}/\pi \rfloor$
in the last inequality. 
The 1st term 
in RHS of 
(\ref{2times}) 
has been estimated in the proof of Lemma \ref{monotonicity2}.
For the 2nd term, 
we use 
Proposition \ref{marginal}. 
\beq
&&
\limsup_{n \to \infty}
\sum_{k=1}^{[2Nt]+1}
{\bf P}
\left(
\mu_{\lambda'}^n
\left[
\frac {T_k}{N}, \frac {T_{k}+2}{N}
\right]
\ge 2
\right)
\\
& \le &
C
\sum_{k=1}^{[2Nt]+1}
{\bf E}
\Biggl[
\exp
\left[
- \frac {\lambda}{\pi}
\int_{\frac {T_k}{N}}^{\frac {T_{k}+2}{N}}
\gamma u^{\gamma-1} du
\right]
\cdot
\left(
- \frac {\lambda}{\pi}
\int_{\frac {T_k}{N}}^{\frac {T_{k}+2}{N}}
\gamma u^{\gamma-1} du
\right)^2
\Biggr]
\\
&=&
O(N^{-1}). 
\eeq
\QED
%

\vspace*{1em}
\noindent {\bf Acknowledgement }
One of the authors(F.N.) 
would like to thank Benedek Valk\'o 
for valuable discussions and letting him know the 
reference 
\cite{AD}. 
The authors 
would like to thank the referee for valuable suggestions to improve the readability of this paper.
The authors 
would also like to thank the Isaac Newton Institute for
Mathematical Sciences for its hospitality during the programme
``Periodic and Ergodic Spectral Problems"
supported by EPSRC Grant Number EP/K032208/1.
This work is partially supported by 
JSPS KAKENHI Grant 
Number 26400128(S.K.)
and
Number 26400145(F.N.).

%
\small


\begin{thebibliography}{99}

\bibitem{AD}
Allez, R., Dumaz, L., : 
From sine kernel to Poisson statistics, 
Elec. J. Prob. {\bf 19}(2014), 1-25.

\bibitem{AD2}
Allez, R., Dumaz, L., : 
Tracy-Widom at high temperature, 
J. Stat. Phys., 
{\bf 156}(2014), 1146-1183
%
\bibitem{ALS}
Avila, A.,  Last, Y., and Simon, B., :
Bulk Universality and Clock Spacing of zeros for Ergodic Jacobi Matrices with A.C. spectrum,
Anal. PDE {\bf 3}(2010), 81-108.

\bibitem{KiN}
Killip, R., Nakano, F., : 
Eigenfunction statistics in the localized Anderson model,
Annales Henri Poincar\'e. {\bf 8}, no.1 (2007), 27-36.
%
\bibitem{KS}
Killip, R., Stoiciu, M., : 
Eigenvalue statistics for CMV matrices : 
from Poisson to clock via random matrix ensembles, 
Duke Math. {\bf 146}(2009), 361-399.
%
\bibitem{KLS}
Kiselev, A., Last, Y., and Simon, B., : 
Modified Pr\"ufer and EFGP Transforms 
and the Spectral Analysis of One-Dimensional Schr\"odinger Operators, 
Commun. Math. Phys. {\bf 194}(1997), 1-45. 
%
\bibitem{KN}
Kotani, S., Nakano, F., : 
Level statistics for the one-dimensional Schr\"odinger operator with random decaying potentials, 
Interdisciplinary Mathematical Sciences, 
Vol. 17(2014), 343-373.
%
\bibitem{KU}
Kotani, S. Ushiroya, N. :
One-dimensional Schr\"odinger operators with random decaying 
potentials, 
Commun. Math. Phys. {\bf 115}(1988), 247-266. 
%
%
\bibitem{KVV}
Kritchevski, E., Valk\'o, B., Vir\'ag, B., : 
The scaling limit of the critical one-dimensional random 
Schr\"odinger operators, Commun. Math. Phys. {\bf 314}(2012), 775-806.
%
%
\bibitem{N1}
Nakano, F., :   
Distribution of localization centers in some discrete random systems, 
Rev. Math. Phys. {\bf 19}(2007), 941-965. 
%
\bibitem{N2}
Nakano, F., : 
Level statistics for one-dimensional Schr\"odinger operators and Gaussian beta ensemble, 
J. Stat. Phys. {\bf 156}(2014), 66-93.
%
\bibitem{N3}
Nakano, F., : 
Limit of Sine$_{\beta}$ and Sch$_{\tau}$ processes, 
RIMS Kokyuroku, {\bf 1970}(2015), 83 - 89.
%
\bibitem{N4}
Nakano, F., : 
Fluctuation of density of states for 1d Schr\"odinger operators, 
J. Stat. Phys. {\bf 166}(2017), 1393-1404.
%
\bibitem{VV}
Valk\'o, B. and Vir\'ag, V. : 
Continuum limits of random matrices and the Brownian carousel, 
Invent. Math. {\bf 177}(2009), 463-508.
%
\end{thebibliography}
\end{document}